\newcommand{\Rom}[1]{\uppercase\expandafter{\romannumeral#1}}
\newcommand{\ex}[1]{\left\langle #1 \right\rangle}
\newcommand{\pa}{\partial}
\newcommand{\bs}{\boldsymbol}
\newcommand{\mc}{\mathcal}
\newcommand{\ZZ}{\mathbb{Z}}
\DeclareMathOperator{\Tr}{Tr}
\renewcommand*\env@matrix[1][*\c@MaxMatrixCols c]{%
	\hskip -\arraycolsep
	\let\@ifnextchar\new@ifnextchar
	\array{#1}}
\newtheorem{theorem}{Theorem}
\newtheorem{lemma}{Lemma}
\newtheorem{corollary}{Corollary}
\newtheorem{claim}{Claim}
\newtheorem*{claim*}{Claim}
\newcommand{\dia}[3]{\raisebox{#2pt}{\includegraphics{dia_#1}}\hspace{#3pt}}
\definecolor{darkred}{rgb}{0.8,0.1,0.2}
\begin{document}
\title{Towards Non-Invertible Anomalies from Generalized Ising Models}
%\author{Authors}
\author{Shang Liu}
%\email{sliu.phys@gmail.com}
\affiliation{Kavli Institute for Theoretical Physics, University of California, Santa Barbara, California 93106, USA}

\author{Wenjie Ji}
\affiliation{Department of Physics, University of California, Santa Barbara, California 93106, USA}

\begin{abstract}
    We present a general approach to the bulk-boundary correspondence of noninvertible topological phases, including both topological and fracton orders. This is achieved by a novel bulk construction protocol where solvable $(d+1)$-dimensional bulk models with noninvertible topology are constructed from the so-called generalized Ising (GI) models in $d$ dimensions. The GI models can then terminate on the boundaries of the bulk models. 
    The construction generates abundant examples, including not only prototype ones such as $\mathbb{Z}_2$ toric code models in any dimensions no less than two, and the X-cube fracton model, but also more diverse ones such as the $\mathbb{Z}_2\times \mathbb{Z}_2$ topological order, the 4d $\mathbb{Z}_2$ topological order with pure-loop excitations, etc. The boundary of the solvable model is potentially anomalous and corresponds to precisely only sectors of the GI model that host certain total symmetry charges and/or satisfy certain boundary conditions. We derive a concrete condition for such bulk-boundary correspondence. The condition is violated only when the bulk model is either trivial or fracton ordered. A generalized notion of Kramers-Wannier duality plays an important role in the construction. Also, utilizing the duality, we find an example where a single anomalous theory can be realized on the boundaries of two distinct bulk fracton models, a phenomenon not expected in the case of topological orders. More generally, topological orders may also be generated starting with lattice models beyond the GI models, such as those with symmetry protected topological orders, through a variant bulk construction, which we provide in an appendix.

\end{abstract}
	
\maketitle
\tableofcontents
% to guide my attention for the moment
\section{Introduction}
Bulk-boundary correspondence has been a key concept for understanding topological phases of matter since the discovery of quantum Hall effect. Transport responses in quantum Hall bars are fundamentally contributed by the chiral gapless edge modes on the boundaries.\cite{LaughlinArgument,HalperinIQH,HatsugaiIQH,wen1990chiral,chang2003chiral,KaneFisherFQHBdry} More generally, the nontrivial boundary properties of various topological phases in $d+1$ spacial dimensions can be understood as a consequence of anomalies of the boundary theory in $d$ dimensions \cite{Callan1985AnomalyInFlow,kane1997quantized,Ryu2012AnomalyinTI&TSC,WenGaugeAnomaly,levin2013protected,vishwanath2013physics,ElseNayakAnomaly,chen2015anomalous} -- obstructions in realizing a theory in a \emph{local lattice model} in the $d$ spatial dimensions with a tensor product of local Hilbert spaces, a local Hamiltonian, and an onsite symmetry action if any. 
%\WJ{The SSB phase by this definition is realized by a local lattice model, so no anomaly. Please let me know if you think of any catch in this definition.} 
The edge theory of an integer quantum Hall insulator has an invertible gravitational anomaly characterized by the imbalanced left and right moving modes. The boundary low energy effective theories for onsite-symmetry protected topological phases have 't Hooft anomalies, which prevent an onsite symmetry realization on a lattice without a topological bulk, as well as the theory to be coupled to gauge fields. Each of the above anomalies is invertible, as the anomaly can be matched by an invertible phase in one dimension higher, the outcome model in one higher dimension with a boundary is a local lattice model. This connection between the bulk topological phases of matter and the boundary anomaly has significantly deepened our understanding of both sides. 

In recent years, the bulk-boundary correspondence for noninvertible topological phases, and the notion of noninvertible anomaly have attracted much interest \cite{wen1991gapless,KaneFisherFQHBdry,Kitaev1998SurfaceCode,KitaevHoneycomb,BaisSlingerland2009,BeigiShorWhalen2011,KapustinSaulina2011,KitaevKong2012,Levin2013,BarkeshliJianQi2013,Kong2014AnyonCond,KongWen2014,LanWangWen2015,KongWenZheng2015,HungWan2015,KongZheng2018,HuLuoPankovichWanWu2018,CongChengWang2017,WangLiHuWan2018,Bulmash2019,HughesFQHInterfaces2019,ShenHung2019,YouToricCodeEdge,WenjiePartitionFunction,Bridgeman2020,LanWenKongWen2020,ShiKim2021EntBootstrap,kong2020algebraic,ChatterjeeWen2022,Luo2022XCube,Fontana2022ChamonModelBdry}. Particularly, the boundary of two dimensional non-invertible topological phases, even when gappable, does not admit a local lattice model.  As the simplest example, the one-dimensional transverse-field Ising model, %\emph{when projected to the $\mathbb{Z}_2$ symmetric sector}, %is not realizable on a 1D tensor product Hilbert space with a local Hamiltonian, 
\emph{restricted to the $\mathbb{Z}_2$ symmetric sector}, is not realizable as a one-dimensional model with a tensor product Hilbert space and a local Hamiltonian. Nevertheless, it can be realized as a boundary theory of the two-dimensional $\mathbb{Z}_2$ toric code model \cite{KitaevToricCode,YouToricCodeEdge,WenjiePartitionFunction}, and is termed to have a noninvertible anomaly. 
A modular covariance condition satisfied by the Ising model with restricted Hilbert space follows from this bulk-boundary correspondence: Threading different anyon fluxes in the bulk changes the total symmetry charge and the boundary condition of the boundary Ising model. This leads to a vector of parition functions for the boundary model. Then, under modular transformations (certain large diffeomorphisms on the underlying spacetime manifold), this partition function vector transforms covariantly, according to the topological $S$ and $T$ matrices of the bulk topological order, which capture the statistics of anyons. Such correspondence between the $d$-dimensional model subject to global constraints and the $(d+1)$-dimensional topological order has been termed as the matching of \emph{non-invertible anomaly}.
The vector of partition functions in this example can also be implied from the categorical symmetry\footnote{Field theories of categorical symmetries are in development.\cite{Freed2022} In essence, $n+1$-dimensional topological field theory can act on $n$-dimensional quantum field theories.} of the Ising model\cite{WenjieCategorical,chen2020topological,freed2018topological}. The modular covariance condition also holds in various one-dimensional critical systems on the boundary of two dimensional systems with topological excitations and topological defects \cite{WenjiePartitionFunction,ji2021unified}. Related findings with more mathematical oriented discussions are in \cite{kong2022one,kong2020mathematical,kong2021mathematical}.  More examples of generalized symmetries, whose generators under multiplication form a fusion category, have been uncovered in models with either restricted \cite{ThorngrenWangCatI,WenjieCategorical,gaiotto2021orbifold,ThorngrenWangCatII} or non-restricted Hilbert spaces \cite{YinTDL} in recent years. 

Along one direction to generalize the above example, in this paper, we consider a wide class of qubit lattice models in arbitrary spatial dimensions, which can have sets of $\mathbb{Z}_2$ symmetries that may be global or within subsystems and are dubbed generalized Ising (GI) models. We provide a generic construction, which, when applied to each GI model, produces at least one exactly solvable lattice model in one dimension higher, dubbed a bulk model. The ground state subspace of each bulk model is stable against local perturbations.\footnote{That is, the model is locally topologically ordered \cite{haah2013commuting}.} The construction generates abundant topological or fracton ordered models: not only prototype ones such as the $\mathbb{Z}_2$ toric code models in two spatial dimensions or higher, and the X-cube fracton model \cite{VijayXCube}; but also more diverse types such as $\mathbb{Z}_2\times \mathbb{Z}_2$ topological order, four-dimensional $\ZZ_2$ topological order with pure-loop topological excitation, etc.  

A main result that follows from the construction, is a concrete demonstration that the lattice model with (discrete) global symmetries terminates on the boundary of the bulk model with topological or fracton order in generic dimensions. The boundary-bulk correspondence is explicit in UV. That is, there exists an isomorphism between the GI models subject to global constraints which are either symmetry charge projections or boundary conditions, and the boundary of the topological order and/or fracton order. The isomorphism is between the Hilbert spaces, as well as between the local operator algebras generated by Hamiltonian local terms. The latter means that any Hamiltonian local terms allowed on the boundary of the topological and/or fracton order must be a product of local terms in the GI model Hamiltonian. In this sense, the most general Hamiltonian allowed on the boundary of the topological order is the GI model.

%Its boundary potentially hosts certain charge and boundary condition sector(s) of the GI model, which can be regarded as examples of noninvertible anomalies. 
Such bulk boundary correspondence can be regarded as examples of non-invertible anomalies. \footnote{This is in a weaker sense, referring to that a model, due to global constraints, is not a local lattice model on its own (thus has a \emph{global gravitational anomaly}), yet is isomorphic to the boundary of a long range entangled phase. More completely, the model subject to distinct global constraints should be captured by a vector of partition functions. And each distinct sector of the Hilbert space of a lattice model can be the boundary of a $(d+1)$-dimensional model with topological orders, where the topological charge in the bulk determines the boundary sector.} It happens in the constructed bulk models under a specific condition (Claim \ref{clm:anomaly}): colloquially speaking, either there is a non-local symmetry that can be dualized to a generalized boundary condition, or a generalized boundary condition that can be dualized to a non-local symmetry. When the condition is violated, the constructed bulk model is either trivial or has a fracton order. The condition, which highlights the equal roles of non-local symmetry charges and boundary conditions and the necessity of duality shows up explicitly through the generic bulk construction. 

Up to date, commuting projector Hamiltonians realizing topological ordered phases in three or more dimensions are far from exhaustive. There are a few constructions that generalize naturally in any spacial dimensions $d>2$. Examples include the higher dimensional (generalized) toric codes\cite{freedman2002z2,dennis2002topological}, Dijkgraaf-Witten models\cite{dijkgraaf1990topological}, Walker-Wang models (with a non-modular category as an input)\cite{Walker:2011mda}, and generalized double semion models\cite{freedman2016double}. 

Our construction adds to this list, and yet, in some sense, is simpler. The construction generates a stabilizer Hamiltonian in $d+1$ spatial dimensions, from a $d$-dimensional model on a qubit lattice with a set of $\ZZ_2$ symmetries. 
The construction does not start with the categorical data of the underlying TQFTs, but is based on observations on the commutation relations of Hamiltonian local terms in the $d$-dimensional model. Ground state degeneracy (GSD) that signals a topological and/or fracton order can also be computed with the stabilizer formalism, say using the standard polynomial representation\cite{haah2013commuting,haah2016algebraic}.

The simplicity of such stabilizer codes in generic dimensions is inviting for an explicit analysis of the bulk-boundary correspondence, which is summarized above. This result of boundary-bulk correspondence is in complementary to many existing boundary analysis of commuting projector models of TQFTs: The boundary of the ground state of a discrete gauge theory has a global symmetry and is constrained to the charge-neutral sector \cite{WenjieCategorical,freed2018topological}; for discrete gauge theories in $2+1$ spacetime dimensions for a few Abelian groups, the local operators on the boundary have been matched with topological operators in the bulk, and share the same set of $F$-symbols and $R$-symbols \cite{Chatterjee:2022kxb,moradi2022topological}; the boundary of a Levin-Wen model has generalized symmetries generated by topological operators restricted to the boundary, which are found either through a lattice analysis\cite{kawagoe2020microscopic}, or at an abstract level \cite{Freed:2018cec,freed2014relative,Freed2022}; the gapped surface of (confined) Walker-Wang models can be topologically ordered and protected by symmetries that are anomalous\cite{burnell2014exactly,chen2015anomalous}. 

As far as long range orders in the bulk is concerned, one distinction of our construction is that it generates fracton ordered models as well. This is particularly interesting given that the bulk-boundary correspondence for fracton orders is yet barely explored \cite{Bulmash2019,Luo2022XCube}\footnote{See also Ref.\,\onlinecite{Fontana2022ChamonModelBdry} that appeared soon after the first version of our arXiv preprint. }. One intriguing result we obtain is that, with some appropriate boundary condition, a single anomalous theory can live on the boundaries of two distinct bulk fracton models, a phenomenon not expected in the case of conventional topological orders. 

As a heads-up, let us give an outline of the construction. We define a large class of GI models whose Hamiltonian local terms (HLTs) are either products of Pauli-$Z$ operators or products of Pauli-$X$ operators. The HLTs and symmetries of the GI model satisfy a couple of conditions. Being so, a dual model can always be obtained through a generalized Kramers-Wannier duality. A bulk model -- a model of one dimension higher, can be constructed on alternating layers of the GI model lattice and the dual lattice. The HLTs of the bulk model are within the stabilizer formalism. Each term is a product of local terms of the GI model and its dual. By virtue of the properties of the GI models, we show the bulk model has several nice properties: \begin{itemize}
    \item Any ground state is robust against local perturbations. 
    \item When the GI model has non-local $X$-type symmetries, or when the dual model has non-local $Z$-type symmetries, the bulk model is either topologically ordered or fracton-ordered.
    \item When the bulk model has a pure topological order, its boundary has non-invertible anomaly. The symmetric sector of the GI model that satisfies certain (generalized) boundary conditions is a boundary termination for the bulk model.
    \item When the bulk model has fracton orders, it can have an anomaly-free boundary, such that when discrete global symmetries appear on the boundary, the boundary Hilbert space includes all charged sectors, rather than only the symmetric sector. 
\end{itemize}   

The rest of this paper is organized as follows. In Section \ref{sec:GIM}, we define GI models and give a few examples. In Section \ref{sec:Duality}, we introduce the generalized Kramers-Wannier duality which plays an important role in our construction of bulk models. In Section \ref{sec:BulkTheory}, we construct the bulk model, and together describe a few prototype examples. Then we prove that it has a stable spectral gap and a robust GSD on a topologically nontrivial space manifold. Hence it has a topological or fracton order. The bulk-boundary correspondence is analyzed in Section \ref{sec:Boundary}. In Section \ref{sec:Application}, we study a collection of examples of topological and fracton orders generated from the construction. Particularly, an interesting example demonstrates that two distinct fracton models can host the same anomalous boundary theory. 
In the end, we summarize and discuss future questions. In Appendix \ref{app:variant_bulk}, we also give a variant bulk construction that generates topological and/or fracton orders from qubit lattice models beyond the GI model and is applicable to some models with symmetry protected topological orders. Further technical details are also summarized in appendixes. 

\section{Generalized Ising Models}\label{sec:GIM}
A GI model is referred to a model on a lattice of qubits in arbitrary spatial dimensions, whose Hamiltonian and $\ZZ_2$ symmetries has the following properties. The Hamiltonian consists of two types of terms: GI terms and generalized transverse field terms. A generalized Ising (transverse field) term is a product of Pauli-$Z$ (Pauli-$X$) operators acting on a local subset of qubits, and is denoted by $\mc O^Z_\alpha$ ($\mc O^X_i$) with some index $\alpha$ ($i$) referring the subset. Generically, $\alpha$ and $i$ are from different index sets. Written explicitly, the Hamiltonian is then
\begin{align}
H= -\sum_\alpha J_\alpha\mc O^Z_\alpha - \sum_i h_i\mc O^X_i, 
\end{align}
where $J_\alpha$ and $h_i$ are real coefficients. We suppose the model lives on a $d$-dimensional parallelogram with either periodic or open boundary condition along any direction. We will impose some additional assumptions on the model later in this section. 
%Later, we will impose two concrete assumptions on the model such that it contains an ordered and a disordered phases analogous to the standard Ising model. 

The model may have many $\mathbb{Z}_2$ symmetries. For our purpose, we only consider one group of $\mathbb{Z}_2$ symmetries: all $Z$-type symmetries generated by products of Pauli-$Z$ operators (minus sign factor excluded), and all those $X$-type symmetries generated by Pauli-$X$ operators (minus sign factor excluded) that \emph{commute with all $Z$-type symmetries}. We refer this selected group of symmetries the \emph{compatible} symmetries in the GI model. Compatibility is to emphasize that the generator of each $X$-type $\mathbb{Z}_2$ symmetry commutes with not only the Hamiltonian but also all the $Z$-type symmetry generators. In fact, this implies that each $X$-type symmetry generator is a product of several $\mc O^X_i$ operators, analogous to the standard transverse-field Ising model. For a proof, see Corollary \ref{thm:GIModelCompleteStabilizers} in Appendix \ref{app:KWDuality}. From now on, $X$-type $\mathbb{Z}_2$ symmetries in the GI model only refer to those compatible ones.

The many $\mathbb{Z}_2$ symmetries may either be local or nonlocal, and it is useful to distinguish them for our purpose. Let $\{G^Z_r\}$ with some index $r$ be a complete but not necessarily independent set of generators of the \emph{local} $Z$-type symmetries. We say there are $n_Z$ number of \emph{nonlocal} $Z$-type symmetries if one can find a maximal set of symmetry generators $\{U^Z_1,U^Z_2,\cdots,U^Z_{n_Z}\}$ satisfying that each $U^Z_k$ is not a product of the remaining ones and $G^Z_r$. Formally, this $\mathbb{Z}_2^{n_Z}$ group is nothing but the quotient of the full $Z$-type symmetry group over its local symmetry subgroup. In practice, the set $\{U^Z_k\}$ can be obtained by repeatedly adding new $U^Z_k$ that is independent from $G^Z_r$ and the existing $U^Z_1,\cdots,U^Z_{k-1}$ until the list is maximal. Similarly, for the $X$-type symmetries, we have the local generators $\{G^X_s\}$ with some index $s$ which belongs to a generically different index set from that for $r$, and the independent nonlocal generators $\{U^X_1,U^X_2,\cdots,U^X_{n_X}\}$ with some $n_X$.  

\subsection{Assumptions}
Before stating our assumptions, let us introduce some useful terminologies. Consider a set of \emph{commuting local} operators $\{M_i\}$ where each $M_i$ is a tensor product of $I,X,Y,Z$. We say that a local operator $\mc A$ is \textbf{locally generated} by $\{ M_i \}$ if $\mc A$ is %belongs to the operator algebra 
generated by a few $M_i$'s in a neighborhood of $\mc A$'s support, such that the linear size of this neighborhood exceeds that of $\mc A$ by an $O(1)$ constant. We say that $\{ M_i\}$ is a \textbf{complete set of local observables} (CSLO) if any local operator $\mc A$ that is a tensor product of $I,X,Y,Z$ and commutes with all $M_i$ can be locally generated by $\{ M_i \}$. In fact, one can show that if $\{ M_i \}$ forms a CSLO, then any local operator $\mc A$, not necessarily a product of Paulis, that commutes with all $M_i$ can be locally generated.

We assume the GI models to have the following properties. 
\begin{itemize}
    \item $\{ \mc O^X_i \}\cup\{ G^Z_r \}$ is a CSLO. 
    %\item Given any local operator $\mc A$ that is a product of Pauli operators and commutes with $\{ \mc O^X_i \}$ and $\{ G^Z_r \}$, $\mc A$ can be \emph{locally generated} by $\{ \mc O^X_i \}$ and $\{ G^Z_r \}$. More precisely, such an operator $\mc A$ can be generated by a few $\mc O^X$ and $G^Z$ operators in a neighborhood whose linear size exceeds that of $\mc A$ by an $O(1)$ constant.
    \item Any local $X$-type symmetry generator is \emph{locally generated} by $\{ G^X_s \}$. 
    %\item $n_X\geq 1$. That is, there is at least one global $X$-type symmetry in the GI model. \SL{I am thinking about removing this assumption here, and simply say $n_X+m_Z\geq 1$ in the next section about dual models. What do you think? }
\end{itemize}  
%\change{Let us first make the intuitive connection between the GI models and the Ising model based on these assumption. }{}\SL{I removed this sentence because one assumption about $n_X$ is deleted. Feel free to modify. }
The first assumption physically means that when $J_\alpha=0$, $h_i\neq 0$, after restricting to the gauge invariant sector $G^Z=G^X=1$, the system has a spectral gap stable to local perturbations, together with either a unique ground state or a robust GSD\cite{TQOStability}. This is analogous to the Ising disordered phase. 
%The third assumption physically means that when $J_\alpha\neq0$, $h_i=0$, the system spontaneously breaks some nonlocal symmetry, analogous to the Ising ordered phase. 

Independent of these assumptions, the bulk model to be constructed has a Hamiltonian whose local terms all commute. These two assumptions ensure that the ground states of the bulk model to be constructed are robust against local perturbations. %The third assumption is optional, when present, the bulk model to be constructed have degenerate ground states. The presence of global $X$-type symmetries are also a prior for properties of the bulk-boundary correspondence, which we elaborate in Section \ref{app:BulkBdryCorrespondence}. 
The above assumptions may seem technical, but in many cases, are not hard to verify, as we will see. Also note that the choices of $G^Z$ and $G^X$ operators are not unique. It suffices to make one choice that satisfies the assumptions. 

%Our general model has many ingredients, so it is helpful to have some examples in mind. 
\subsection{Examples}
Let us now introduce some examples. Periodic boundary condition will be taken for convenience. 
%Readers who prefer a simpler story may assume that $\mc O^X_i$ are just the traditional transverse field terms $X_i$ with $i$ labeling the qubits on the lattice. 
A particularly simple situation is when $\mc O^X_i$ are just the traditional transverse field terms $X_i$ with $i$ labeling the qubits on the lattice. In this case, there is no $Z$-type symmetry at all, and our first assumption is trivially satisfied. The simplest example of this class is of course the standard one-dimensional Ising model: $H=-J\sum_iZ_iZ_{i+1}-h\sum_i X_i$, which has an $X$-type $\mathbb{Z}_2$ symmetry generated by $\prod_i X_i$. The plaquette Ising models (see Refs.\,\onlinecite{MooreLee2004,XuMoore2004,VijayXCube,Johnston20173dPIsingReview} and the references therein), and the quantum Newman-Moore model \cite{NewmanMoore1999,GarrahanNewman2000,Vasiloiu2020,You2021Fractal,Nayan2022Fractal} are other examples of this class. 

The two-dimensional plaquette Ising model has the Hamiltonian
\begin{align}
	H= -J\dia{PIsing_GITerm}{-10}{0}-h\dia{PIsing_TFTerm}{-2}{0}, 
	\label{eq:OriginalPIsing}
\end{align}
defined for qubits on the vertices of a 2D square lattice. Here and throughout, we sometimes suppress the summation for simplicity when there is little confusion. The product of Pauli $X$ operators along each row and column generates a $\mathbb{Z}_2$ symmetry of the model, known as a subsystem symmetry. Point excitations of this model in the ordered phase ($J>h$) has restricted mobility. The quantum Newman-Moore model has subsystem $\mathbb{Z}_2$ symmetries acting on Sierpinski triangles, but is otherwise similar to the two-dimensional plaquette Ising model, so let us not write it down explicitly. 

Our next example contains nontrivial $\mc O^X$ terms in its Hamiltonian. Consider the following model whose qubits live on the links of a 2D square lattice,
\begin{align}
H= -J\dia{XCubeBdry_GITerm}{-1}{0}-h\dia{XCubeBdry_TFTerm}{-18}{0}. 
\label{eq:XCubeBdryTheory}
\end{align}
Here, the nearest neighboring two-body $Z$-type terms along the vertical links are not included as they are not independent: they are equivalent to the two-body $Z$-type terms along the horizontal directions up to a local $Z$-type symmetry of the Hamiltonian.

The local $Z$-type symmetries of this model are generated by the product of four $Z$ operators around each vertex. Take the local $Z$-type symmetries as gauge constraints, the model is a quantum $\mathbb{Z}_2$ gauge theory, and found to arise in the system of Josephson arrays of superconductor and ferromagnet when deposited on top of a quantum spin Hall insulator\cite{xu2010fractionalization,wu2021categorical}.

There are two nonlocal $Z$-type symmetries, and we may take $U^Z_1$ ($U^Z_2$) to be the product of all vertical-link (horizontal-link) $Z$ operators along some horizontal (vertical) line, see Fig.\,\ref{fig:XCubeBdrySymmetryOps}a. There are no local $X$-type symmetries. Each $X$-type non-local symmetry generator of the model is a product of all vertical-link $X$ operators along even number of vertical lines, and all horizontal-link $X$ operators along even number of horizontal lines. See Fig.\,\ref{fig:XCubeBdrySymmetryOps}b for an example. 

Later, we will construct a bulk theory for this model, which has the X-cube fracton order \cite{VijayXCube}. 
%\textbf{[WJ:I'm not finding this relation to toric code adds to the main stream.]}\SL{It was previously mentioned that the disordered phase of a GI model is either trivially gapped, or has a topological/fracton order, so it is worth mentioning the toric code topological order here. It also helps convince the readers that our technical assumptions on a GI model is indeed satisfied here. } 

\begin{figure}[h]
\centering
\includegraphics{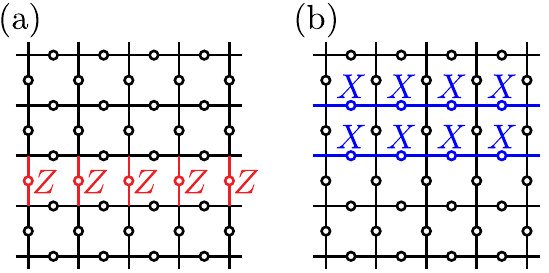}
\caption{Nonlocal symmetry generators of the model in Eq.\,\ref{eq:XCubeBdryTheory}. $U^Z_1$ is illustrated in (a). $U^Z_2$ is similar but extended along the vertical direction. (b) is an example of an $X$-type symmetry generator. }
\label{fig:XCubeBdrySymmetryOps}
\end{figure}

\section{Generalized Kramers-Wannier Duality}\label{sec:Duality}
In this section, we define generalized Kramers-Wannier dual theories for each GI model. Such dual theories play an important role in our construction of the bulk models.  
A dual theory lives on a generically different lattice which we dub the dual lattice. The operator map of the duality can be written as
\begin{align}
\mc O^Z_\alpha\mapsto \Delta^Z_\alpha,\quad\mc O^X_i\mapsto \Delta^X_i, 
\label{eq:KMOperatorMap}
\end{align}
where $\Delta^Z_\alpha$ ($\Delta^X_i$) is a local product of Pauli $Z$ (Pauli $X$) operators on the dual lattice, such that the commuting or anticommuting relations between the operators are preserved. Moreover, the above operator map should be local: if we place the original and the (generically different) dual lattices together, then each $\mc O^Z_\alpha$ ($\mc O^X_i$) operator should be closed to the corresponding $\Delta^Z_\alpha$ ($\Delta^X_i$) operator. The dual model Hamiltonian then reads
\begin{align}
    H'= -\sum_\alpha J_\alpha\Delta^Z_\alpha - \sum_i h_i\Delta^X_i. 
\end{align}

Such a duality exists for any GI model, because we can always let the dual lattice consist of qubits labeled by $\alpha$, and then let $\Delta^Z_\alpha=Z_\alpha$, $\Delta^X_i=\prod_{\alpha\in I_i}X_\alpha$ where $I_i$ is the set of $\mc O^Z_\alpha$ terms that anticommute with $\mc O^X_i$. This is dubbed the \emph{standard dual theory}. 
%There is also the trivial dual theory where we simply let the dual lattice be the same as the original one, and $\Delta^Z_\alpha=\mc O^Z_\alpha$, $\Delta^X_i=\mc O^X_i$. 
We may treat the dual theory as a GI model as well, but with the roles of $X$ and $Z$ exchanged, which means we first include all $X$-type $\mathbb{Z}_2$ symmetries, and then include all \emph{compatible} $Z$-type symmetries. Similarly as in the GI model, here, all $Z$-type symmetries are generated by products of Hamiltonian local terms $\Delta_i^Z$. We denote the local symmetry generators in the dual theory by $\{\Gamma_\rho^X\}$ and $\{\Gamma_\sigma^Z\}$. The independent nonlocal symmetry generators are denoted as $\{\Omega^X_1,\cdots,\Omega^X_{m_X}\}$ and $\{\Omega^Z_1,\cdots,\Omega^Z_{m_Z}\}$.  In Appendix \ref{app:KWDuality}, we prove that if we restrict to the symmetric sectors on both sides of the duality, then the operator map \eqref{eq:KMOperatorMap} follows from a Hilbert space isomorphism, i.e. an exact duality. 

Similar to the original theory, we make the following assumptions for the dual theory: 
\begin{itemize}
    \item $\{\Delta^Z_\alpha\}\cup\{ \Gamma^X_\rho \}$ is a CSLO. 
    %\item Given any \emph{local} operator $\mc A$ that is a product of Pauli operators and commutes with $\{\Delta^Z_\alpha\}$ and $\{ \Gamma^X_\rho \}$, $\mc A$ can be \emph{locally} generated by $\{\Delta^Z_\alpha\}$ and $\{ \Gamma^X_\rho \}$.
    \item Any local $Z$-type symmetry generator is locally generated by $\{ \Gamma^Z_\sigma \}$.
\end{itemize}

In addition, we assume that 
\begin{itemize}
    \item $n_X+m_Z\geq 1$. 
\end{itemize}
In other words, either there exist compatible nonlocal $X$-type symmetries in the original model, \emph{i.e.} $n_X\geq 1$, or there exist compatible nonlocal $Z$-type symmetries in the dual model, \emph{i.e.} $m_Z\geq 1$. 
%Recall that we give an optional assumption that there exist compatible global $X$-type symmetries in the GI model, \emph{i.e.}, $n_X\geq 1$. Alternatively, we may assume the presence of compatible global $Z$-type symmetries in the dual model, \emph{i.e.} $m_Z\geq 1$, while allowing $n_X$ to be zero. In other words, we assume $n_X+m_Z\geq 1$, 
This will help ensure our bulk model to have a topological and/or fracton order. Later in the Section \ref{sec:Boundary}, we discuss the further conditions on the GI model (and its dual) so that the GI model has non-invertible anomaly that can be matched with the bulk model to be constructed. 

For example, the standard dual theory of the standard one-dimensional Ising model is $H'= -J\sum_i Z_{i+1/2} -h\sum_i X_{i-1/2}X_{i+1/2}$, where we place the dual lattice qubits in between the original ones, reflected by the $1/2$ shifts in the indices. Another example is that the standard dual theory of \emph{both} the two-dimensional plaquette Ising model in Eq.\,\ref{eq:OriginalPIsing} and the model in Eq.\,\ref{eq:XCubeBdryTheory} is 
\begin{align}
H'= -J\dia{DualPIsing_TFTerm}{-2}{0}-h\dia{DualPIsing_GITerm}{-10}{0}, 
\label{eq:DualPIsingModel}
\end{align}
which is nothing but the two-dimensional plaquette Ising model with the substitutions $X\leftrightarrow Z$ and $J\leftrightarrow h$. This dual theory has no local symmetry. 

We emphasize that a single GI model may have multiple dual models. Just from the example above, another possible dual theory of the two-dimensional plaquette Ising model is Eq.\,\ref{eq:XCubeBdryTheory} with the substitutions $X\leftrightarrow Z$ and $J\leftrightarrow h$. As a consequence, multiple bulk models may be constructed from a single GI model, as we will see. 

\section{Bulk Theory}\label{sec:BulkTheory}
Given some GI model in $d$ spatial dimensions and a dual model of it, we will now construct a bulk theory in one higher dimensions such that certain charge and boundary condition sector(s) of the GI model can live on its boundary. We will explain later what this precisely means. 

\begin{table}
	\centering
	\begin{tabular}{|>\centering{c}|c|}
		\hline
		% GI Model. 
		\makecell{\textbf{GI Model}\\$d$-dim} &
        \begin{minipage}{0.35\textwidth}
        \vspace*{10pt}
		\begin{flushleft} 
            \begin{itemize}
            \item $H= -\sum_\alpha J_\alpha\mc O^Z_\alpha - \sum_i h_i\mc O^X_i$ 
            \item Local Symmetries: $\{G^Z_r\}$, $\{G^X_s\}$
            \item Nonlocal Symmetries: 
            \\$\{U^Z_k|1\leq k\leq n_Z\}$, $\{U^X_k|1\leq k\leq n_X\}$
            \end{itemize}
		\end{flushleft}
        \vspace*{3pt}
        \end{minipage}
        \\
		\hline
		
		% Dual Model. 
		\makecell{\textbf{Dual Model}\\$d$-dim} & 
        \begin{minipage}{0.35\textwidth}
        \vspace*{10pt}
		\begin{flushleft} 
			\begin{itemize}
            \item $H'= -\sum_\alpha J_\alpha\Delta^Z_\alpha - \sum_i h_i\Delta^X_i$ 
            \item Local Symmetries: $\{\Gamma^X_\rho\}$, $\{\Gamma^Z_\sigma\}$
            \item Nonlocal Symmetries: 
            \\$\{\Omega^X_k|1\leq k\leq m_X\}$, $\{\Omega^Z_k|1\leq k\leq m_Z\}$
            \end{itemize}
		\end{flushleft}
        \vspace*{3pt}
        \end{minipage}
        \\
		\hline

        % Bulk Model. 
		\makecell{\textbf{Bulk Model}\\$(d+1)$-dim} & 
        \begin{minipage}{0.35\textwidth}
        \vspace*{10pt}
		\begin{flushleft} 
			\begin{itemize}
            \item Odd Layers: Original Lattice ($\circ$)
            \item Even Layers: Dual Lattice ($\bullet$)
            \item $H_{\rm bulk}$: Eq.\,\ref{eq:Hbulk} or Fig.\,\ref{fig:BulkModelSchematic}. 
            \end{itemize}
		\end{flushleft}
        \vspace*{3pt}
        \end{minipage}
        \\
		\hline
	\end{tabular}
	\caption{Summary of notations}
	\label{Table:NotationSummary}
\end{table}

\subsection{Construction and prototype examples}
\begin{figure}
	\centering
	\includegraphics{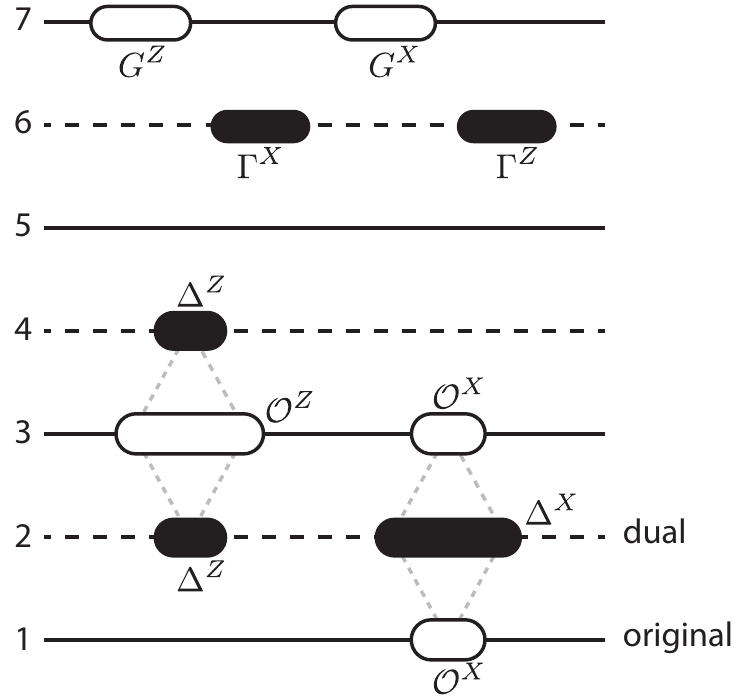}
	\caption{An illustration of the bulk model. Horizontal solid (dashed) lines represent the original (dual) lattice layers. The various operators in $H_{\rm bulk}$ are schematically plotted. }
	\label{fig:BulkModelSchematic}
\end{figure}
The lattice on which the bulk theory lives is an alternating stack of the original and dual $d$-dimensional lattices; see Fig.\,\ref{fig:BulkModelSchematic}. As an example, we also show the bulk lattice thus constructed from the standard one-dimensional Ising model and its standard dual model in Fig.\,\ref{fig:TFIMBulkLattice}. Here and throughout, we often use empty circles (solid dots) to represent qubits in layers of the original (dual) lattice. 
\begin{figure}
	\centering
	\includegraphics{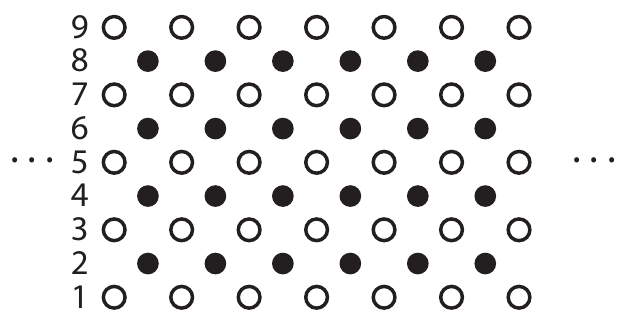}
	\caption{The 2D bulk lattice constructed from the standard one-dimensional Ising model and its standard dual model. }
	\label{fig:TFIMBulkLattice}
\end{figure}
We label the original and dual lattice layers by odd and even indices, then our bulk theory is defined by the following Hamiltonian; see Table \ref{Table:NotationSummary} for a recap of the many notations. 
\begin{align}
&H_{\rm bulk}=\nonumber\\ &-\sum_{\alpha,l}\Delta^Z_{\alpha,2l}\mc O^Z_{\alpha,2l+1}\Delta^Z_{\alpha,2l+2}-\sum_{i,l}\mc O^X_{i,2l-1}\Delta^X_{i,2l}\mc O^X_{i,2l+1}\nonumber\\ 
&-\sum_{r,l}G^Z_{r,2l+1}-\sum_{s,l}G^X_{s,2l+1}-\sum_{\rho,l} \Gamma^X_{\rho,2l}-\sum_{\sigma,l}\Gamma^Z_{\sigma,2l}, 
\label{eq:Hbulk}
\end{align}
where the second subscript of each operator is the layer index. These various operators are schematically plotted in Fig.\,\ref{fig:BulkModelSchematic}. The $\Delta^Z\mc O^Z\Delta^Z$ ($\mc O^X\Delta^X\mc O^X$) terms will be called the $Z$-suspension ($X$-suspension) terms; this name is from the special case where $\Delta^Z=Z$ ($\mc O^X=X$). The $G$ and $\Gamma$ terms will be called the gauge symmetry terms. By construction, all local operators in $H_{\rm bulk}$ commute with each other, thus the model is exactly solvable. In other words, $H_{\rm bulk}$ is a stabilizer Hamiltonian. 

The simplest example comes out starting from the standard one-dimensional Ising model and its standard Kramers-Wannier dual. We obtain the following bulk Hamiltonian,
\begin{align}
H_{\rm bulk}=-\dia{TFIMBulkZTerm}{-21}{0}-\dia{TFIMBulkXTerm}{-21}{0}, 
\label{eq:ToricCodeModel}
\end{align}
which lives on the lattice shown in Fig.\,\ref{fig:TFIMBulkLattice}. There are no gauge symmetry terms in \ref{eq:ToricCodeModel}. The Hamiltonian represents nothing but the $\mathbb{Z}_2$ toric code model \cite{KitaevToricCode}; it can be cast to the standard form by replacing empty circles and solid dots by vertical and horizontal links, respectively. Similarly, the three-dimensional toric code model can be generated from the standard two-dimensional Ising model and its standard dual, the $\mathbb{Z}_2$ lattice gauge model without matter. 

From the two-dimensional plaquette Ising model \eqref{eq:OriginalPIsing} and its standard dual \eqref{eq:DualPIsingModel}, we obtain
\begin{align}
	H_{\rm bulk}=-\dia{PIsingBulk_Zsus}{-21}{0}-\dia{PIsingBulk_Xsus}{-21}{0}\quad\left( \dia{3DFrame}{-12}{0} \right), 
	\label{eq:PIsingBulk}
\end{align}
where the 3D Cartesian frame is indicated in the bracket with $z$ being the out-of-layer direction. The model is also obtainable via other %anisotropic coupled-layer 
constructions \cite{Fuji2019,shirley2019foliated}. Point excitations in this bulk model are free to move along $z$ direction, but have restricted mobility along $x$ and $y$ directions like the original two-dimensional model. In other words, the model is fractonic along $x$ and $y$, but behaves like a topologically ordered system along $z$. 

Another example, from the two-dimensional model in Eq.\,\ref{eq:XCubeBdryTheory} and its standard dual theory in Eq.\,\ref{eq:DualPIsingModel}, we obtain 
\begin{align}
&H_{\rm bulk}=\nonumber\\
&-\dia{CubeTerm}{-38}{0}-\dia{XTerm_xzPlane}{-27}{0}-\dia{XTerm_xyPlane}{-14}{0}, 
\label{eq:XCubeModel} 
\end{align}
where we have associated qubits in the dual lattice layers with $z$-direction links, and the Cartesian frame is the same as that in Eq.\,\ref{eq:PIsingBulk}. This three-dimensional model is topologically equivalent \footnote{Comparing to the X-cube Hamiltonian, the model constructed here lacks the X-shape terms in one of the three orientations, but those absent terms can be generated by the existing X-shape terms, thus the two models are topologically equivalent. } to the X-cube fracton model \cite{VijayXCube}. Recall that the two-dimensional plaquette Ising model has an alternative dual theory: Eq.\,\ref{eq:XCubeBdryTheory} with the substitutions $X\leftrightarrow Z$ and $J\leftrightarrow h$. The bulk theory constructed from this pair of models is the same as Eq.\,\ref{eq:XCubeModel} but with $X$ and $Z$ exchanged. We have thus found that multiple bulk models may be constructed from a single GI model by choosing different dual models.

\subsection{Robust ground state degeneracy}
We now show that the general bulk theory $H_{\rm bulk}$ has a stable spectral gap, and any possible GSD of it is robust. Therefore any degenerate ground states, say on the lattice in $d\geq 2$ dimensions with periodic boundary condition, would imply topological and/or fracton orders. Then we compute the GSD.

Regarding $H_{\rm bulk}$ as the negative sum over a set of stabilizers, then in the ground subspace, all these stabilizers equal to $+1$, i.e. there is no frustration \footnote{This is possible because the group generated by these stabilizers does not contain $-1$. }. According to Ref.\,\onlinecite{TQOStability}, the following lemma implies that the model has a spectral gap stable to local perturbations, together with either a unique ground state or a robust GSD\footnote{To meet the conditions in Ref.\,\onlinecite{TQOStability}, we also make the following assumptions that are usually fulfilled, and in particular are satisfied when the bulk model has translation symmetries along all directions:  
(1) There is an $O(1)$ upper bound on the geometric sizes of all bulk stabilizers. This leads to certain precise locality requirements on the $d$-dimensional models. 
%(1) The GI model and its dual are defined on a $d$-dimensional parallelogram with either periodic or open boundary condition along any direction. (2) 
(2) There is a natural way of taking thermodynamic limit for the GI model and its dual such that the lattice structure, Hamiltonian local terms, and local symmetry generators ($G$ and $\Gamma$ operators) of both models within distance $R$ from any site do not depend on the total system size, as long as the latter is not too closed to $R$. This guarantees a similar property for the lattice structure and Hamiltonian local terms of the bulk model. }. In particular, the lemma shows that the stabilizer Hamiltonian constructed is a quantum code with macroscopic code distance. The logical operators, if any, are all non-local operators that commutes with the stabilizers.

%It is intuitively clear that the lemma implies that the ground states are locally indistinguishable, there is no local operators acting within the ground state subspace, whose eigenvalues distinguish the ground states. 

\begin{theorem}\label{thm:RobustDegeneracy}
    The stabilizers in $H_\text{bulk}$ form a CSLO. In other words, any local operator $\mc A$ that is a tensor product of $I,X,Y,Z$ and commutes with all the stabilizers in $H_\text{bulk}$ can be locally generated by those stabilizers. 
\end{theorem}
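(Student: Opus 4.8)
The plan is to take an arbitrary local operator $\mc A$ (a tensor product of Paulis) that commutes with all stabilizers in $H_{\rm bulk}$ and show it is locally generated. The natural strategy is a layer-by-layer decomposition exploiting the alternating structure of the bulk lattice. First I would write $\mc A = \prod_l \mc A_l$ where $\mc A_l$ is the restriction of $\mc A$ to layer $l$; each $\mc A_l$ is itself a tensor product of Paulis on a single copy of either the original or dual $d$-dimensional lattice, and is local (in the $d$-dimensional sense) because $\mc A$ is local in the $(d+1)$-dimensional sense. The goal becomes to understand, for each layer, which single-layer Pauli operators are compatible with all the stabilizers that touch that layer, and to show such operators can be built from the stabilizers near $\mc A$.

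The key step is to extract constraints on the $\mc A_l$ from commutation with the two families of stabilizers that couple layers: the $Z$-suspension terms $\Delta^Z_{\alpha,2l}\mc O^Z_{\alpha,2l+1}\Delta^Z_{\alpha,2l+2}$ and the $X$-suspension terms $\mc O^X_{i,2l-1}\Delta^X_{i,2l}\mc O^X_{i,2l+1}$, together with the single-layer gauge symmetry terms $G^Z,G^X,\Gamma^X,\Gamma^Z$. The $X$-part of $\mc A_l$ and the $Z$-part must be analyzed separately. Consider, say, an odd (original-lattice) layer $2l+1$. Commutation of $\mc A$ with the $X$-suspension term forces a condition linking the $Z$-content of $\mc A_{2l-1}$, $\mc A_{2l}$, $\mc A_{2l+1}$ — roughly that the mismatch in how $\mc A_{2l+1}$ anticommutes with $\mc O^X_{i,2l+1}$ is compensated by the neighboring layers. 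The idea is to peel off layers from, say, the top: because $\mc A$ has bounded support, there is a topmost non-trivial layer, and commutation with suspension terms there (which only reach down, not up, past the top) forces the top layer piece to commute with all $\mc O^X_i$ (resp. $\Delta^Z_\alpha$) and all $G$ (resp. $\Gamma$) on that layer. By the CSLO assumptions on the GI model ($\{\mc O^X_i\}\cup\{G^Z_r\}$ is a CSLO) and its dual ($\{\Delta^Z_\alpha\}\cup\{\Gamma^X_\rho\}$ is a CSLO), such a single-layer operator is locally generated by those terms — but those very terms are themselves among the bulk stabilizers (the $G$, $\Gamma$ terms directly, and $\mc O^X_i$, $\Delta^Z_\alpha$ as factors of suspension terms, up to neighboring-layer corrections). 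Multiplying $\mc A$ by these generators removes the top layer, and one induces downward.

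The technical heart — and the step I expect to be the main obstacle — is handling the suspension terms cleanly: when I multiply $\mc A$ by, say, a $Z$-suspension stabilizer to cancel the $Z$-content on layer $2l+1$, I simultaneously modify layers $2l$ and $2l+2$, so the induction is not a naive one-layer-at-a-time peeling but must be organized so that the corrections introduced on adjacent layers remain controlled and local, and so that the process terminates. I would set this up by first clearing all $Z$-type content layer by layer using $Z$-suspension and $\Gamma^Z$/$G^Z$ terms (this uses the "local generation" assumptions, i.e. that local symmetry generators are locally generated by $\{G^X_s\}$, $\{\Gamma^Z_\sigma\}$, to ensure the leftover single-layer pieces are genuinely generated near their support rather than by a global symmetry), then separately clearing $X$-type content; the two phases decouple because $Z$-suspension terms are pure-$Z$ and $X$-suspension terms are pure-$X$. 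The subtlety that the nonlocal symmetries $U^Z_k, U^X_k, \Omega^X_k, \Omega^Z_k$ are \emph{not} bulk stabilizers means I must argue that the single-layer residues never require a nonlocal generator — this is exactly where the CSLO property (which guarantees \emph{local} generation) does the work, and where I would be most careful to check that boundary-of-support effects from the suspension coupling don't secretly force a nonlocal operator to appear.
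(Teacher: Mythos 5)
Your plan is essentially the paper's own proof: split $\mc A$ into its pure-$Z$ and pure-$X$ parts (legitimate because every bulk stabilizer is pure-$X$ or pure-$Z$), peel the topmost/bottommost layer by noting that the suspension terms meeting $\mc A$ only in that layer, together with the single-layer gauge terms, force the extremal layer to be a local product of $G^Z$'s (odd layer) or $\Delta^Z$'s (even layer) via the CSLO assumptions, then use $Z$-suspension operators to reduce the height until a single even layer remains, which is a local symmetry of the dual model and hence locally generated by the $\Gamma^Z$'s. The one imprecision is the phrase ``reach down, not up'': the suspension terms that isolate the top layer are those extending \emph{upward} out of $\mc A$'s support, but this does not affect the argument.
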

\begin{proof}
	Up to an unimportant phase factor, we can write $\mc A=\mc A_Z \mc A_X$ where $\mc A_Z$ ($\mc A_X$) is a product of Pauli $Z$ (Pauli $X$) operators on different sites. $\mc A_Z$ and $\mc A_X$ must themselves be local and commute with all the stabilizers in $H_{\rm bulk}$, because all the stabilizers in $H_{\rm bulk}$ are either $X$-type or $Z$-type. We will show that $\mc A_Z$ and $\mc A_X$ are both local products of the operators appearing in $H_{\rm bulk}$. 
	
	Each $Z$ operator in $\mc A_Z$ has some integer layer index $l$. Let the maximal and minimal ones of those layer indices be $l_{\rm max}$ and $l_{\rm min}$, respectively. Suppose $l_{\rm max}$ is odd, i.e. coinciding with an original lattice layer. In order to commute with all the $X$-suspension terms that span the three layers $l_{\rm max}$, $l_{{\rm max}+1}$ and $l_{{\rm max}+2}$, and also by our assumptions on the original $d$-dimensional model, the top layer of $\mc A_Z$ must be a local product of some $G^Z$ terms. Thus we may multiply $\mc A_Z$ by those $G^Z$ terms and reduce $l_{\rm max}$ by at least $1$. Now suppose that $l_{\rm max}$ is even. The top layer of $\mc A_Z$ coincides with a dual lattice layer, and commutes with all the $\Gamma^X$ operators on that layer. Given our assumptions on the dual $d$-dimensional model, it follows that the top layer of $\mc A_Z$ must be a local product of some $\Delta^Z$ operators. Let $h=l_{\rm max}-l_{\rm min}$ be the height of the $\mc A_Z$ operator. Whenever $h\geq 2$, we may multiply $\mc A_Z$ with some $Z$-suspension operators and reduce its height by at least $1$. Repeat the above two operations to decrease $l_{\rm max}$, and the similar operations to increase $l_{\rm min}$. Eventually, the reduced $\mc A_Z$ operator acts on a single dual lattice layer (even layer index), if it is not yet fully reduced to the identity. This single-layer operator is a product of some $\Delta^Z$ operators, and must commute with all the $\Delta^X$ operators acting on that layer (due to the $X$-suspension operators), thus it is a local $Z$-type symmetry generator of the dual theory and is a local product of some $\Gamma^Z$ operators by our assumptions on the dual model. 
	Given the locality of the $\mc O^X$, $\mc O^Z$, $\Delta^Z$, and $\Delta^X$ operators, as well as the locality of the generalized Kramers-Wannier operator map, the above reduction procedure implies that $\mc A_Z$ is locally generated by the stabilizers in $H_{\rm bulk}$. The claim for $\mc A_X$ can be proved similarly. 
\end{proof}
Next, we examine in what cases the model has GSD. It turns out the GSD, for example, on a ($d+1$)-dimensional torus, only depends on properties of non-local symmetries in both the original (generalized Ising) model and its dual. Specifically, we take periodic boundary condition along the out-of-layer direction, i.e. identify the first and the $L$-th layers for some odd $L$. We obtain the GSD through a thorough counting, which we elaborate in Appendix \ref{app:BulkModelGSD}.

Here instead, let us prove the existence of a degeneracy by finding a pair of anticommuting operators that act on the ground subspace. 
As a general mathematical fact, given a set of independent $X$-type operators $A^X_1,A^X_2,\cdots,A^X_n$ acting on an arbitrary multiple-qubit Hilbert space, there is always a set of $Z$-type operators $B^Z_1,\cdots,B^Z_n$ such that $B^Z_k$ anticommutes with $A^X_k$ but commutes with all the other $A^X$ operators \footnote{The $A^X_k$ operators can be represented by column vectors with elements in $\mathbb{F}_2$. The search for $B^Z$ operators is then equivalent to the search for dual vectors. Dual vectors exist because full-rank matrices with $\mathbb{F}_2$ elements are invertible. }. 
This means that we can always find some operators $V^Z_k$ acting on the Hilbert space of our original $d$-dimensional lattice, such that $V^Z_k$ anticommutes with $U^X_k$ but commutes with all the other $X$-type symmetry generators (local or nonlocal). Let us then consider the operator $U^X_{k,2l_0-1}$ for some $k$ and $l_0$, which is $U^X_k$ acting on an original lattice layer $2l_0-1$, and the operator 
\begin{align}
W_k=\prod_l V^Z_{k,2l-1},   
\end{align}
where $V^Z_{k,2l-1}$ is $V^Z_k$ acting on the $(2l-1)$-th layer. 
The two operators $U_{k,{2l_0-1}}^X$ and $W_k$ both commute with $H_{\rm bulk}$, thus acting within the ground state subspace, and the two operators anticommute with each other. It follows that the ground state subspace can not be one-dimensional. Indeed, let $\ket{\psi}$ be a ground state that is also an eigenstate of $U^X_{k,2l_0-1}$ with some eigenvalue $\lambda=\pm 1$, then $W_k\ket{\psi}$ is another ground state with eigenvalue $-\lambda$ under the $U^X_{k,2l_0-1}$ operator. This analysis actually implies that the $U^X_{k,2l_0-1}$ operators for some fixed $l_0$ and all $k$ can take independent eigenvalues $\pm 1$ within the ground subspace. Similarly, the $\Omega^Z_{k,2l_0}$ operators can also take independent eigenvalues within the ground subspace. Hence, the GSD of the system is at least $2^{n_X+m_Z}$. Our previous assumption $n_X+m_Z\geq 1$ guarantees a nontrivial degeneracy. 

If $n_X$ increases with the system size, which, for example, happens in the plaquette Ising model and the model in Eq.\,\ref{eq:XCubeBdryTheory}, the bulk theory will have a fracton order, with a GSD increases with the system size. The scenario for $m_Z$ is similar.

A few simple cases are illuminating, following the formula of GSD given in Theorem \ref{thm:BulkModelGSD}, which we summarize in the following claims.

\begin{corollary}
When the GI model has $n_X\geq 1$ number of (compatible) $X$-type non-local symmetries, the bulk model has degenerate ground states stable against local perturbations, and 
\begin{align}
    \log_2 \operatorname{GSD}\geq n_X.
\end{align}
\end{corollary}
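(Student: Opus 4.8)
The statement splits into two parts: stability of the degenerate ground space, and the bound $\log_2\operatorname{GSD}\ge n_X$. The first part is immediate from Theorem~\ref{thm:RobustDegeneracy}: since the stabilizers of $H_{\rm bulk}$ form a CSLO, the hypotheses of the stability result of Ref.~\onlinecite{TQOStability} are satisfied (granting the mild locality assumptions noted above), so $H_{\rm bulk}$ has a spectral gap stable against local perturbations and whatever GSD it has is robust. So the real content is the lower bound, and the plan is to exhibit $2^{n_X}$ mutually orthogonal ground states by reusing the anticommuting-operator construction set up just before the statement.

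Concretely, for each nonlocal $X$-type generator $U^X_k$ ($1\le k\le n_X$) I would take the operator $V^Z_k$ on the $d$-dimensional lattice that anticommutes with $U^X_k$ and commutes with every other $X$-type symmetry generator --- which exists by the $\mathbb{F}_2$-duality argument --- and form $W_k=\prod_l V^Z_{k,2l-1}$, the stack of $V^Z_k$ over all original (odd) layers, together with $U^X_{k,2l_0-1}$, a single-layer copy of $U^X_k$. Two things need checking, both routine: that $W_k$ and $U^X_{k,2l_0-1}$ commute with $H_{\rm bulk}$ (for $W_k$, the only nontrivial point is that each $X$-suspension term $\mc O^X_{i,2l-1}\Delta^X_{i,2l}\mc O^X_{i,2l+1}$ meets two odd layers, so anticommutations cancel in pairs, while the $G^X$, $\Gamma^X$ terms are $X$-type symmetry generators with which $V^Z_k$ commutes by construction; for $U^X_{k,2l_0-1}$, one uses that $U^X_k$ is compatible, hence commutes with all $\mc O^Z_\alpha$ and all $G^Z_r$, and that the $\Gamma^Z$, $\Delta^Z$ data live on even layers); and that $W_k$ anticommutes with $U^X_{k,2l_0-1}$ while commuting with $U^X_{k',2l_0-1}$ for $k'\ne k$ and with every $W_{k'}$.

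Granting these, I would fix $l_0$, pick a ground state $\ket{\psi}$ that is a simultaneous eigenstate of the commuting operators $U^X_{1,2l_0-1},\dots,U^X_{n_X,2l_0-1}$ (possible since they commute with each other and with $H_{\rm bulk}$), and apply the $2^{n_X}$ distinct products $\prod_{k\in S}W_k$ over $S\subseteq\{1,\dots,n_X\}$. Each such product commutes with $H_{\rm bulk}$, so it sends $\ket{\psi}$ to a ground state, and it flips exactly the eigenvalues indexed by $S$; the resulting $2^{n_X}$ ground states carry all sign patterns in $\{\pm1\}^{n_X}$, hence are pairwise orthogonal, giving $\operatorname{GSD}\ge 2^{n_X}$. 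The only place I expect to have to be careful is the commutation bookkeeping for $W_k$ against the $X$-suspension terms --- and, if the out-of-layer direction is taken open rather than periodic, the behavior at the top and bottom layers, where the ``two odd layers per suspension term'' pairing can degenerate; phrasing the argument in the polynomial representation of the stabilizer code makes these cancellations transparent.
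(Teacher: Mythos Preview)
Your proposal is correct and follows essentially the same route as the paper: the paper also invokes Theorem~\ref{thm:RobustDegeneracy} for stability, then constructs precisely the operators $W_k=\prod_l V^Z_{k,2l-1}$ and $U^X_{k,2l_0-1}$, verifies they commute with $H_{\rm bulk}$ and anticommute with each other, and concludes that the $U^X_{k,2l_0-1}$ can take independent eigenvalues $\pm 1$ on the ground subspace, giving $\operatorname{GSD}\ge 2^{n_X}$. Your commutation bookkeeping is more explicit than the paper's, and your caveat about open boundaries is unnecessary since the GSD discussion is carried out with periodic boundary condition along the out-of-layer direction.
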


\begin{corollary}
    When the dual of the GI model has $m_Z\geq 1$ number of (compatible) $Z$-type non-local symmetries, the bulk model has degenerate ground states stable against local perturbations, and 
\begin{align}
    \log_2 \operatorname{GSD}\geq m_Z.
\end{align}
\end{corollary}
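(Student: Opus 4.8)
The plan is to obtain Corollary~2 as the exact mirror of the $n_X$ bound already derived for $H_{\rm bulk}$, with the roles of the original $d$-dimensional model and its dual interchanged and $X\leftrightarrow Z$. Two things must be established: stability of the ground space against local perturbations, and the inequality $\log_2\operatorname{GSD}\ge m_Z$.

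Stability is immediate and independent of whether the degeneracy is sourced by $n_X$ or by $m_Z$. By Theorem~\ref{thm:RobustDegeneracy} the stabilizers of $H_{\rm bulk}$ form a CSLO, and since the group they generate does not contain $-1$, $H_{\rm bulk}$ is a frustration-free stabilizer Hamiltonian satisfying the local-topological-order hypotheses of Ref.~\onlinecite{TQOStability}. Hence its gap is stable to local perturbations and its ground space is a quantum code of macroscopic distance, exactly as stated in the paragraph preceding the claim.

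For the lower bound I would reuse the anticommuting-pair argument, now run on the dual lattice. By the $\mathbb{F}_2$ linear-algebra fact quoted in the text (full-rank $\mathbb{F}_2$ matrices are invertible, so dual vectors exist), for each nonlocal $Z$-type generator $\Omega^Z_k$ of the dual model there is an $X$-type operator $V^X_k$ on the dual lattice anticommuting with $\Omega^Z_k$ and commuting with every other ($X$- or $Z$-type) symmetry generator of the dual model. Fix an even layer index $2l_0$ and set $W'_k=\prod_l V^X_{k,2l}$. Then $\Omega^Z_{k,2l_0}$ and $W'_k$ both commute with $H_{\rm bulk}$: $\Omega^Z_{k,2l_0}$ is a product of $\Delta^Z$ operators, so it commutes with every $Z$-type stabilizer automatically, with $\Gamma^X_{\rho,2l_0}$ and the $\Delta^X$ leg of the $X$-suspension terms on layer $2l_0$ because $\Omega^Z_k$ is a compatible symmetry of the dual model, and with everything on other layers trivially; $W'_k$ commutes with all $X$-type stabilizers automatically, and the signs it picks up against the two $\Delta^Z$ legs of each $Z$-suspension term cancel pairwise, exactly as in the verification given for $U^X_{k,2l_0-1}$ and $W_k$. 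Since $\{\Omega^Z_{k,2l_0},W'_k\}=0$, and since for fixed $l_0$ the $m_Z$ operators $\Omega^Z_{k,2l_0}$ can have their eigenvalues flipped independently by the corresponding $W'_k$, the ground subspace carries an independent $(\mathbb{Z}_2)^{m_Z}$ of eigenvalue labels, so $\log_2\operatorname{GSD}\ge m_Z$. Equivalently, Corollary~2 is just the bound $\log_2\operatorname{GSD}\ge n_X+m_Z$ already established with the $n_X$ term discarded.

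The only step that is not a literal rewriting of earlier text is checking that $\Omega^Z_{k,2l_0}$ and $W'_k$ commute with the suspension terms of $H_{\rm bulk}$; I expect this to be routine, since it is the same pairwise sign-cancellation and compatibility argument already spelled out on the $n_X$ side, and in the write-up it should suffice to cite that argument under the substitution (original $\leftrightarrow$ dual, $X\leftrightarrow Z$, $n_X\leftrightarrow m_Z$) rather than redo it.
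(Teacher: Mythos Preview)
Your proposal is correct and follows exactly the paper's own route: the paper proves the $n_X$ bound by the anticommuting-pair construction $(U^X_{k,2l_0-1},W_k)$ and then simply states ``Similarly, the $\Omega^Z_{k,2l_0}$ operators can also take independent eigenvalues within the ground subspace,'' invoking the substitution (original $\leftrightarrow$ dual, $X\leftrightarrow Z$). You have just spelled out that ``similarly'' in detail, with the correct analogues $\Omega^Z_{k,2l_0}$ and $W'_k=\prod_l V^X_{k,2l}$; the only superfluous clause is requiring $V^X_k$ to commute with $X$-type generators, which is automatic since $V^X_k$ is itself $X$-type.
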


\begin{corollary}
When there is no compatible nonlocal symmetry in neither the GI model nor the dual model, the bulk model has a unique ground state.
\end{corollary}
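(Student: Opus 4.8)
The plan is to read the result off the explicit ground-state-degeneracy formula of Theorem~\ref{thm:BulkModelGSD}. The hypothesis ``no compatible nonlocal symmetry in either model'' unpacks to $n_X=n_Z=0$ for the GI model and $m_X=m_Z=0$ for its dual; equivalently, every $Z$-type (resp.\ $X$-type) symmetry of the GI model is already a product of the local generators $\{G^Z_r\}$ (resp.\ $\{G^X_s\}$), and likewise every $X$-type (resp.\ $Z$-type) symmetry of the dual model is a product of $\{\Gamma^X_\rho\}$ (resp.\ $\{\Gamma^Z_\sigma\}$). Since, as noted in the main text, the GSD on the $(d+1)$-torus only depends on nonlocal symmetries of the two $d$-dimensional models, every contribution to $\log_2\operatorname{GSD}$ in Theorem~\ref{thm:BulkModelGSD} vanishes, giving $\operatorname{GSD}=1$. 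This is consistent with the bound $\log_2\operatorname{GSD}\geq n_X+m_Z$ derived in the main text, which here is simply trivial.

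It is worth spelling out the mechanism directly, since it shows where the hypothesis bites. Recall that $H_{\rm bulk}$ is a stabilizer Hamiltonian whose stabilizers already form a CSLO by Theorem~\ref{thm:RobustDegeneracy}; hence its ground state is unique precisely when every Pauli operator $\mc A$ that commutes with all bulk stabilizers---now allowing $\mc A$ to be \emph{nonlocal}---is, up to a phase, a product of those stabilizers, i.e.\ the code has no logical operators. First I would split $\mc A=\mc A_Z\mc A_X$ as in the proof of Theorem~\ref{thm:RobustDegeneracy}, with $\mc A_Z$ and $\mc A_X$ each commuting with all stabilizers. Then I would rerun the layer-by-layer reduction of that proof: the topmost layer of $\mc A_Z$ is forced, by the single-layer gauge terms on it together with the suspension terms sitting above it, to be a $Z$-type symmetry of whichever $d$-dimensional model that layer belongs to. For a \emph{local} $\mc A_Z$ the CSLO property finishes this; for a possibly nonlocal $\mc A_Z$ one instead invokes $n_Z=0$ (on an original layer) or $m_Z=0$ (on a dual layer) to conclude that this symmetry is a product of the \emph{local} generators $\{G^Z_r\}$ or $\{\Gamma^Z_\sigma\}$, which are themselves bulk stabilizers; multiplying $\mc A_Z$ by the matching single-layer stabilizers strips off its top layer. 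Iterating from the top and the bottom, and collapsing the height with $Z$-suspension terms whenever it exceeds one, reduces $\mc A_Z$ to the identity or to a single-layer symmetry that is again a product of bulk stabilizers. The argument for $\mc A_X$ is identical with $X\leftrightarrow Z$ and $n_X=m_X=0$ in place of $n_Z=m_Z=0$. Hence $\mc A$ is a product of bulk stabilizers, the code has no logical operators, and the ground state is unique.

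The step I expect to be the main obstacle is the same delicate point already glossed over for local operators in the proof of Theorem~\ref{thm:RobustDegeneracy}: identifying the top layer of a \emph{nonlocal} $\mc A_Z$ with an honest symmetry of the relevant $d$-dimensional model. A single layer of $\mc A_Z$ commutes with the single-layer gauge terms cleanly, but the $X$-suspension terms that straddle two consecutive layers impose \emph{joint} constraints on adjacent layers rather than on one layer at a time, so the reduction must be organized so that the top of $\mc A_Z$ is genuinely isolated before it is reinterpreted as a $d$-dimensional symmetry. Carrying out this bookkeeping rigorously---equivalently, performing the exact stabilizer count on the $(d+1)$-torus---is precisely the content of the proof of Theorem~\ref{thm:BulkModelGSD} in Appendix~\ref{app:BulkModelGSD}, so the cleanest route is to quote that formula and specialize it to $n_X=n_Z=m_X=m_Z=0$.
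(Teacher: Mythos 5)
Your first paragraph is exactly the paper's argument: the corollary is read off the GSD formula of Theorem~\ref{thm:BulkModelGSD} by setting $n_X=n_Z=m_X=m_Z=0$ (which forces $\nu=\bar\nu=\mu=\bar\mu=0$ as well), giving $\log_2 D=0$. Your supplementary direct reduction is in the right spirit but would need the appendix's version of the bookkeeping (on the torus a nonlocal $\mc A_Z$ can wrap the out-of-layer direction and has no ``top layer'' to peel from); since you correctly defer to the theorem's formula as the clean route, the proof stands.
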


\section{Bulk-Boundary Correspondence}\label{sec:Boundary}
Now let us analyze the boundary physics of the bulk model constructed above, 
%and address in what sense the GI model we start with can terminate on its boundary. 
and show that the GI model we start with can terminate on its boundary. This more precisely means the following: When the bulk model is placed on a certain space with boundary, in the absence of bulk excitation, its low-energy physics below the bulk excitation gap is described by the GI model subject to certain global constraints. 

The bulk-boundary correspondence manifests locally, through the matching of Hamiltonian local terms, as well as globally. Especially, we focus on matching the following two types of global constraints on the GI model to the boundary of the bulk model. 
\begin{itemize}
    \item Global symmetry charge projection: for a set of indices $\mathcal{N}\subset \{1,2,\cdots, n_X\}$, \begin{align}
        U_\mathcal{N}^X\equiv \prod_{k\in \mathcal{N}}U_k^X =\pm 1.
        \label{eq:symmetryprojection}
    \end{align}
    
    \item Generalized boundary condition: for a set of indices $\mathcal{S}$ such that $\prod_{\alpha\in \mathcal{S}} \mc O^Z_\alpha=1$ modulo $G^Z$'s, 
    \begin{align}
        \eta_{\mc S}\equiv \prod_{\alpha\in \mathcal{S}}\operatorname{sign}(J_\alpha)=\pm 1. 
        \label{eq:gbc}
    \end{align}
%    \item Generalized boundary condition, for a set of indices $\mathcal{S}$, 
%    \begin{align}
%        \prod_{\alpha\in %\mathcal{S}}\operatorname{sign}(J_\alpha) \mc O^Z_\alpha=\pm 1,
%        \label{eq:gbc}
%    \end{align}
%    modulo $Z$-type local symmetry operators $G^Z$.
\end{itemize}

To understand why we call the latter condition a generalized boundary condition, let us apply the condition to one-dimensional tranverse Ising model on a ring.
In this model, $\{ \mc O^Z_\alpha \}$ can be identified with $\{ Z_1Z_2,Z_2Z_3,\cdots,Z_NZ_1 \}$, such that $\prod_\alpha \mc O^Z_\alpha=1$, and one way to change the boundary condition is to flip the sign of the coefficient of the $Z_NZ_1$ term in the Hamiltonian; and correspondingly, the sign of $\eta = \prod_{\alpha} \operatorname{sign}(J_\alpha)$ is changed.
%We call the latter condition a generalized boundary condition. Since it is in reminiscence of $\prod_{\alpha}\operatorname{sign}(J_\alpha)O_{\alpha}^Z=\prod_{\alpha}\operatorname{sign}(J_\alpha) Z_{N+1}Z_1$, where $O_{\alpha}\in\{Z_iZ_{i+1}|i=1,\cdots, N\}$ is the $Z$-type local term in the one-dimensional transverse field Ising model with the Hamiltonian $H = -\sum_{i=1}^N J_i Z_{i}Z_{i+1}-\sum_{i=1}^N h_i X_i$. And $Z_{N+1}Z_1=\prod_{\alpha} \operatorname{sign}(J_\alpha)$ is one way to claim the boundary condition. 

There are two main results. Consider a bulk model with a finite number of layers, and with the top and bottom being odd layers. The first is that the boundary Hilbert space $\mc L_{\text{bdry}}$ of the bulk model with topological and/or fracton orders is isomorphic to that of two copies of GI models subject to global symmetry constraints. Particularly, under a condition to be specified below, $\mathcal{L}_{\rm bdry}$ is isomorphic to the sector labeled by $+1$ eigenvalues of at least one of the following two types of non-local operators. One is
\begin{align}
    U_{\mathcal{N}}^X\otimes U_{\mathcal{N}}^X,
    \label{eq:GlobalSymmCharges}
\end{align}
dubbed global symmetry charges, and the other is
\begin{align}
    \Omega_{\mathcal{M}}^Z\otimes \Omega_{\mathcal{M}}^Z,
    \label{eq:GeneralizedBdryConditions}
\end{align}
dubbed generalized boundary conditions.
The notation here deserves some explanation. $U_{\mathcal{N}}^X$ for $\mathcal{N}\subset \{1,2,\cdots, n_X\}$ are global symmetry operators of the GI model, and hence the operator in Eq.\,\ref{eq:GlobalSymmCharges} does divide the Hilbert space of two GI models into different eigenvalue sectors. On the other hand, $\Omega_{\mathcal{M}}^Z$ for $\mc M\subset\{1,2,\cdots,m_Z\}$ are symmetry operators of the dual model, then how does the operator in Eq.\,\ref{eq:GeneralizedBdryConditions} acts on two copies of the original model? In fact, we will show that the eigenvalues of certain $\Omega^Z_{\mc M}$ in the dual model imply generalized boundary conditions given by (\ref{eq:gbc}), through the duality map. Therefore, $\Omega^Z_{\mc M}\otimes \Omega^Z_{\mc M}$ actually acts on two copies of the original model as $\eta_{\mc S}\otimes \eta_{\mc S}$ for some $\mc S$. 

%as we will show that eigenvalues of the non-local symmetry $\Omega^Z_{\mc M}$ in the dual model implies the generalized boundary condition given by (\ref{eq:gbc}), through the duality map.

Furthermore, we find that the boundary Hilbert space $\mc L_{\rm bdry}$ can be divided into many sectors labeled by the eigenvalues of some \emph{nonlocal} operators, which are all $X$-type or $Z$-type symmetry operators of the original or dual $d$-dimensional models acting on certain layers. The sectors are all isomorphic, and the boundary Hamiltonian $H_{\rm bdry}$ is block-diagonal with respect to these sectors. %First consider the boundary theory in a single sector, call it $\mc L_{\rm bdry,0}$, and the boundary theory in the other sectors can be obtained through isomorphisms.
In different sector, the charge projections (\ref{eq:symmetryprojection}) and generalized boundary conditions (\ref{eq:gbc}) either on the top layer or on the bottom layer may be different. Nevertheless, the combinations of their values on both the top and the bottom layer need to be consistent with that $U_\mathcal{N}^X\otimes U_\mathcal{N}^X =1$ and $\Omega_{\mathcal{M}}^Z\otimes \Omega_{\mathcal{M}}^Z=1$. 

In this way, the boundary model of a non-invertible phase with long-range orders is described by the GI model with global constraints. When this happens, we dub the constrained GI model to have (\emph{weak}) \emph{non-invertible anomaly}.

The second main result is a concrete condition on the non-local symmetries $U_\mathcal{N}^X$, $\Omega_{\mathcal{N}}^Z$ of the original and dual $d$-dimensional models that leads to a bulk model whose boundary theory matches with the GI model with constraints (\ref{eq:symmetryprojection}) and/or (\ref{eq:gbc}). We summarize it in the following claim.
 \begin{claim} (\textbf{Necessary and sufficient condition for an anomalous boundary}) The boundary theory is anomalous if and only if \emph{either} of the following two conditions is satisfied: 
\begin{enumerate}
	\item For some nonempty subset $\mathcal{N}\subset \{1,2,\cdots,n_X\}$, \begin{align}
	    U_{\mathcal{N}}^X\equiv\prod_{k\in \mathcal{N}}U^X_k,
	\end{align} can be written as a product of $\mc O^X$ operators such that {its dual -- a product of $\Delta^X$ operators -- equals to the identity modulo local symmetry operators $\Gamma^X$.}
	%The original model has a nonlocal $X$-type symmetry generator independent from $G^X$'s, namely a product of several $U^X$ operators, that is dual to the identity modulo the $\Gamma^X$ operators. 
	\item For some nonempty subset $\mathcal{M}\subset \{1,2,\cdots,m_Z\}$, \begin{align}
	    \Omega_{\mathcal{M}}^Z\equiv\prod_{k\in \mathcal{M}}\Omega^Z_k,
	\end{align} can be written as a product of $\Delta^Z$ operators such that {its dual -- a product of $\mc O^Z$ operators -- equals to the identity modulo local symmetry operators $G^Z$.}%such that the product is dual to the identity modulo the $G^Z$ operators. 
	%The dual model has a nonlocal $Z$-type symmetry generator independent from $\Gamma^Z$'s, namely a product of several $\Omega^Z$ operators, that is dual to the identity modulo the $G^Z$ operators. 
\end{enumerate}
\label{clm:anomaly}
\end{claim}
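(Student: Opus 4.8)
The plan is to understand precisely how the boundary Hilbert space $\mc L_{\rm bdry}$ of $H_{\rm bulk}$ decomposes, and then to identify when a nontrivial constraint survives. First I would set up the boundary problem concretely: place $H_{\rm bulk}$ on a slab with odd top and bottom layers, remove (or appropriately truncate) the stabilizers that stick out of the slab, and ask which stabilizers become \emph{frustration-free constraints} and which become \emph{logical operators} acting within the ground subspace. Using Theorem \ref{thm:RobustDegeneracy} (the bulk stabilizers form a CSLO) together with the reduction argument in its proof, any operator commuting with all bulk stabilizers is, up to local generators, one of the ``vertical'' products $W_k=\prod_l V^Z_{k,2l-1}$ or $\prod_l \Omega^X_{\rho/k,2l}$-type strings; on a slab these collapse to operators supported near the two boundary layers. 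The key bookkeeping step is to match these surviving nonlocal operators with the two candidate labels in Eqs.\,\eqref{eq:GlobalSymmCharges}--\eqref{eq:GeneralizedBdryConditions}: a vertical $V^Z$-string terminates on the top and bottom original layers, and when it can be closed off in the interior it becomes trivial, whereas when it cannot, it pairs with some $U^X_{\mathcal N}$ on a boundary layer to force a nontrivial sector label $U^X_{\mathcal N}\otimes U^X_{\mathcal N}$. The symmetric statement holds for the dual model with $\Delta^Z$, $\Omega^Z_{\mathcal M}$, and the generalized boundary conditions $\eta_{\mc S}$.

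Next I would translate ``anomalous'' into a clean algebraic criterion. By the first main result quoted above, the boundary is anomalous precisely when $\mc L_{\rm bdry}$ is \emph{not} isomorphic to the unconstrained two copies of the GI model — equivalently, when at least one of the projectors $\tfrac12(1+U^X_{\mathcal N}\otimes U^X_{\mathcal N})$ or $\tfrac12(1+\Omega^Z_{\mathcal M}\otimes\Omega^Z_{\mathcal M})$ is a nontrivial constraint that is \emph{not} already implied by the local stabilizer algebra on the boundary. So the proof reduces to: (i) showing that such a constraint is nontrivial iff the corresponding vertical string can be ``contracted'' in the bulk, and (ii) re-expressing ``contractible in the bulk'' purely in terms of the $d$-dimensional data. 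For (ii) I would argue that a $Z$-suspension term $\Delta^Z_{\alpha,2l}\mc O^Z_{\alpha,2l+1}\Delta^Z_{\alpha,2l+2}$ lets one push a boundary $\mc O^X$-string inward exactly one layer at the cost of multiplying by the dual string $\Delta^X_{i,2l}$; iterating, the boundary operator $U^X_{\mathcal N}$ written as a product of $\mc O^X$'s is trivializable in the bulk iff its KW-image (the corresponding product of $\Delta^X$'s) is trivial in the dual model, i.e. equals the identity modulo $\Gamma^X$'s — which is exactly condition 1. The dual computation with $X$-suspension terms, $\Omega^Z_{\mathcal M}$ as a product of $\Delta^Z$'s, and its image as a product of $\mc O^Z$'s modulo $G^Z$'s gives condition 2.

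For the ``only if'' direction I would show the contrapositive: if neither condition 1 nor condition 2 holds, then every vertical logical string either fails to commute with some suspension stabilizer (hence is not a symmetry of $H_{\rm bulk}$ at all) or can be completely absorbed into bulk stabilizers, so the only sectors are generated locally and $\mc L_{\rm bdry}$ is (two copies of) the full, unconstrained GI Hilbert space — the boundary is anomaly-free. This uses the ``maximality'' in the definitions of $\{U^X_k\}$ and $\{\Omega^Z_k\}$ together with the CSLO assumptions on both the GI model and its dual, so that ``trivial modulo local symmetries'' is the only obstruction that can arise. The main obstacle I anticipate is step (i) combined with the matching of the two main results: one must show carefully that the surviving boundary constraints are \emph{exactly} the projectors onto $U^X_{\mathcal N}\otimes U^X_{\mathcal N}=+1$ and $\Omega^Z_{\mathcal M}\otimes\Omega^Z_{\mathcal M}=+1$ with no extra constraints and no missing ones — i.e. a dimension count of $\mc L_{\rm bdry}$ versus $2^{2(\text{dim of GI Hilbert space})-\#\text{constraints}}$ — and that the KW operator map, which is only an isomorphism on symmetric sectors, nevertheless faithfully transports the relevant non-local relations. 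Handling the interplay of open versus periodic directions within each layer, and the $O(1)$-neighborhood slack in ``locally generated,'' is where the bulk of the technical care will go; I would expect to defer the full dimension count to an appendix and keep the main-text proof at the level of the string-contraction argument.
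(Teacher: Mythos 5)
Your overall architecture — stabilizer bookkeeping on the slab, reduction of vertical strings by suspension operators, and the observation that $U^X_{\mathcal N,1}U^X_{\mathcal N,L}$ is a product of bulk stabilizers precisely when the dual of $U^X_{\mathcal N}$ is trivial modulo $\Gamma^X$ — is the same as the paper's, and your string-contraction picture correctly identifies the mechanism behind condition 1. However, there are two genuine gaps. First, you treat condition 2 as a Hilbert-space projector $\tfrac12(1+\Omega^Z_{\mathcal M}\otimes\Omega^Z_{\mathcal M})$ and propose to detect the anomaly by a dimension count. That cannot work: the sectors of $\mc L_{\rm bdry}$ labeled by $\Omega^Z_{\mathcal M,2l_0}=\pm1$ are isomorphic as Hilbert spaces, and $\Omega^Z_{\mathcal M}\otimes\Omega^Z_{\mathcal M}$ acts on the two GI copies not as an operator but as a locking of the signs $\eta_\alpha$ of the couplings $J_\alpha$ (a generalized boundary condition, Eq.\,\ref{eq:gbc}). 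The anomaly from condition 2 is the statement that any isomorphism to $\mc L_{\rm fic}$ sending $\mc O^Z_{\alpha,1}\Delta^Z_{\alpha,2}\mapsto\eta_\alpha\,\mc O^Z_\alpha\otimes1$ must satisfy $\prod_{\alpha\in A}\eta_\alpha=\Omega^Z_{\mathcal M,2l_0}$ whenever $\prod_{\alpha\in A}\mc O^Z_\alpha=1\bmod G^Z$, so the sign flips cannot be undone by any unitary on the fictitious space — and, conversely, that when $\Omega^Z_{\mathcal M}$ is \emph{not} dual to the identity mod $G^Z$ the flips \emph{can} be undone by adjusting an independent set of $\eta_{\beta_j}$. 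A dimension count is blind to this Hamiltonian-level distinction, which is the entire content of condition 2.

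Second, your ``only if'' direction asserts that when both conditions fail, $\mc L_{\rm bdry}$ is the full unconstrained two copies of the GI Hilbert space with ``only locally generated sectors.'' That is false in general: even in the nonanomalous (e.g.\ fracton) case, $\mc L_{\rm bdry}$ decomposes into exponentially many sectors labeled by nonlocal operators ($U^Z_{k,2l-1}$ on internal layers, $\Omega^X_{k>\mu,2l}$, $\Omega^Z_{k,2l_0}$). The correct claim is that all these sectors are \emph{identical}: the intertwiners between them (the multilayer $\Theta^X$, $E^X$, $V^X$, $\mathcal P^Z$ operators) can be chosen to commute with the canonical boundary terms, or their effect on the effective Hamiltonian can be gauged away, precisely because the relevant nonlocal symmetries are not dual to the identity. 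Establishing this requires constructing those intertwiners explicitly and invoking the completeness result for boundary terms (that every allowed boundary local operator is generated by $\mc O^X_{i,1}$, $\mc O^X_{i,L}$, $\mc O^Z_{\alpha,1}\Delta^Z_{\alpha,2}$, $\Delta^Z_{\alpha,L-1}\mc O^Z_{\alpha,L}$ and the bulk stabilizers) to rule out extra local terms connecting sectors; neither step follows from the contrapositive as you state it. Relatedly, for the ``if'' direction of condition 1 you still need to check that the charge projection $U^X_{\mathcal N}\otimes U^X_{\mathcal N}=1$ survives the sector changes induced by flipping $\Omega^X_{k>\mu,2l}$ — this holds exactly because $U^X_{\mathcal N,1}U^X_{\mathcal N,L}$ is independent of those eigenvalues, whereas for $U^X_p\notin\mc G_{X,0}$ the projection is washed out; your proposal does not engage with this multi-sector interplay.
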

Since that a product of a few $\mc O^Z$ equals the identity modulo $G^Z$'s is a generalized boundary condition (\ref{eq:gbc}) in the GI model, and there is an analogy for a product of $\Delta^X$. Thus,
colloquially speaking, the conditions say that only for those non-local symmetry operators in either the GI or the dual model, which is dual to a generalized boundary conditions, the projections of them lead to the (weak) non-invertible anomaly. 

%the conditions are there exist either global $X$-type symmetries $U_\mathcal{N}^X$'s which are the products of Hamiltonian local terms, and dual to at most local $X$-type symmetries in the GI model, or global $Z$-type symmetries $\Omega_{\mathcal{M}}^Z$'s which are the products of Hamiltonian local terms, and dual from at most local $Z$-type symmetries in the dual of GI model.

 In the special case where $\mc O^X=X$ and $\Delta^Z=Z$, the condition is simple, that is $n_X+m_Z\geq 1$. The reason is that in this case, any non-local symmetry $U^X$ or $\Omega^Z$ is dual to the identity, because the original (dual) theory does not have any $Z$-type ($X$-type) symmetry.
 %satisfy the conditions, as we will show in Subsection \ref{sec:BBCSimplest}. \SL{This statement is somewhat straightforward, so maybe we can simply explain it in one sentence instead of refer to later subsection? }\WJ{Yes, sure.}

\iffalse 
On an anomalous boundary, $\mathcal{L}_{\rm bdry}$ is the sector labeled by $+1$- eigenvalues of at least one of the following two types of non-local operators. One is
\begin{align}
    U_{\mathcal{N}}^X\otimes U_{\mathcal{N}}^X,
\end{align}
dubbed global symmetry charges, and the other is
\begin{align}
    \Omega_{\mathcal{M}}^Z\otimes \Omega_{\mathcal{M}}^Z,
\end{align}
dubbed generalized boundary conditions. 
\fi
 
The following subsections are devoted to analyze the boundary theory from the simplest to the most general case, which leads to the results above. The examples of the boundary theory of toric code model and the X-cube model are presented. A complete and detailed treatment is given in Appendix \ref{app:BulkBdryCorrespondence}. 

To begin with, let us define the boundary of our bulk model. We will take an odd number of layers, with layer indices from $1$ to $L\in 2\mathbb{Z}+1$, and take open boundary condition along the out-of-layer direction, so the $1$-st and the $L$-th layers are the boundary layers.   
The two boundary layers both have the original (instead of the dual) lattice structure on which our GI model is defined, cf. Fig.\,\ref{fig:BulkModelSchematic} and \ref{fig:TFIMBulkLattice}. How about the boundary condition along the intra-layer directions? Previously, we have assumed periodic boundary condition when discussing concrete examples, but our construction does not really demand any particular boundary condition. In the following, we just require that the boundary condition for each original (dual) lattice layer be the same as the original (dual) $d$-dimensional model, but is otherwise arbitrary. However, we emphasize that if one changes the boundary condition for a GI model, its symmetry, the dual model, and the validity of our assumptions should all be reexamined. An example will be given below.

We define the bulk Hamiltonian $H_{\text{bulk}}$ to be of the same form as \eqref{eq:Hbulk}, including all the terms that are completely inside the system. This Hamiltonian determines a degenerate ground state subspace, which we consider as the boundary Hilbert space $\mc L_{\rm bdry}$. All additional local operators that commute with local terms in $H_{\text{bulk}}$, and thus act within the boundary Hilbert space $\mc L_{\rm bdry}$, are allowed terms in the most general boundary Hamiltonian $H_{\rm bdry}$. %We consider these terms as boundary terms. They are supposed to be small compared to the existing bulk terms, and lead to a Hamiltonian $H_{\rm bdry}$ acting on $\mc L_{\rm bdry}$. 
$\mc L_{\rm bdry}$ together with $H_{\rm bdry}$ is the boundary theory that we are going to determine. Note that there is not a unique choice of $H_{\rm bdry}$, since the product of any two boundary terms is another allowed boundary term. Instead, we will focus on a canonical choice of $H_{\rm bdry}$. We prove in Appendix \ref{app:PossibleBdryTerms} that the boundary terms given in the canonical choice together with the stabilizers in the bulk Hamiltonian are sufficient to generate any allowed boundary local term. Thus the canonical $H_{\rm bdry}$ we consider is a quite general one. 

\subsection{Simplest situation: $\mc O^X=X$ and $\Delta^Z=Z$}\label{sec:BBCSimplest}
Let us start with the simplest situation where $\mc O^X_i=X_i$ and $\Delta^Z_\alpha=Z_\alpha$, with $i$ and $\alpha$ labeling qubits in the original and dual lattices, respectively. In this case, the original model (the dual model) has no $Z$-type ($X$-type) symmetry at all. 
%On a $(d+1)$-dimensional torus $T^{d+1}$, 
With periodic boundary condition along the out-of-layer direction, the GSD is determined by the number of non-local $X$-type symmetries in the original model and the number of non-local $Z$-type symmetries in the dual model, independent of the number of layers, $\log_2 \operatorname{GSD}=n_X+m_Z$.

Note that one obvious type of operators that commute with $H_{\text{bulk}}$ is the nonlocal $Z$-type symmetry operator $\Omega_{k,2l_0}^Z$ in any even layer. Thus, we can divide $\mc L_{\rm bdry}$ into several sectors labeled by the eigenvalues of the nonlocal operators $\Omega^Z_{k,2l_0}$ ($k=1,\cdots,m_Z$) for some fixed $l_0$. Notice that under the Kramers-Wannier duality, $\Omega^Z_k$ from the dual theory corresponds to the identity operator of the original theory. Hence, $\Omega^Z_{k,2l_0}$ is related to $\Omega^Z_{k,2l}$ for any other $l$ by the multiplications of several $Z$-suspension operators. More explicitly, $\Omega^Z_k=\prod_{\alpha\in A}Z_\alpha$ for some set $A$ such that $\prod_{\alpha\in A}\mc O^Z_\alpha=1$, then $\prod_{\alpha\in A} Z_{\alpha,2l}\mc O^Z_{\alpha,2l+1}Z_{\alpha,2l+2}=\Omega^Z_{k,2l}\Omega^Z_{k,2l+2}$. This is the reason that we only need to consider  $\Omega^Z_{k,2l_0}$ operators acting on a single layer. Let $\mc L_{{\rm bdry},0}\subset\mc L_{\rm bdry}$ be the particular sector where $\Omega^Z_{k,2l_0}=1$ for all $k$. Denote by $\mc L$ the Hilbert space for our original $d$-dimensional lattice and by $\mc L_G$ the gauge invariant subspace of it (a.k.a., the symmetric subspace for all local symmetries). 

We claim that $\mc L_{{\rm bdry},0}$ is isomorphic to the following fictitious space, 
%\begin{align}
%\mc L_{\rm fic}:=\text{$\mathbb{Z}_2^{n_X}$ symmetric subspace of $\mc L_G\otimes \mc L_G$}, 
%\label{eq:FictitiousSpaceSimplest}
%\end{align}
\begin{align}
    \mc L_{\rm fic}:= &\left\{ |\phi\rangle \in \mc L_G\otimes \mc L_G \middle|~ U_{i}^X\otimes U_{i}^X|\phi\rangle = |\phi\rangle, \right.
    \nonumber \\
    &~\left. i=1,\cdots,n_X \right\}.
    \label{eq:FictitiousSpaceSimplest}
\end{align}
where the two copies of $\mc L_G$ represent the two boundary layers of our physical system. That is, $\mc L_{\text{bdry},0}$ is the $\ZZ_2^{n_X}$ symmetric sector of $\mc L_G\otimes \mc L_G$ under the symmetry generated by $U_i^X\otimes U_i^X$, $i=1,\cdots,n_X$. 

%Furthermore, $H_{\rm bdry}$, acting within $\mc L_{{\rm bdry},0}$, when represented in $\mc L_{\rm fic}$, can take the form
Furthermore, $\mc L_{{\rm bdry},0}$ is an invariant subspace of $H_{\rm bdry}$ whose action in this sector, when represented in $\mc L_{\rm fic}$, can take the form
\begin{align}
%H_{\rm bdry}^{\Omega^Z=1}=
H^{\rm I}_{\rm GI}(J_\alpha,h_i)+H^{\rm II}_{\rm GI}(J'_\alpha,h'_i)%\quad\text{(in $\mc L_{\rm fic}\cong\mc L_{{\rm bdry},0}$)}, 
\label{eq:EffBdryH}
\end{align}
where $H^{\rm I}_{\rm GI}$ and $H^{\rm II}_{\rm GI}$ act on the two copies of $\mc L_G$ in Eq.\,\ref{eq:FictitiousSpaceSimplest}, respectively.

Let us understand the result of the boundary Hilbert space first. Note that the Pauli-$X$ operator acting on any qubit in the two  boundary layers commute with the bulk Hamiltonian. States in $\mc L_{{\rm bdry},0}$ can be labeled by the eigenvalues of these Pauli-$X$ operators, subject to the following two constraints. First, each local $X$-type symmetry generator equals to $1$, since the generator is a local term in the bulk Hamiltonian. This constraint gives rise to the gauge invariance requirement in Eq.\,\ref{eq:FictitiousSpaceSimplest}. Second, since $U^X_k$ is dual to the identity operator under the Kramers-Wanner duality \footnote{$U^X_k$ is a product of several $X_i$ operators. One can obtain a dual operator by applying the Kramers-Wannier operator map \eqref{eq:KMOperatorMap}. Such a dual operator has to commute with all the $Z_\alpha$ operators, so it must be the identity. }, $U^X_{k,1}U^X_{k,L}$ is equal to the product of several $X$-suspension operators, and thus, is equal to $1$. This constraint leads to the $\mathbb{Z}_2^{n_X}$ symmetry projection. 

Now we consider the boundary Hamiltonian local terms. The Pauli-$X$ operators on the two boundary layers are allowed, since, as just mentioned, they commute with the bulk Hamiltonian. Furthermore, these operators commute with $\Omega^Z_{k,2l_0}$, and thus act within each sector of the boundary Hilbert space. Under the isomorphism from $\mc L_{{\rm bdry},0}$ to $\mc L_{\rm fic}$, these operators take the same form, \begin{align}
    X_{i,1}\mapsto X_i\otimes 1,~~X_{i,L}\mapsto 1\otimes X_i.
\end{align} Another set of operators that can be added to $H_{\rm bdry}$ are $\mc O^Z_{\alpha,1}Z_{\alpha,2}$ and $Z_{\alpha,L-1}\mc O^Z_{\alpha,L}$. They all commute with the bulk Hamiltonian and with $\Omega^Z_{k,2l_0}$ as well. Their image in $\mc L_{\rm fic}$ is, 
\begin{align}
\mc O^Z_{\alpha,1}Z_{\alpha,2}\mapsto \mc O^Z_{\alpha}\otimes 1,~~ Z_{\alpha,L-1}\mc O^Z_{\alpha,L}\mapsto 1\otimes \mc O^Z_{\alpha}.
\end{align}
This map may seem obvious since it preserves the commuting/anticommuting relations with the boundary $X$ operators, but a careful proof is actually necessary. For example, extra minus sign factors also seem allowed and it is not immediately clear whether they can be gauged away. Our proof for this result is given in Appendix \ref{app:BBCSimpleCase}. A crucial ingredient in the proof is that $\Omega^Z_{k,2l_0}=1$ for all $k$; one can see this by noticing that each $\Omega^Z_{k,2}$ is equal to the product of several $\mc O^Z_{\alpha,1}Z_{\alpha,2}$. 

With the above analysis, we conclude that the $\mc L_{{\rm bdry},0}$ block of $H_{\rm bdry}$, when represented in $\mc L_{\rm fic}$, may take the form (\ref{eq:EffBdryH}),
where $H^{\rm I}_{\rm GI}$ and $H^{\rm II}_{\rm GI}$ act on the two copies of $\mc L_G$ in Eq.\,\ref{eq:FictitiousSpaceSimplest}, respectively. We will refer to Eq.\,\ref{eq:EffBdryH} as the \emph{effective} boundary Hamiltonian in the $\mc L_{{\rm bdry},0}$ sector, where the ``effectiveness'' is in the sense that the Hamiltonian acts on the fictitious space $\mc L_{\rm fic}$. 

Other sectors with different eigenvalues of $\Omega^Z_{k,2l_0}$ can be analyzed by considering unitary operators that map them to $\mc L_{{\rm bdry},0}$. 
%Given a set of independent $Z$-type operators $U^Z_1,U^Z_2,\cdots,U^Z_n$ acting on an arbitrary multiple-qubit Hilbert space, there is always a set of $X$-type operators $V^X_1,\cdots,V^X_n$ such that $V^X_k$ anticommutes with $U^Z_k$ but commutes with all other $U^Z$ operators \footnote{The $U^Z_k$ operators can be represented by column vectors with elements in $\mathbb{F}_2$. The search for $V^X$ operators is then equivalent to the search for dual vectors. Dual vectors exist because full-rank matrices with $\mathbb{F}_2$ elements are invertible. }. 
Denote by $\mc L'$ the Hilbert space for the $d$-dimensional dual lattice. We can find some $X$-type operators $\Theta^X_k$ acting on $\mc L'$ such that $\Theta^X_k$ anticommutes with $\Omega^Z_k$ but commutes with all the other $Z$-type symmetry generators (local or nonlocal); this is always possible as we mentioned earlier. It follows that \begin{align*}
\prod_{l=1}^{(L-1)/2}\Theta^X_{k,2l}
\end{align*}
is an operator that commutes with the bulk Hamiltonian and can flip the eigenvalue of $\Omega^Z_{k,2l_0}$. Hence, we have found that each of the $m_Z$ sectors of $\mc L_{\rm bdry}$ is isomorphic to $\mc L_{\rm fic}$ defined in Eq.\,\ref{eq:FictitiousSpaceSimplest}. 
The above unitary operator that can alter the sign of $\Omega^Z_{k,2l_0}$ commutes with all the $\mc O^X=X$ operators on the two physical boundaries, but necessarily anticommute with some $\mc O^Z_{\alpha,1}Z_{\alpha,2}$ and $Z_{\alpha,L-1}\mc O^Z_{\alpha,L}$ operators. 

In consequence, the effective boundary Hamiltonian in each of the other sectors still takes the form of Eq.\,\ref{eq:EffBdryH}, but the signs of some GI terms in both $H^{\rm I}_{\rm GI}$ and $H^{\rm II}_{\rm GI}$ are flipped compared to those in the $\mc L_{{\rm bdry},0}$ sector. 

Crucially, such sign changes \emph{cannot} be canceled by any unitary rotation in $\mc L_{\rm fic}$. To see this, we write $\Omega^Z_k=\prod_{\alpha\in A}Z_\alpha$ for some subset $A$. Then under the generalized Kramers-Wannier duality map, $\Omega^Z_k\leftrightarrow \prod_{\alpha\in A}\mc O^Z_\alpha=1$.\footnote{This comes from the fact that $\prod_{\alpha\in A} \mc O_\alpha^Z$ should commute with all Hamiltonian local terms in the GI model, yet the model, in which $O^X=X$, has no $Z$-type symmetry .} It leads to that in an arbitrary sector of $\mc L_{\rm bdry}$, 
\begin{align}
\Omega^Z_{k,2l_0}=\Omega^Z_{k,2}=\prod_{\alpha\in A}(\mc O^Z_{\alpha,1}Z_{\alpha,2}). 
\end{align}
In $\mc L_{\rm fic}$, we have
\begin{align}
1=\prod_{\alpha\in A}(\mc O^Z_\alpha\otimes 1). 
\end{align}
Suppose there is an isomorphism from this sector of $\mc L_{\rm bdry}$ to $\mc L_{\rm fic}$, such that
\begin{align}
\mc O^Z_{\alpha,1}Z_{\alpha,2}\mapsto \eta_\alpha \mc O^Z_\alpha\otimes 1\quad(\eta_\alpha=\pm 1),  
\end{align}
then we necessarily have 
\begin{align}
\Omega^Z_{k,2l_0}=\prod_{\alpha\in A}\eta_\alpha.
\end{align}
It means that as we go from one sector to another with a different $\Omega^Z_{k,2l_0}$, some of the $\eta_\alpha$ must change signs! A similar statement holds for the $Z_{\alpha,L-1}\mc O^Z_{\alpha,L}$ operators. The Hamiltonian in this sector, is isomorphic to the following one in $\mathcal{L}_{\text{fic}}$,
\begin{align}
%H_{\rm bdry}^{\Omega^Z=1}=
H^{\rm I}_{\rm GI}(\eta_\alpha J_\alpha,h_i)+H^{\rm II}_{\rm GI}(\eta_\alpha J'_\alpha,h'_i).
\end{align}

%Physically, we may regard $\Omega^Z_{k,2l_0}$ as a boundary condition label for both $H^{\rm I}_{\rm GI}$ and $H^{\rm II}_{\rm GI}$. 

%Importantly, notice that these isomorphisms will necessarily flip the signs of some GI terms in the effective boundary Hamiltonian, both in $H^{\rm I}$ and $H^{\rm II}$, which may be regarded as changing the boundary condition of the boundary theory. The mapping rule for all the $\mc O^X$ operators on the two physical boundaries is unaltered, since these operators commute with the unitary maps constructed above. 

We have seen that $H_{\rm bdry}$ has a block-diagonal action on $\oplus_{a}\mathcal{L}_{\text{bdry},a}$, where $a$ is the sector index. In fact, there are no local operators (but only non-local ones) that can map between sectors, as well as commuting with all bulk Hamiltonian local terms. This is because any local operator commuting with the bulk Hamiltonian local terms can be generated by the set of boundary local terms considered above, as we mentioned previously and proved in Appendix \ref{app:PossibleBdryTerms}. 

Let us now discuss some examples. Consider the two-dimensional toric code model constructed in Eq.\,\ref{eq:ToricCodeModel} as a bulk theory for the standard one-dimensional Ising model with \emph{periodic boundary condition}. Following our prescription, we shall take open boundary condition along the vertical direction, while keep periodic boundary condition along the horizontal direction; see Fig.\,\ref{fig:TFIMBulkLattice}. 
The Ising model has only one $U^X$ operator, given by $U^X=\prod_iX_i$, and similarly, its standard dual model has only one $\Omega^Z$ operator given by $\Omega^Z=\prod_iZ_{i+1/2}$. Thus, from our discussion above, $\mc L_{\rm bdry}$ can be divided into two sectors with $\Omega^Z_{2l_0}=\pm1$. Each of the two sectors can be regarded as two spin chains, subject to the symmetry condition $U^X\otimes U^X=1$. The boundary Hamiltonian in one of the two sectors may be the sum of two Ising model Hamiltonians acting on the two fictitious spin chains, respectively, and with periodic boundary condition. Then the boundary Hamiltonian in the other sector will again be the sum of two Ising model Hamiltonians, but now with antiperiodic boundary condition, i.e. one Ising term on each of the two spin chains changes its sign. 
%{\color{blue}We believe this boundary theory is anomalous, meaning that it is not possible to realize the boundary theory with two \emph{decoupled} spin chains, each described by a local lattice model, such that local operators acting near the first (second) physical boundary of the toric code system correspond to local operators acting on the first (second) chain. }
%We say that \emph{each} of the two physical boundaries of the system is anomalous. This means that it is not possible to realize the boundary theory with two \emph{decoupled} spin chains, such that local operators acting on the first (second) chain correspond to local operators acting near the first (second) physical boundary of the toric code system. 

To give a complementary perspective, we may alternatively start from the standard one-dimensional Ising model with \emph{open boundary condition}, namely 
\begin{align}
H=-J\sum_{i=1}^{N-1}Z_iZ_{i+1}-h\sum_{i=1}^{N}X_i
\end{align}
defined on a chain of $N$ spins labeled by $1,2,\cdots,N$. Again, the model has one $\mathbb{Z}_2$ symmetry generator $U^X=\prod_{i=1}^NX_i$. 
Its standard dual theory is 
\begin{align}
H'=&-J\sum_{i=1}^{N-1}Z_{i+1/2}\nonumber\\
&-h\left( X_{3/2}+\sum_{i=2}^{N-1}X_{i-1/2}X_{i+1/2} + X_{N-1/2} \right)  
\end{align}
defined on a chain of $N-1$ spins labeled by $3/2,5/2,\cdots,N-1/2$. This dual model has \emph{no} $\mathbb{Z}_2$ symmetry at all! One can construct the bulk theory accordingly, which now lives on a lattice with left and right boundaries. We can take open boundary condition along the vertical direction as well, and analyze its boundary theory with the result established above. We see that the boundary Hilbert space contains only one sector, and can be regarded as two disconnected open spin chains under a $\mathbb{Z}_2$ symmetry projection $U^X\otimes U^X=1$. The boundary Hamiltonian may take the form of an Ising Hamiltonian on each of the two chains. The two effective open spin chains are not connected because our formalism does not allow any boundary terms on the left and right boundaries. This is actually just a matter of choice. We may redefine the bulk Hamiltonian by removing certain terms near the left and right boundaries. This will enlarge the boundary Hilbert space a bit, and allow boundary terms acting on the left and right boundaries. One can check that, with a rectangular geometry, the low-energy physics of the toric code model can be a one-dimensional Ising model defined on a closed chain with periodic boundary condition and the $\mathbb{Z}_2$ even projection, as discussed in Refs.\,\onlinecite{YouToricCodeEdge,WenjiePartitionFunction,WenjieCategorical}. 

We have seen that when $n_X\geq 1$, there are the symmetry projections $U^X_k\otimes U^X_k=1, k =1,\cdots,n_X.$ When $m_Z\geq 1$, the boundary conditions for $H^{\rm I}_{\rm GI}$ and $H^{\rm II}_{\rm GI}$ in the effective boundary Hamiltonian will simultaneously change as we alter the values of the $\Omega^Z_{k,2l_0}$ operators. Either phenomenon implies the boundary theory to be anomalous. Conversely, when $n_X+m_Z=0$, the whole boundary Hilbert space $\mc L_{\rm bdry}$ is simply isomorphic to $\mc L_G\otimes \mc L_G$, on which the effective boundary Hamiltonian takes the form of Eq.\,\ref{eq:EffBdryH}, or more generally consists of local terms generated by those in Eq.\,\ref{eq:EffBdryH}. This is a nonanomalous theory. 
%the boundary theory is equivalent to two copies of the GI model, possibly with different $J,h$ coefficients, and with gauge invariance being the only constraint, which is a nonanomalous theory. 
Therefore, we reach the following conclusion.

\textbf{Necessary and sufficient condition for an anomalous boundary.} In the special case where $\mc O^X=X$ and $\Delta^Z=Z$, non-invertible anomaly on the boundary exists if and only if
\begin{align}
    n_X+m_Z\geq 1, 
\end{align}
or equivalently, the GSD of the bulk model with periodic boundary condition along the out-of-layer direction satisfies
\begin{align}
    \label{eq:condition_gsd}
    \operatorname{GSD}>1.
\end{align}
That is, we can always build a bulk model on a lattice with odd layers, such that its boundary has non-invertible anomaly, that can be {matched} by a GI model with distinct sectors of Hilbert space. The equivalent condition (\ref{eq:condition_gsd}) follows from that in the case $\mc O^X=X$ and $\Delta^Z=Z$, $\operatorname{GSD}=2^{n_X+m_Z}$. 

%\begin{claim} [To reword] Degenerate ground states on a torus $T^{d+1}$ implies the non-invertible anomaly on the boundary of $I\times T^d$, where $I=[0,0]$.
%\end{claim}

\subsection{Less simple situation: $\Delta^Z=Z$}\label{sec:BBCLessSimple}
Next, we consider the less simple situation where $\mc O^X_i$ are general but $\Delta^Z_\alpha=Z_\alpha$. That is, we adopt the standard dual theory. This includes the model in Eq.\,\ref{eq:XCubeModel} with the X-cube fracton order that we constructed as a bulk theory for Eq.\,\ref{eq:XCubeBdryTheory}. 

The analysis of the boundary theory is similar to the simplest case. We track how non-local operators $\Omega_m^Z$, $U_{n}^Z$ and $U_{n'}^X$ manifest on the boundary $\mc L_{\rm bdry}$.

Again, we divide $\mc L_{\rm bdry}$ into different sectors. These sectors are now labeled by the eigenvalues of not only $\Omega^Z_{k,2l_0}$ ($k=1,\cdots,m_Z$), but also $U^Z_{k,2l-1}$ ($k=1,\cdots,n_Z$) for all the \emph{internal} layers, namely $3\leq 2l-1\leq L-2$. The $\mc L_{{\rm bdry},0}$ sector, defined by $\Omega^Z_{k,2l_0}=1$ and $U^Z_{k,2l-1}=1$ for all the internal layers, is isomorphic to a fictitious space of the same form as Eq.\,\ref{eq:FictitiousSpaceSimplest}. 
Now, states in $\mc L_{{\rm bdry},0}$ are labeled by the eigenvalues of all the $\mc O^X$ and $U^Z$ operators acting on the two boundary layers, subject to the constraints $G^X_{s,1}=G^X_{s,L}=1$ and $U^X_{k,1}U^X_{k,L}=1$. As in the previous situation, $U^X_{k,1}U^X_{k,L}$ is generated by the $X$-suspension operators. The operator map from $\mc L_{{\rm bdry},0}$ to $\mc L_{\rm fic}$ is again 
\begin{align}
    &\mc O^X_{i,1}\mapsto \mc O^X_i\otimes 1,~~\mc O^X_{i,L}\mapsto 1\otimes \mc O^X_i,\nonumber\\
    &\mc O^Z_{\alpha,1}Z_{\alpha,2}\mapsto \mc O^Z_{\alpha}\otimes 1,~~Z_{\alpha,L-1}\mc O^Z_{\alpha,L}\mapsto 1\otimes \mc O^Z_{\alpha},  
    \label{eq:OpMap2FicSpaceMainTextLessSimple}
\end{align}
thus the $\mc L_{{\rm bdry},0}$ block of $H_{\rm bdry}$ takes the same form as Eq.\,\ref{eq:EffBdryH}. 

Other sectors of $\mc L_{\rm bdry}$ can again be analyzed by establishing isomorphisms to $\mc L_{{\rm bdry},0}$, and thus to $\mc L_{\rm fic}$. The eigenvalue of each $\Omega^Z_{k,2l_0}$ can be adjusted without affecting any $U^Z_{k,2l-1}$ by the operator $\prod_{l=1}^{(L-1)/2}\Theta^X_{k,2l}$ whose definition is the same as that in Section \ref{sec:BBCSimplest}. The eigenvalues of the internal-layer $U^Z$ operators can be altered with some $X$-type operators that commute with not only the bulk Hamiltonian, but also with $\mc O^Z_{\alpha,1}Z_{\alpha,2}$ and $Z_{\alpha,L-1}\mc O^Z_{\alpha,L}$, and thus act trivially on the effective boundary Hamiltonian. That is, eigenvalues of $U^Z$ labels extra degeneracy of the boundary model that {are unrelated to} symmetry charge projections or boundary conditions. More explicit description of such operators is given in Appendix \ref{app:BBCSimpleCase}. 

Suppose for a nonempty subset $\mc M\subset \{1,2,\cdots,m_Z\}$, $\Omega^Z_{\mc M}\equiv\prod_{k\in \mc M}\Omega^Z_k$ is dual to the identity modulo the $G^Z$ operators. One can show that altering the eigenvalue of $\Omega^Z_{\mc M,2l_0}$ will necessarily flip the signs of some $\mc O^Z$ terms in both $H^{\rm I}_{\rm GI}$ and $H^{\rm II}_{\rm GI}$, with a proof essentially the same as that in Section \ref{sec:BBCSimplest}. 
%A similar statement holds if we replace $\Omega^Z_k$ by a product of several $\Omega^Z$ operators. 
We also prove in Appendix \ref{app:BBCSimpleCase} that, if such $\mc M$ does not exist, and at the same time $n_X=0$, the boundary theory is nonanomalous. That is to say that the boundary theory in this case is a direct sum of \emph{identical} sectors. The Hilbert space of each sector is isomorphic to $\mc L_G\otimes \mc L_G$ with the operator mapping rule in Eq.\,\ref{eq:OpMap2FicSpaceMainTextLessSimple}. The effective boundary Hamiltonian of each sector takes the form of Eq.\,\ref{eq:EffBdryH}, or more generally consists of local terms generated by those in Eq.\,\ref{eq:EffBdryH}. 
%each equivalent to two copies of the GI model, possibly with different $J,h$ coefficients, and with gauge invariance being the only constraint. 

\textbf{Necessary and sufficient condition for an anomalous boundary.} We conclude that, in the special case where $\Delta^Z=Z$, the boundary theory is anomalous if and only if \emph{either} of the following two conditions is satisfied: 
\begin{enumerate}
	\item $n_X\geq 1$, so that there are the symmetry charge constraints $U^X_k\otimes U^X_k=1$. 
	\item For some nonempty subset $\mc M\subset \{1,2,\cdots,m_Z\}$, $\prod_{k\in \mc M}\Omega^Z_k$ is dual to the identity modulo the $G^Z$ operators.
	%The dual model has a nonlocal $Z$-type symmetry generator independent from $\Gamma^Z$'s, namely a product of several $\Omega^Z$ operators, that is dual to the identity modulo the $G^Z$ operators. 
\end{enumerate}

\begin{figure}
	\centering
	\includegraphics{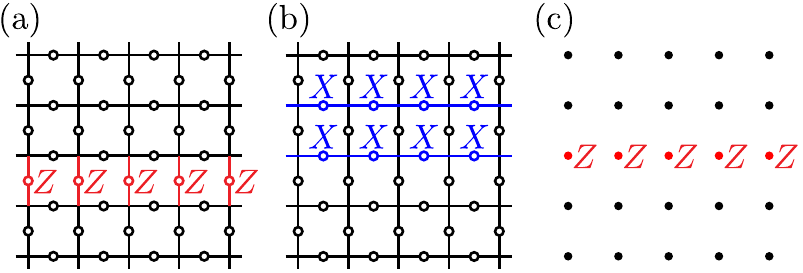}
	\caption{Examples of (a) a $U^Z$ operator, (b) a $U^X$ operator, and (c) an $\Omega^Z$ operator of the bulk model in Eq.\,\ref{eq:XCubeModel}, all viewed from $z$ direction. }
	\label{fig:XCubeExampleSymmOps}
\end{figure}

\textbf{The boundary of the X-cube model.} Now we illustrate with the example of the model in Eq.\,\ref{eq:XCubeModel}, which describes the X-cube fracton order, and is constructed from Eq.\,\ref{eq:XCubeBdryTheory} and its standard dual in Eq.\,\ref{eq:DualPIsingModel}. We put the model on a 3D cubic lattice with $L_x\times L_y\times L_z$ number of \emph{vertices}, with periodic boundary condition along $x$ and $y$, and open boundary condition along $z$. The two boundary surfaces are ``smooth'', and $L$ is related to $L_z$ by $L=2L_z-1$ (the height of the system is $L_z-1$ number of lattice constants). We label the lattice vertices by integer coordinates $\bs r=(x,y,z)\in \mathbb{Z}^3$ such that $x\sim x+L_x$, $y\sim y+L_y$, and $1\leq z\leq L_z$. We denote by $Z(\bs r;\Delta\bs r)$ the Pauli $Z$ operator acting on the link connecting the two neighboring vertices $\bs r$ and $\bs r+\Delta \bs r$ with $\Delta \bs r=\hat x,\hat y,\hat z$; similar for Pauli $X$ operators. Symmetries of the two-dimensional models have been described in words previously. A careful analysis of degeneracy relations shows that $n_Z=2$, $n_X=L_x+L_y-2$, $m_X=0$, and $m_Z=L_x+L_y-1$. More explicitly, the $U^Z$ operators acting on the $(2z-1)$-th layer can be chosen as 
\begin{align}
&\prod_{x=1}^{L_x}Z(x,y_0,z;\hat y)\quad \text{for some $y_0$},\label{eq:XCubeUZOps1}\\
\text{and}\quad&\prod_{y=1}^{L_y}Z(x_0,y,z;\hat x)\quad \text{for some $x_0$}. \label{eq:XCubeUZOps2}
\end{align}
See Fig.\,\ref{fig:XCubeExampleSymmOps}a for an example. 
The $U^X$ operators acting on the $(2z-1)$-th layer can be chosen as 
\begin{align}
	&\prod_{x=1}^{L_x}X(x,y,z;\hat x)X(x,y+1,z;\hat x)\nonumber\\
	&(y=1,2,\cdots,L_y-1), 
	\label{eq:XCubeUXOps1}\\
	\text{and}\quad&\prod_{y=1}^{L_y}X(x,y,z;\hat y)X(x+1,y,z;\hat y)\nonumber\\
	&(x=1,2,\cdots,L_x-1),  
\label{eq:XCubeUXOps2}
\end{align}
where we have excluded $y=L_y$ in \eqref{eq:XCubeUXOps1} and $x=L_x$ in \eqref{eq:XCubeUXOps2} because they are not independent. See Fig.\,\ref{fig:XCubeExampleSymmOps}b for an example. 
The $\Omega^Z$ operators acting on the $2z_0$-th layer can be chosen as 
\begin{align}
	&\prod_{x=1}^{L_x}Z(x,y,z_0;\hat z)\quad (y=1,2\cdots,L_y), \label{eq:XCubeOmegaZOps1}\\
	\text{and}\quad	
	&\prod_{y=1}^{L_y}Z(x,y,z_0;\hat z)\quad (x=1,2\cdots,L_x-1), \label{eq:XCubeOmegaZOps2}
\end{align}
where we have excluded $x=L_x$ in \eqref{eq:XCubeOmegaZOps2} because it is not independent. See Fig.\,\ref{fig:XCubeExampleSymmOps}c for an example. 
According to our general theory, $\mc L_{\rm bdry}$ can be divided into $2^{m_Z+n_Z(L-3)/2}$ number of sectors. The boundary theory in $\mc L_{{\rm bdry},0}$ may be two copies of the model in Eq.\,\ref{eq:XCubeBdryTheory}, subject to the $n_X$ number of symmetry projections $U^X_k\otimes U^X_k=1$. Other sectors are all isomorphic to $\mc L_{{\rm bdry},0}$, and the isomorphisms may flip the signs of some GI terms in both $H^{\rm I}_{\rm GI}$ and $H^{\rm II}_{\rm GI}$ in the effective boundary Hamiltonian. 
More explicitly, the eigenvalue of each $U^Z$ operator acting on the $(2z-1)$-th layer with $3\leq 2z-1\leq L-3$ can be independently flipped by the operators
\begin{align}
	&\prod_{y=1}^{L_y}X(x_1,y,z;\hat y)\quad\text{for any fixed $x_1$},\\
	\text{and}\quad 
	&\prod_{x=1}^{L_x}X(x,y_1,z;\hat x)\quad\text{for any fixed $y_1$}, 
\end{align}
which correspond to the two operators in \eqref{eq:XCubeUZOps1} and \eqref{eq:XCubeUZOps2}, respectively. It is not generically true that the eigenvalues of the internal-layer $U^Z$ operators can be adjusted with single-layer operators as in this example, but it is indeed true that these adjustment operators can be chosen to commute with $H_{\rm bdry}$. 
Given some fixed even layer $2z_0$, one can independently flip the signs of the $\Omega^Z$ operators acting on this layer by the string operators 
\begin{align}
	&\prod_{z=1}^{L_z-1}X(L_x,y,z;\hat z)\quad(y=1,2,\cdots,L_y),\\
	\text{and}\quad
	&\prod_{z=1}^{L_z-1}X(L_x,y_2,z;\hat z)X(x,y_2,z;\hat z)\nonumber\\
	&(x=1,2,\cdots,L_x-1,~\text{$y_2$ arbitrary}), 
\end{align}
which are in one-to-one correspondence with the operators in \eqref{eq:XCubeOmegaZOps1} and \eqref{eq:XCubeOmegaZOps2}. Each of the above anticommutes with some terms in $H_{\rm bdry}$. For example, the string operator $\prod_{z=1}^{L_z-1}X(L_x,y,z;\hat z)$ anticommutes with $Z(L_x-1,y,1;\hat x)Z(L_x,y,1;\hat x)Z(L_x,y,1;\hat z)$ and $Z(L_x-1,y,L_z;\hat x)Z(L_x,y,L_z;\hat x)Z(L_x,y,L_z-1;\hat z)$. One can verify that the two conditions for anomaly are both satisfied, thus the boundary theory of this model is indeed anomalous.

\subsection{The most general situation}\label{sec:BBCMostGeneral}
Now we briefly discuss the most general situation: no further assumption on either $\mc O^X$ or $\Delta^Z$. %The result is very similar to the previous cases. 

We may again divide $\mc L_{\rm bdry}$ into several sectors, which are now labeled by the eigenvalues of $U^Z_{k,2l-1}$ for all internal layers ($3\leq 2l-1\leq L-2$), $\Omega^X_{k,2l}$ for all $l$, and $\Omega^Z_{k,2l_0}$ for some fixed layer $2l_0$, as they are non-local operators commuting with the bulk Hamiltonian. A caveat is that the eigenvalues of either the $U^Z$ operators or the $\Omega^X$ operators may not be totally independent. It may happen that a product of several $Z$-suspension operators centered on some internal odd layer $2l-1$ equals to a nonlocal $Z$-type symmetry generator (independent of $G^Z$'s) acting on that layer. This will induce some relation between the $U^Z$ operators on the same internal layer. In terms of the $d$-dimensional GI model, this means that a product of several $\mc O^Z$ operators equals to a nonlocal $Z$-type symmetry generator (independent of $G^Z$'s) and is dual to the identity.
A similar possibility exists for the $\Omega^X$ operators. These possible degeneracy relations reduce the apparent number of sectors in $\mc L_{\rm bdry}$. Without loss of generality, we may assume there are some integers $\nu$ and $\mu$, such that the many sectors of $\mc L_{\rm bdry}$ are labeled by the \emph{independent} eigenvalues of $U^Z_{k>\nu,2l-1}$ for all internal layers, $\Omega^X_{k>\mu,2l}$ for all $l$, and $\Omega^Z_{k,2l_0}$. 

As before, we define $\mc L_{{\rm bdry},0}$ to be the sector where the $U^Z$, $\Omega^X$ and $\Omega^Z$ operators just mentioned all equal to $1$. One can show that $\mc L_{{\rm bdry},0}$ is again isomorphic to the fictitious space in Eq.\,\ref{eq:FictitiousSpaceSimplest} with a similar operator mapping: 
\begin{align}
    &\mc O^X_{i,1}\mapsto \mc O^X_i\otimes 1,~~\mc O^X_{i,L}\mapsto 1\otimes \mc O^X_i, \nonumber\\
    &\mc O^Z_{\alpha,1}\Delta^Z_{\alpha,2}\mapsto \mc O^Z_{\alpha}\otimes 1,~~\Delta^Z_{\alpha,L-1}\mc O^Z_{\alpha,L}\mapsto 1\otimes \mc O^Z_{\alpha}. 
    \label{eq:OpMap2FicSpaceMainTextGeneral}
\end{align}
The $\mathbb{Z}_2^{n_X}$ symmetry projection exists because $U^X_{k,1}U^X_{k,L}$ can be generated by the $X$-suspension operators, the $\Gamma^X$ operators, and the $\Omega^X$ operators. 
%Thus, the $\mc L_{{\rm bdry},0}$ block of the boundary Hamiltonian also takes the same form when represented in the fictitious space. 

Other sectors are all isomorphic to $\mc L_{{\rm bdry},0}$, and thus to $\mc L_{\rm fic}$. 
The discussions for the $\Omega^Z$ and internal-layer $U^Z$ operators turn out to be very similar to the $\Delta^Z=Z$ case, and will not be repeated. The eigenvalue of $\Omega^X_{k>\mu,2l}$ can be adjusted with a $Z$-type operator that may anticommute with some $\mc O^X_{i,1}$ operators, and thus may flip the signs of some $\mc O^X$ terms in $H^{\rm I}_{\rm GI}$ while leaving $H^{\rm II}_{\rm GI}$ invariant. We refer the readers to Appendix \ref{app:BBCGeneralCase} for a more explicit description of those isomorphisms.

The change of eigenvalues of $\Omega_{k>\mu,2l}^X$ may conflict with the global conditions $U_{k'}^X\otimes U_{k'}^X=1$ in the following sense. Changing the signs of certain $\mc O^X$ terms in a GI model is equivalent to altering some $X$-type symmetry charges, since $U_{k'}^X$ for all $1\leq k'\leq n_X$ is a product of several $\mc O^X$ operators \footnote{Think about flipping the sign of a single transverse-field term in the one-dimensional transverse field Ising model. The sign-flip can be undone via a basis rotation. Nevertheless, the rotation anti-commutes with the $\mathbb{Z}_2$ symmetry generator.}. In order to have an anomalous boundary, we expect some of the symmetry charge projections, such as $U_{k'}^X\otimes U_{k'}^X=1$ for some $k'$, should hold in all sectors of the boundary theory. 

%The change of eigenvalues of $\Omega_{k>\mu,2l}^X$ need to be consistent with the global condition $U_{k'}^X\otimes U_{k'}^X=1$ in the following sense. Changing the signs of certain $\mc O^X$ terms in a GI model can also alter some $X$-type symmetry charges, since $U_{k'}^X$, for all $1\leq k'\leq n_X$ is a product of several $\mc O^X$ operators \footnote{Think about flipping the sign of a single transverse-field term in the standard 1d transverse field Ising model. The sign-flip can be undone via a basis rotation. Nevertheless, the rotation anti-commutes with the $\mathbb{Z}_2$ symmetry generator.}. If the boundary theory is expected to be anomalous \textcolor{blue}{ with well-defined eigenvalues of $U_{k'}^X\otimes U_{k'}^X$ in each sector}, it means that typically the signs of $\mc O^X$ terms in $H^{\rm I}_{\rm GI}$ are not allowed to be altered. 

Fortunately, we prove in Appendix \ref{app:BBCGeneralCase} the following result: If for a nonempty subset $\mc N\subset\{1,2,\cdots,n_X\}$, $U^X_{\mc N}\equiv\prod_{p\in\mc N}U^X_{p}$ can be written as a product of $\mc O^X$ operators such that the product is dual to the identity modulo the $\Gamma^X$ operators, then altering the eigenvalue of $\Omega^X_{k>\mu,2l}$ does not affect the corresponding symmetry charge projection $U^X_{\mc N}\otimes U^X_{\mc N}=1$.  A clue for the claim is that $U^X_{\mc N,1}U^X_{\mc N,L}$ is a product of the $X$-suspension and $\Gamma^X$ operators, and thus does not depend on the value of $\Omega^X_{k>\mu,2l}$. 

The necessary and sufficient condition for an anomalous boundary in the most general situation is given in Claim \ref{clm:anomaly}. The first sufficient condition about $U^X$ operators is discussed above. The second sufficient condition about $\Omega^Z$ operators is a direct generalization of the one in the previous subsection and can be proved analogously. When both sufficient conditions are violated, we prove in Appendix \ref{app:BBCGeneralCase} that the boundary theory is nonanomalous. More precisely, the boundary theory in this case is a direct sum of \emph{identical} sectors \footnote{Note that each (new) sector here may be the sum of several (old) sectors discussed above with different values of $U^X_k\otimes U^X_k$. }. The Hilbert space of each sector is isomorphic to $\mc L_G\otimes \mc L_G$ with the operator mapping rule in Eq.\,\ref{eq:OpMap2FicSpaceMainTextGeneral}. The effective boundary Hamiltonian of each sector takes the form of Eq.\,\ref{eq:EffBdryH}, or more generally consists of local terms generated by those in Eq.\,\ref{eq:EffBdryH}. 

We would like to also remind the readers that each $U^X$ ($\Omega^Z$) operator can always be expanded as a product of some $\mc O^X$ ($\Delta^Z$) operators, but there may be multiple ways of doing it. The first (second) condition in Claim \ref{clm:anomaly} holds as long as there is one expansion of $U^X_{\mc N}$ ($\Omega^Z_{\mc M}$) in terms of the $\mc O^X$ ($\Delta^Z$) operators such that the requirement is satisfied. 

\section{Examples}\label{sec:Application}
We have shown that our basic construction can generate a bulk model with prototypes of topological orders, such as $\mathbb Z_2$ topological order in any dimensions greater than $2$ and the X-cube model. Now let us explore further examples. They exploit the capacity of our construction: (1) the construction can produce lattice gauge theories whose gauge group is beyond $\ZZ_2$, (2) the construction can provide a bulk topological order from a GI model describing a symmetry protected topological (SPT) phase, (3) the construction can provide bulk topological orders with only quasi-loop excitations, (4) the same GI model can be matched with more than one bulk model with distinct fracton orders. In other words, when fracton order is present, the boundary cannot determine a unique bulk.

%\subsection{Bulk topological order from $\ZZ_2\times \ZZ_2$ spin models}
%\paragraph{The bulk model with $\ZZ_2\times \ZZ_2$ topological order}
\subsection{Bulk $\ZZ_2\times \ZZ_2$ topological order in two dimensions}
The following example shows that the bulk construction can produce bulk topological orders whose gauge group is beyond $\ZZ_2$. The GI model we start with is a one-dimensional model with two global $X$-type symmetries.
\begin{align}
    H_{\text{GI}}=-J\sum_{i=1}^{N_x} Z_{i-1}Z_{i}Z_{i+1} -h\sum_{i=1}^{N_x}X_i
\label{eq:GI_Z2Z2}
\end{align}
The $\mathbb Z_2\times \mathbb Z_2$ symmetry is generated by $\prod_{i}X_{3i}X_{3i+1}$ and $\prod_i X_{3i+1}X_{3i+2}$.

The standard dual model obtained from the generalized Kramers Wannier duality is the following, 
\begin{align}
    H'_{\text{GI}} = -J\sum_{i} Z_i + h\sum_{i}X_{i-1}X_iX_{i+1}.
\label{eq:GI_dual_Z2Z2}
\end{align}

The bulk model generated through our construction is the following.
\begin{align}
    H = &-\sum_{i=1}^{N_x}\sum_{j=1}^{N_y/2} \left(A_{i,2j-1}+B_{i,2j}\right),\\
    A_{i,j}=&Z_{i,j-1}Z_{i-1,j}Z_{i,j}Z_{i+1,j}Z_{i,j+1},\nonumber \\
    B_{i,j}=&X_{i,j-1}X_{i-1,j}X_{i,j}X_{i+1,j}X_{i,j+1}. \nonumber
    \label{eq:bulk_Z2Z2_0}
\end{align}

We prove in the appendix that this model is topologically ordered. The GSD on a torus is $2^4$.  Alternatively, we may obtain the GSD from observing that the Hamiltonian local terms satisfy in total two constrains,
\begin{align}
    &\prod_{j=1}^{N_y/2}\sum_{i=1}^{N_x/3} A_{3i,j}A_{3i+1,j} = 1,\\
    &\prod_{j=1}^{N_y/2}\sum_{i=1}^{N_x/3} A_{3i+1,j}A_{3i+2,j} = 1.
\end{align}

In other words, the anyon theory of this stabilizer code has total quantum dimension $D=2^2$. In fact, the anyon theory of the bulk model is the $\ZZ_2\times \ZZ_2$ topological order, equivalent to that of the stack of two toric code models. This is due to a proof showing that the topological phase of any 2d stabilizer code on qubits \emph{with translation symmetry} is uniquely determined by its total quantum dimension $D=2^n$. Its anyon theory is the same as that of $n$ copies of toric code.\cite{bombin2012universal} In our case, $n=2$. 

The phase diagram of the model (\ref{eq:GI_Z2Z2}) is simple. When $J\geq h$, the ground state breaks both $\ZZ_2$ symmetries spontaneously, and when $0<J\leq h$, the ground state is the trivial $\ZZ_2\times \ZZ_2$ symmetric state. Comparing (\ref{eq:GI_Z2Z2}) and (\ref{eq:GI_dual_Z2Z2}) we observe that the model (\ref{eq:GI_Z2Z2}) enjoys a self-duality at $J=h$. In fact, the ground state at $J=h$ is the critical state described by the 4-state Potts model, whose central charge equals to $1$. \cite{Blote1986MultiSpin,Vanderzande1987,Igloi1987,Alcaraz1987} Due to the bulk-boundary correspondence, all these phases appear as well on the boundary of $\mathbb{Z}_2\times \mathbb{Z}_2$ topological order.

%\paragraph{The $\ZZ_2\times \ZZ_2$ topological order from a SPT model}
\subsection{Bulk topological order from a symmetry protected topological model}
The one-dimensional spin systems with $\ZZ_2\times \ZZ_2$ symmetry have more phases than those described by the model (\ref{eq:GI_Z2Z2}). Particularly, there is a SPT phase which is usually described by the following stabilizer models, 
\begin{align}
    H = -\sum_i Z_{2i}Z_{2i+1}Z_{2i+2}-\sum_i X_{2i-1}X_{2i}X_{2i+1}. 
    \label{eq:Z2Z2SPT}
\end{align}
In this convention, the symmetry is generated by
\begin{align}
U^Z_{\text{odd}}=\prod_i Z_{2i+1},~~U^X_{\text{even}}=\prod_i X_{2i}.
\end{align}

We ask if we can obtain a solvable model with a topological order adapting the bulk construction to SPT models. Indeed, we find that given a minimal variation to the bulk construction, we obtain a two-dimensional $\ZZ_2\times \ZZ_2$ topological order from a GI model, in whose phase diagram, the SPT is one gapped phase.

To show this, we begin with the GI model. It is the Hamiltonian (\ref{eq:Z2Z2SPT}) with additional transverse field terms. 
\begin{align}
H_{\text{GI}} =& -J\left(\sum_i Z_{2i}Z_{2i+1}Z_{2i+2} +\sum_iX_{2i-1}X_{2i}X_{2i+1}\right)\nonumber \\
&- h_{\text{even}}\sum_i X_{2i}-h_{\text{odd}}\sum_i Z_{2i+1}.
\label{eq:GI_SPT}
\end{align}

The next step is to obtain the dual model, whose Hamiltonian is the following,
\begin{align}
    H_{\text{GI}}' = &-J\sum_i \left(Z_{2i+1}+X_{2i}\right)\nonumber \\
    &-h_{\text{even}}\sum_{i}X_{2i-1}X_{2i+1}-h_{\text{odd}}\sum_iZ_{2i}Z_{2i+2}.
    \label{eq:GI_SPT_dual}
\end{align}
Obviously, the dual model is the same as two copies of Ising models in transverse fields.

\textbf{Variant construction} Now to construct the bulk model, note that (\ref{eq:GI_SPT}) does not satisfy an assumption on the GI models -- the local operators $\{O_i^X\}$ and $\{G_r^Z\}=\emptyset$ here do not form a CSLO. The price is that the GSD in the bulk model we would obtain from the basic construction is not robust. Part of the degeneracy originates from symmetry breaking orders. 

Nevertheless, the violation is modest, and the construction, with a slight variation, can still generate a topological ordered bulk model. Let us give a minimal variation of the basic construction, essentially we modify the rule how we assign the terms across three layers to be centered on odd or even layers, based on the commutation relations of Hamiltonian local operators $O_\alpha^Z$'s and $O_i^X$'s. The variation is based on the observation that in (\ref{eq:GI_SPT}) $\{X_{2i}\}$ and $\{Z_{2i+1}\}$ actually form a CSLO. The spirit is that we now assign the three-layer Hamiltonian local terms built with these terms in a CSLO to be centered in the same (odd) layers. We elaborate on the prescription in Appendix \ref{app:variant_bulk}. In particular, we give a sufficient condition when the bulk model from the variant construction has robust ground state subspace. Applying to the current example with SPT order, the modification is enough to provide us a pure topologically ordered bulk model. 

The bulk Hamiltonian is 
%\begin{equation}
%\centering
%    \includegraphics[scale=0.7]{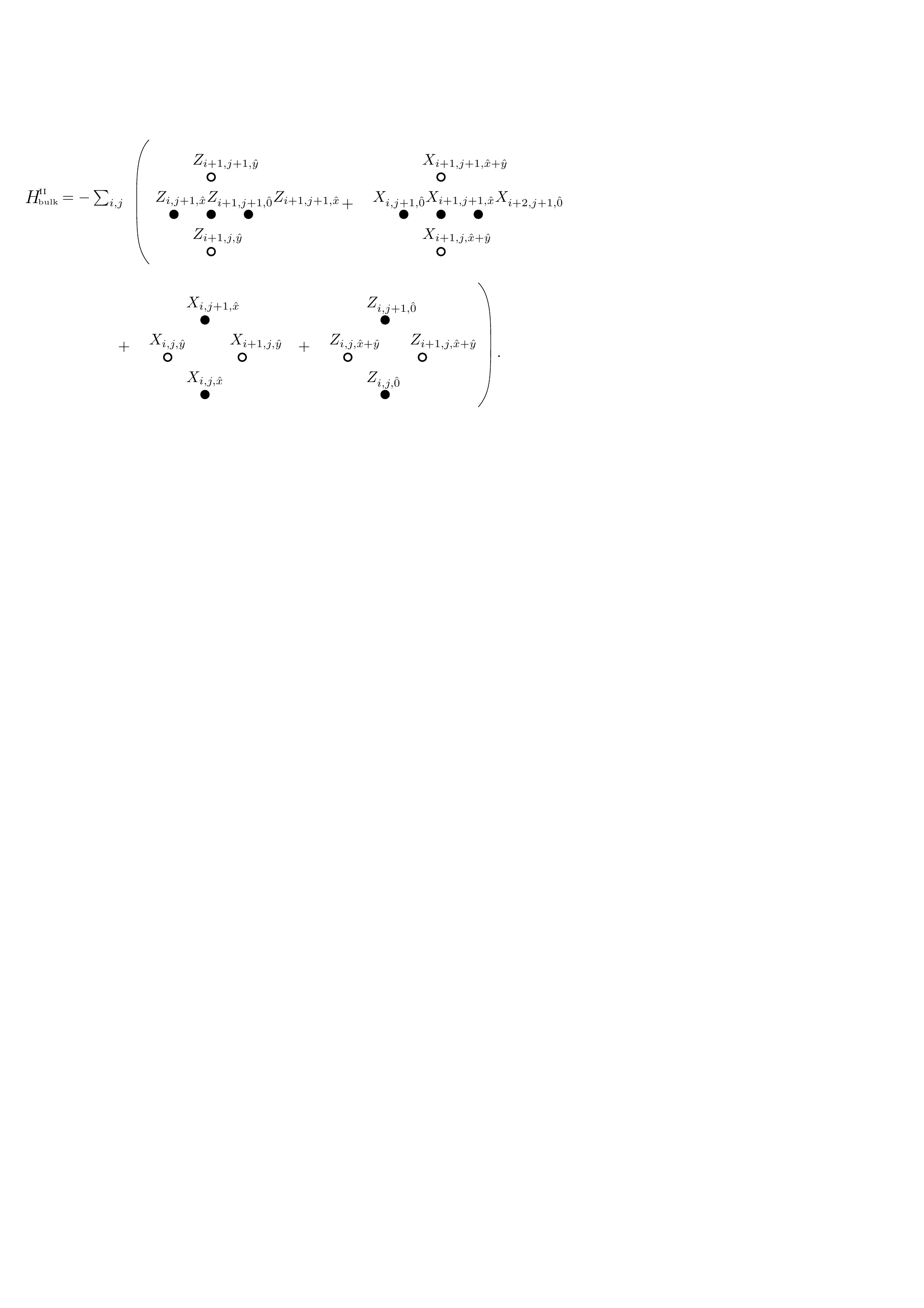}
%    \label{eq:bulk_Z2Z2}
%\end{equation}
%\lipsum[1]
\begin{widetext}
\begin{align}
\centering
    \includegraphics[width=\linewidth]{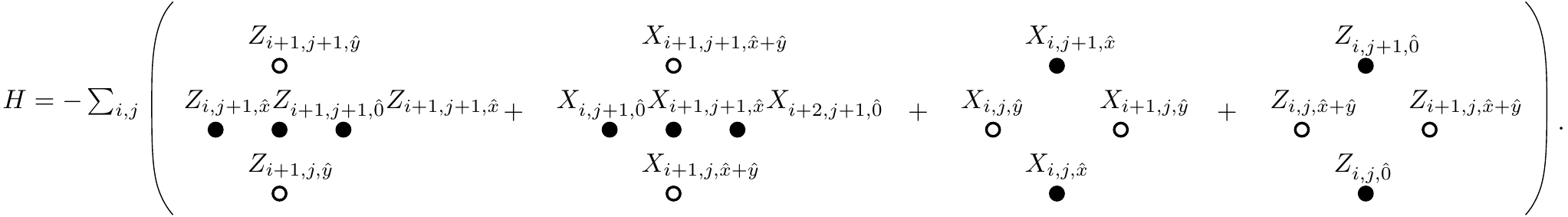}%scale=0.85
    %\label{eq:bulk_Z2Z2}
    \label{eq:bulk_Z2Z2}
\end{align}
\end{widetext}
%\lipsum[1]
This model has a topological order, and its GSD on a torus is $2^4$, when there are in total even number of layers. This can be derived through the standard polynomial representation, as we show in appendix \ref{app:polynomial_rep}. As the model is a stabilizer code with translation symmetry, we can conclude that its ground state is the $\ZZ_2\times \ZZ_2$ topologically ordered state. \cite{bombin2012universal}.

We show the phase diagram of the model (\ref{eq:GI_SPT}) with $\ZZ_2\times \ZZ_2$ symmetry in Fig.\ref{fig:GI_SPT_phasediagram}. The phase diagram as well as the critical phases is most easily determined from the dual model (\ref{eq:GI_SPT_dual}).

\begin{figure}[h]
    \centering
    \includegraphics[scale=0.85]{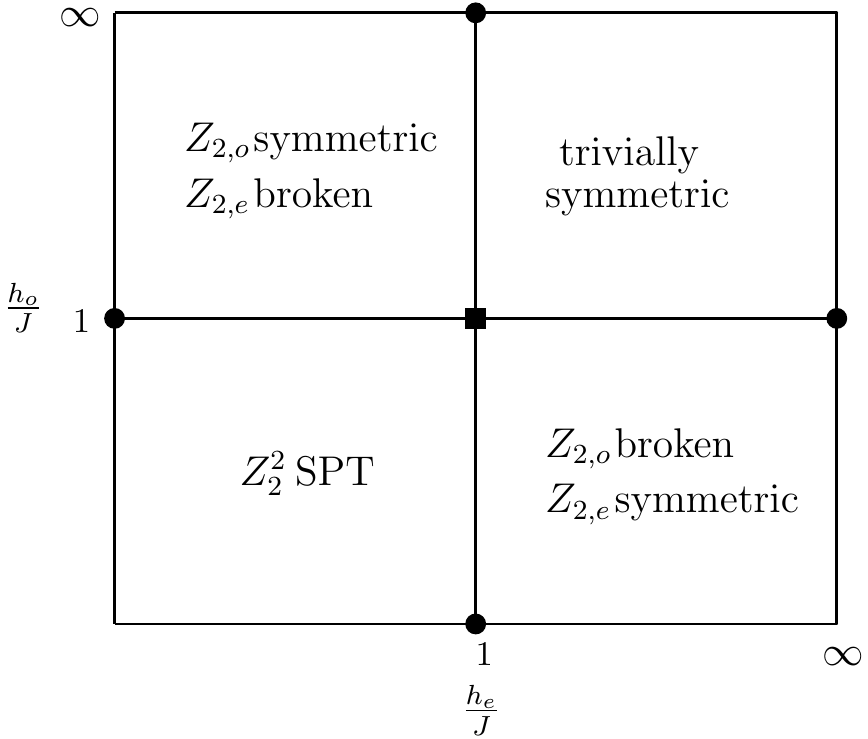}
    \caption{The phase diagram of the model (\ref{eq:GI_SPT}) with $\ZZ_2\times \ZZ_2$ symmetry. The black square locates the critical point described by free boson conformal field theory. All other transition between two gapped phase are described by the Ising conformal field theory.}
    \label{fig:GI_SPT_phasediagram}
\end{figure}

\subsection{Pure-loop toric code in four dimensions}\label{app:4DToricCode}
There is a toric code model in four dimensions that potentially can be used as a thermally stable quantum memory \cite{Dennis2002QuantumMemory,Alicki2008ThermalStability}. This comes from that the model has no point-like topological excitations, but only loop-like ones. Correspondingly, the effective topological field theory for the model, given by the action $S=\frac{2}{2\pi}\int adb$ in $5D$ Euclidean spacetime, where $a,b$ are two-forms, has two 2-form $\mathbb{Z}_2$ symmetries  generated by membrane operators $e^{\oint a}$ and $e^{\oint b}$. In this subsection, we show how this model comes out of our construction. 

Consider the GI model on a 3d cubic lattice, described by the following Hamiltonian
\begin{align}
	H=-J\left(\dia{3DTCPlaqTerm}{-19}{0}+\text{other orientations}\right)-h\dia{PIsing_TFTerm}{-2}{0}, 
	\label{eq:4DTCOriginalModel}
\end{align}
where qubits live on the links of a 3D cubic lattice with periodic boundary condition, and $p,l$ label plaquettes and links, respectively. The model is homogeneous in the three directions $x,y,z$. There is obviously no $Z$-type symmetry. To find the $X$-type symmetries, it is helpful to note that the four-$Z$ terms in $H$ also appear in the Hamiltonian of the three-dimensional toric code model. 
%Given each closed membrane that intersects with links, the product of Pauli $X$ operators on all the intersecting links generates an $X$-type symmetry. Moreover, any $X$-type symmetry generator is a product of such membrane operators. 
We can thus take the local $X$-type symmetry generators to be star operators of the following form, 
\begin{align}
	\dia{3DTCStarTerm}{-26}{0},\nonumber
\end{align}
namely products of six Pauli $X$ operators sharing a single vertex. The model has $n_X=3$. Take three planes that cut through links and are perpendicular to the $x,y,z$ directions, respectively. The three nonlocal symmetry generators can be taken as the products of Pauli $X$ operators on the links cut by these three planes, respectively. It is illuminating to view the model as having a 1-form symmetry: given each closed membrane that intersect with links, the product of Pauli $X$ operators on the intersecting links commutes with the Hamiltonian. 

We consider the standard dual theory of the above model. It is convenient to place the dual model on the same cubic lattice, but with qubits living on plaquettes. Each four-$Z$ term in $H$ is dual to a single-$Z$ term associated with the corresponding plaquette. Each single-$X$ term in $H$ is dual to the product of four Pauli $X$ operators on the plaquettes sharing the corresponding link. The dual model is actually equivalent to the original one up to the substitutions $J\leftrightarrow h$ and $Z\leftrightarrow X$. 

We can now construct the bulk model according to our prescription. To this end, it is convenient to imagine expanding each link in the original model \eqref{eq:4DTCOriginalModel} along the 4th spatial direction $w$, such that the links become plaquettes parallel to the $w$ axis. In this way, the constructed bulk model can be naturally placed on a 4-dimensional cubic lattice with qubits living on plaquettes. 
We can write the bulk Hamiltonian as 
\begin{align}
	H_{\rm bulk}=-\sum_l\prod_{\pa p\ni l}X_p-\sum_c\prod_{p\in\pa c}Z_p, 
\end{align}
where $l$, $p$, and $c$ label links, plaquettes (squares), and cubes, respectively. Each term in the first sum is the product of six Pauli-$X$ operators on plaquettes common to a link, and each term in the second sum is the product of six Pauli-$Z$ operators on plaquettes forming a cube. 

\subsection{Same anomalous boundary for two distinct bulks}

We have seen that the two-dimensional plaquette Ising model \eqref{eq:OriginalPIsing} has two different dual models: one is Eq.\,\ref{eq:DualPIsingModel}, and the other is Eq.\,\ref{eq:XCubeBdryTheory} with the substitutions $X\leftrightarrow Z$ and $J\leftrightarrow h$. Periodic boundary condition has been assumed for these models in our previous discussion. As a result, two distinct bulk models can be constructed: one with the Hamiltonian Eq.\,\ref{eq:PIsingBulk} hosts point-like quasiparticle restricted to move along inter-layer direction; and the other, whose Hamiltonian is Eq.\,\ref{eq:XCubeModel} with the substitution $X\leftrightarrow Z$, has X-cube fracton order. This example suggests that two different fracton models can share the same boundary theory. 

Indeed, let us construct one boundary theory that can terminate the above two distinct bulk models. The literal boundary theory with periodic boundary condition along the intra-layer directions does not work straightforwardly. This is because the boundary Hilbert space $\mc L_{\rm bdry}$ for the two bulk models have different numbers of sectors. To get around, we take the \emph{open boundary conditions} along intra-layer directions. In result, $\mc L_{\rm bdry}$ has a single sector for both models. 

Now we describe the construction. We place the two-dimensional plaquette Ising model on a 2D open square lattice with $L_x\times L_y$ sites, with the same Hamiltonian \eqref{eq:OriginalPIsing} (only include terms that are completely inside the system). We take the first dual model to be its standard dual, defined on an open square lattice with $(L_x-1)\times (L_y-1)$ sites. One can verify that this dual model has no $\mathbb{Z}_2$ symmetry at all. We define the second dual model on an open square lattice of the same size ($L_x\times L_y$ number of vertices), but now with qubits living on the links. As in the periodic case, we take the dual of each GI term $\mc O^Z_\alpha$ of the plaquette Ising model to be the product of four Pauli-$Z$ operators around a plaquette. The $X$-type gauge symmetries of the dual model are generated by the product of all Pauli-$X$ operators around each vertex. The dual of each transverse-field term $X_i$ of the plaquette Ising model is uniquely determined modulo the gauge symmetry terms. One can verify that this second dual model has no $Z$-type symmetry at all. We can then construct two bulk models according to the two dual theories, and take open boundary condition along the out-of-layer direction as well. The boundary Hilbert space $\mc L_{\rm bdry}$ for both models, following the treatment in the Section \ref{sec:Boundary}, only has one sector now. %\WJ{I'm slightly hesitating. Taking OBC along intra-layer directions, the boundary would be a single piece (the same topology as a $S^2$). It seems we need to explain how there is no DOF on the four surfaces long x-z and x-y planes of the boundary.} 

We note that although open boundary condition is taken along the intra-layer directions, our construction actually forbids boundary terms on the four surface planes perpendicular to the layers, according to our result in Appendix \ref{app:PossibleBdryTerms}. The $X$-suspension, $Z$-suspension, and gauge symmetry operators (``bulk'' operators) near these surface areas have coefficients as large as those deep inside the bulk, and therefore fix all local degrees of freedom there. With this somewhat artificial setup, the boundary theory only contains local degrees of freedom near the 1st and the $L$-th layers. 

Consequently, we can have exactly the same anomalous boundary theory for these two models: two copies of the two-dimensional plaquette Ising model (possibly with different $J$ and $h$ coefficients), subject to the symmetry projections $U^X_k\otimes U^X_k=1$ for $k=1,2,\cdots,n_X$ where $n_X=L_x+L_y-1$.

\section{Conclusions and Discussions}
In this paper, we systematically construct $d+1$-dimensional commuting projector models with long-range orders from $d$-dimensional GI models -- qubit lattice models with non-commuting Hamiltonian local terms. The simplicity of the construction allows us to analyze the precise correspondence between the boundary of the long-range ordered state and the GI model. Under a certain condition, the boundary model is subject to global constraints, which are either symmetry charge projections, and/or boundary conditions, implying that the boundary theory to be anomalous. Furthermore, the anomalous boundary model is isomorphic to two copies of the GI models from which the bulk model is constructed, also subject to global constraints. The condition for the anomaly is a surprising requirement on the non-local symmetries in either the GI model or its dual model. This is in contrary to the most common intuition that for \emph{all} non-local symmetries appearing on the boundary of a long-range ordered state, up to those in the system trivially stacked onto the boundary, only the charge-neutral states under the symmetry contribute to the boundary Hilbert space.

%In this paper, we systematically construct examples of noninvertible anomalies by building solvable bulk theories for GI models -- models on qubit . In our analysis of the bulk-boundary correspondence, the boundary Hamiltonian consists of local terms near two detached boundary layers. As a result, the effective boundary theory essentially consists of two copies of the GI model from which the bulk model is constructed. Under a certain assumption, the boundary model is subject to global constraints, which are either symmetry charge projections, and/or boundary conditions, implying that the boundary theory to be anomalous. We have also discovered the intriguing phenomenon that a single anomalous theory may exists at the boundary of two distinct fracton models, which is not expected in the case of topological orders. 

Many open questions following this work worth future exploration. For example, we have not discussed the consequences of bulk anyon fluxes terminating on the boundary theory. Such results will be part of the properties of non-invertible anomaly in higher dimensions. Our construction also has the potential leading to many tangible and interesting generalizations. In particular, qubit stabilizer models only describe a limited class of topological phases \cite{Bombin2012}. An immediate generalization to $\ZZ_n$ qudit lattice models is worthwhile. With it, we postulate that the bulk models with topological/fracton orders can be constructed staring with a qudit lattice models with \emph{any} discrete finite Abelian symmetries $G$ in transverse fields, since $G$ can always be written as $G=\prod_{i=1}^m \mathbb Z_{n_i}$, with a positive integer $m$, and integers $n_i\geq 2$. One question with further stretch is whether the generalization of our models to topological models beyond stabilizer models can generate new topological lattice models, especially in three dimensions or higher. More particularly, the lattice models of topological phases related by Morita equivalence are related by generalized Kramers-Wannier duality. \cite{buerschaper2009mapping,kadar2010microscopic,lootens2022mapping} It is interesting to adapt our alternating layer construction to Morita equivalent topological lattice models and study the topological properties of the resulting bulk model. The Haah's code model \cite{Haah2011} does not seem to fit into our construction. Nevertheless, the stabilizers in its Hamiltonian do have a clear trilayer structure, viewed from the $(1,1,1)$-direction. It is interesting to figure out whether some generalized construction works for this representative type-II fracton model. %A more conceptual and challenging question in type-II models is whether the symmetric sector of models with fractal symmetries admit a topological bulk. \SL{This statement should be modified, because we secretly have a fractal symmetric example in the main text, namely the $\mathbb{Z}_2$ Sierpinski triangle model. Except for the fractal symmetry, this example is two much similar to the plaquette Ising model, so I didn't discuss it in detail but rather mentioned it briefly. The resulting bulk model behaves like type-II fracton along $xy$ directions. }We note that an interesting recent paper \cite{Ellison2021} shows that $\mathbb{Z}_n$ stabilizer models are able to describe every two-dimensional abelian topological order that admits a gapped boundary. \textbf{[WJ: what do you want to point out in terms of the relation of this work with ours?]}\SL{I said that one can consider $\mathbb{Z}_n$ stabilizer models (but you deleted it); the paper shows that this is not as trivial as it may seem to be, when $n$ is not a prime number. } 
The low energy effective field description for our alternating layer in generating topological theories in one dimensional higher is also in demand. Our construction might remind the readers of the coupled-layer construction in generating topologically ordered models, fracton models, or their hybrids \cite{MaLakeChenHermelel2017,SlagleKim2017,Vijay2017,VijayFu2017,PremHuangSongHermele2019,ShirleySlagleChen2020,Fuji2019,tantivasadakarn2021hybrid,tantivasadakarn2021non}, in which the model in each layer to begin with is the same, and inter-layer coupling terms are introduced. We leave it for future works to unravel whether there is a relation between our approach and the conventional coupled-layer constructions. Last but not least, our constructed bulk model is straightforward so that their boundary phase diagrams with various Abelian global symmetries can in principle be accessed via numerics. This suggests the possibility of the numerical verification for the conjecture that gapped phases on the boundary of $(d+1)$-dimensional topological order with a discrete gauge group $G$ have one-to-one correspondence with the gapped phases on the $d$-dimensional system with the global $G$ symmetry.

\section*{Acknowledgments}
We would like to thank Nayan Myerson-Jain, Cenke Xu, and Sagar Vijay for a previous collaboration and related discussions that inspired this work. We are also grateful to Xiao-Gang Wen, Ashvin Vishwanath, Cenke Xu, Sagar Vijay, Nat Tantivasadakarn, Kaixiang Su, Zhu-Xi Luo, Yi-Zhuang You and Zhen Bi for illuminating discussions. S. L. is supported by the Gordon and Betty Moore Foundation under Grant No. GBMF8690 and the National Science Foundation under Grant No. NSF PHY-1748958. W. J. is supported by the Simons Foundation. Also, W. J. gratefully acknowledges support from the Simons Center for Geometry and Physics, Stony Brook University at which some of the research for this paper was performed. 

\emph{Note added. }We came to know of Ref.\,\onlinecite{NatUnpublished} which also considers correspondence between bulk and boundary of fracton orders. In particular, it also points out that the same boundary theory may be realized for two distinct bulk fracton orders.

\appendix
%\section{Useful results about stabilizer extension}\label{app:StabilizerExtension}
\section{Elementary results in the stabilizer formalism}\label{app:StabilizerExtension}
In this section, we introduce some elementary results in the stabilizer formalism, and set up a few terminologies that will be used in the rest of the appendix. Basics of the stabilizer formalism can also be found in \cite{Gottesman:1997sc,Nielsen&Chuang}. 

A set of \textbf{stabilizers} is a set of operators satisfying the following properties: 
\begin{itemize}
    \item They are tensor products of $I,X,Y,Z$ multiplied by $\pm$ sign factors. 
    \item They mutually commute. 
    \item The group generated by them does not contain $-1$.
\end{itemize}
The last property guarantees that all the stabilizers are able to simultaneously take the eigenvalue $+1$. 
%(1) they are proportional to tensor products of the identity $I$ and Pauli operators $X,Y,Z$, and (2) they square to the identity and mutually commute. An optional assumption that people usually consider in the definition of stabilizers is that the group generated by the stabilizers should not contain $-1$. This can guarantee that all the stabilizers are able to simultaneously take the eigenvalue $+1$. 
A set of stabilizers is called \textbf{independent} if any product of a nonempty subset of those operators is not proportional to the identity. 
Specifying the eigenvalue of each independent stabilizer will reduce the Hilbert space dimension by a half. To see this, suppose we have chosen the eigenvalues $\lambda_1,\lambda_2,\cdots,\lambda_k$ for $k$ independent stabilizers $U_1,U_2,\cdots,U_k$, respectively, with $\lambda_i=\pm1$ for all $i$. One can see that the $(k+1)$-th independent stabilizer $U_{k+1}$ has zero trace \emph{in this eigen-subspace}, i.e. 
\begin{align}
\Tr\left[U_{k+1} \prod_{i=1}^k\left(\frac{1+\lambda_iU_i}{2}\right) \right]=0, 
\end{align}
due to the zero trace of Pauli operators and the requirement of independence. Thus, $U_{k+1}$ has equal number of $+1$ and $-1$ eigenvalues in this eigen-subspace, and specifying its eigenvalue will further cut down the Hilbert space dimension by a half. As a consequence, it is impossible to find more than $N$ number of independent stabilizers where $N$ is the total number of qubits. A set of stabilizers is called \textbf{complete} if it includes $N$ number of independent stabilizers. Each possible list of eigenvalues of a complete set of stabilizers uniquely determines a state in the Hilbert space.

The following two lemmas about stabilizer extension are useful. 
\begin{lemma}\label{thm:StabilizerExtLemmaGeneral}
	Any set of independent stabilizers can be extended to a complete and independent one. 
\end{lemma}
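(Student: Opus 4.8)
The plan is to build the completion greedily, one generator at a time, using the trace argument that was just rehearsed in the surrounding text. Let $\{U_1,\dots,U_k\}$ be the given independent stabilizers, acting on $N$ qubits. If $k=N$ we are done, so suppose $k<N$. The key claim is that there exists a Pauli operator $P$ (a tensor product of $I,X,Y,Z$, possibly times a sign) such that $P$ commutes with all of $U_1,\dots,U_k$, and $\{U_1,\dots,U_k,P\}$ is still independent, i.e. $P$ is not proportional to any product of a subset of the $U_i$. Granting this, $\{U_1,\dots,U_k,P\}$ is again a set of independent stabilizers (the no-$(-1)$ condition is automatic once we fix the sign of $P$ appropriately: if the group generated contains $-1$ we simply replace $P$ by $-P$, which cannot happen for both signs since that would force $-1$ to lie in $\langle U_1,\dots,U_k\rangle$, contradicting that the original set is a stabilizer set). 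Iterating raises $k$ by one each time, and the process must terminate at $k=N$ because, as recalled in the excerpt, one cannot have more than $N$ independent stabilizers.

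The core step is therefore the existence of $P$. I would phrase this in the symplectic ($\mathbb{F}_2$) representation: each Pauli on $N$ qubits corresponds, up to phase, to a vector in $\mathbb{F}_2^{2N}$, commutation corresponds to vanishing of the symplectic form $\omega(\cdot,\cdot)$, and "independent" corresponds to linear independence of the associated vectors. The set $\{U_1,\dots,U_k\}$ spans a $k$-dimensional isotropic subspace $V\subset\mathbb{F}_2^{2N}$ (isotropic because the $U_i$ mutually commute). Its symplectic complement $V^{\perp}=\{v:\omega(v,w)=0\ \forall w\in V\}$ has dimension $2N-k$ since $\omega$ is nondegenerate, and $V\subseteq V^\perp$ because $V$ is isotropic. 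Since $k<N\le 2N-k$, we have $\dim V^\perp=2N-k>k=\dim V$, so there is a vector $v\in V^\perp\setminus V$. Take $P$ to be any Pauli representing $v$: it commutes with every $U_i$ (as $v\in V^\perp$) and is independent from them (as $v\notin V$). This is exactly the "dual vectors exist because full-rank $\mathbb{F}_2$-matrices are invertible" flavor of argument already invoked elsewhere in the paper.

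The main obstacle is not the linear algebra but the bookkeeping of phases: passing between Pauli operators and $\mathbb{F}_2$ vectors is only well-defined up to an overall sign (and, if one allows $Y$, up to factors of $i$, though here everything is Hermitian so only $\pm1$ matters). I would handle this by fixing, once and for all, a Hermitian representative for each $\mathbb{F}_2^{2N}$ vector (e.g. the standard $X$-then-$Z$ convention with a real normalizing sign), and then only worrying about the global sign of the newly chosen $P$. The sign is pinned down by the requirement that $\langle U_1,\dots,U_k,P\rangle$ not contain $-1$: since the $\mathbb{F}_2$ vectors $\{v(U_i)\}\cup\{v(P)\}$ are linearly independent, the only products that could equal $\pm1$ are products of the $U_i$'s alone (already excluded) together with an even power of $P$, i.e. $P^2=+1$; so no choice of sign of $P$ introduces $-1$, and the stabilizer conditions are preserved. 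I would then just remark that this completes the induction, giving the claimed complete independent extension.
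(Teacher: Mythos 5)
Your proof is correct, but it takes a genuinely different route from the paper's. The paper argues at the level of operators on the Hilbert space: since $k<N$ independent stabilizers leave each joint eigenspace $2^{N-k}$-fold degenerate, there exists \emph{some} operator $\mc A$ (not a priori a Pauli) that commutes with and is independent from the given stabilizers; expanding $\mc A$ in the Pauli basis and noting that conjugation by any stabilizer preserves each term of the expansion, one extracts a single Pauli product with the required properties. You instead work entirely in the symplectic $\mathbb{F}_2^{2N}$ picture: the given stabilizers span a $k$-dimensional isotropic subspace $V$, nondegeneracy of the symplectic form gives $\dim V^\perp=2N-k>k$, and any $v\in V^\perp\setminus V$ lifts to the desired new generator. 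Your route is more elementary in that it reduces everything to a dimension count over $\mathbb{F}_2$, and it is more careful on one point the paper glosses over, namely why the extension never introduces $-1$ into the generated group (your observation that a product involving $P$ to an odd power cannot be proportional to the identity, because $v(P)\notin V$, settles this cleanly). The paper's argument, on the other hand, avoids setting up the symplectic formalism and phase conventions, and its Pauli-expansion trick is reused verbatim elsewhere in the appendix (e.g.\ in the remarks on boundary terms), which is what the authors buy by phrasing it that way. Both proofs correctly rely on the previously established fact that at most $N$ independent stabilizers exist, which guarantees termination of the greedy extension.
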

\begin{proof}
	Suppose we have $N-k$ number of independent stabilizers with $k>0$. Any set of eigenvalues for those operators determines $2^k$ number of degenerate eigenstates. It is therefore always possible to find some operator $\mc A$, not necessarily a tensor product of $I,X,Y,Z$, that is independent from and commutes with the existing stabilizers. We can expand $\mc A$ as a linear combination of the $4^N$ number of tensor product operators of $I,X,Y,Z$ which are linearly independent. Let $U$ be any existing stabilizer, $U\mc A U^{-1}=\mc A$ implies that any tensor product operator entering the expansion of $\mc A$ with a nonzero coefficient must also commute with $U$. Therefore, we can always find a tensor product of $I,X,Y,Z$ that is independent from and commutes with the existing stabilizers. We can add this new operator to the list of stabilizers and start over again, until the list is complete. 
\end{proof}
\begin{corollary}\label{thm:CompleteIffMaximal}
	A set of independent stabilizers is complete if and only if it is maximal, i.e. not belonging to a larger set of independent stabilizers. 
\end{corollary}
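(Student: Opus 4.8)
The plan is to deduce the corollary directly from Lemma~\ref{thm:StabilizerExtLemmaGeneral}, combined with the two facts established in the discussion just above it: that any set of independent stabilizers on $N$ qubits contains at most $N$ elements, and that ``complete'' is defined to mean ``containing $N$ independent stabilizers.'' With these in place the argument is a short two-directional deduction, and I would present it as such.

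First, for the ``complete $\Rightarrow$ maximal'' direction, suppose $S$ is a complete set of independent stabilizers, so $|S|=N$. If $S$ were a proper subset of a larger independent set $S'$, then $|S'|>N$, contradicting the cardinality bound. Hence $S$ cannot be enlarged while staying independent, i.e.\ $S$ is maximal.

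For the ``maximal $\Rightarrow$ complete'' direction, suppose $S$ is maximal. By Lemma~\ref{thm:StabilizerExtLemmaGeneral}, $S$ can be extended to a complete and independent set $S'$; since the extension procedure in that lemma only adjoins new operators, we have $S\subseteq S'$. Maximality of $S$ then forces $S=S'$, so $S$ is itself complete.

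I do not expect a genuine obstacle here, as everything follows immediately once Lemma~\ref{thm:StabilizerExtLemmaGeneral} is available. The only points requiring a moment of care are to make sure ``complete'' is invoked in the counting sense (exactly $N$ independent generators), so that the ``at most $N$'' bound applies, and to note explicitly that the extension supplied by the lemma contains the original set as an honest subset rather than merely sharing its span, which is what lets maximality collapse $S'$ onto $S$ in the converse direction.
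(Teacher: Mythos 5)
Your proof is correct and follows exactly the route the paper intends: the forward direction from the "at most $N$ independent stabilizers" counting bound, and the converse from Lemma~\ref{thm:StabilizerExtLemmaGeneral} plus maximality. The paper leaves this corollary's proof implicit for precisely this reason, so nothing further is needed.
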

\begin{lemma}\label{thm:StabilizerExtLemmaXZ}
	Suppose we are given a set of independent stabilizers that are all products of Pauli $X$ operators (and the identity operators on the other sites; same below). We can extend the set of stabilizers to a complete one by adding operators that are products of Pauli $Z$ operators. 
\end{lemma}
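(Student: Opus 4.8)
The plan is to translate the statement into linear algebra over $\mathbb{F}_2$. To each operator that is a (possibly signed) product of Pauli $X$'s I associate its support vector $a\in\mathbb{F}_2^N$, and to each (possibly signed) product of Pauli $Z$'s its support vector $b\in\mathbb{F}_2^N$. I will use three standard facts: such an $X$-type operator and such a $Z$-type operator commute precisely when $a\cdot b=0$ (the mod-$2$ dot product) and anticommute when $a\cdot b=1$; two $Z$-type operators always commute; and a nonempty subproduct of a family of same-type Pauli products is proportional to the identity exactly when the corresponding support vectors sum to zero. Thus if $S_1,\dots,S_k$ are the given independent $X$-type stabilizers with support vectors $a_1,\dots,a_k$, then independence of the $S_i$ is equivalent to linear independence of the $a_i$ in $\mathbb{F}_2^N$.

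The main step is then to choose the $Z$-type operators directly. Consider the orthogonal complement $V:=\{\,b\in\mathbb{F}_2^N : a_i\cdot b=0 \text{ for all } i\,\}$, which has dimension $N-k$ because the $a_i$ are independent and the bilinear form is nondegenerate. Pick any basis $b_1,\dots,b_{N-k}$ of $V$ and let $T_j$ be the corresponding unsigned Pauli-$Z$ string, $T_j=\prod_{l:\,(b_j)_l=1}Z_l$. I claim $\{S_1,\dots,S_k,T_1,\dots,T_{N-k}\}$ is a complete independent set of stabilizers. Mutual commutativity is immediate: $X$-type with $X$-type and $Z$-type with $Z$-type commute automatically, while $S_i$ commutes with $T_j$ because $a_i\cdot b_j=0$ by construction. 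For independence, observe that any product $\prod_i S_i^{c_i}\prod_j T_j^{d_j}$ has $X$-support $\sum_i c_i a_i$ and $Z$-support $\sum_j d_j b_j$, so being proportional to the identity forces both sums to vanish, hence all $c_i$ and all $d_j$ vanish by linear independence of the $a_i$ and of the $b_j$. Finally, the generated group contains no $-1$: if some such product equalled $-I$, then by the previous sentence all exponents vanish, so $\prod_i S_i^{c_i}=+I$ (it has zero support and lies in the original stabilizer group, which by hypothesis contains no $-1$) and $\prod_j T_j^{d_j}=+I$ (an unsigned $Z$-string with zero support), a contradiction. Counting gives $k+(N-k)=N$ independent stabilizers, so by Corollary \ref{thm:CompleteIffMaximal} the set is complete, and the added $T_j$ are products of Pauli $Z$ operators, as required.

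I do not expect a serious obstacle. The only point needing care is the sign bookkeeping in the last step — verifying that appending the $+$-signed $Z$-type operators cannot manufacture $-I$ in the group — which is handled by combining linear independence with the hypothesis that the original stabilizer group already avoids $-1$. One could alternatively invoke Lemma \ref{thm:StabilizerExtLemmaGeneral} and then argue that at each extension step a $Z$-type Pauli may be chosen, but going directly through the orthogonal complement $V$ is cleaner and makes the dimension count transparent.
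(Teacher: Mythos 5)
Your proof is correct and follows essentially the same route as the paper's: both reduce the problem to $\mathbb{F}_2$-linear algebra on support vectors and choose the new $Z$-type operators from the orthogonal complement of the span of the $X$-supports (the paper just realizes that complement explicitly, putting the $X$-supports in the canonical form $\bigl(\begin{smallmatrix}I_M\\A\end{smallmatrix}\bigr)$ and taking the $Z$-supports to be $\bigl(\begin{smallmatrix}A^T\\I_{N-M}\end{smallmatrix}\bigr)$). Your extra check that no $-I$ is generated is a welcome bit of rigor the paper leaves implicit.
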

\begin{proof}
	Let $N$ be the number of qubits in the Hilbert space and let $M$ be the number of independent stabilizers given. We can represent those stabilizers by an $N\times M$ matrix in $\mathbb{F}_2$, the field of integers modulo $2$, such that the $(i,j)$ entry of the matrix is $1$ when the $j$-th stabilizer contains an $X$ operator acting on the $i$-th qubit, and is $0$ otherwise. Each column of the matrix corresponds to a stabilizer, and each row corresponds to a qubit in the Hilbert space. Multiplying one stabilizer onto another one will lead to an equivalent set of independent stabilizers, which amounts to adding one column of the matrix onto another one. We are also free to permute the rows of the matrix, which amounts to reordering the qubits. Using these two types of operations, this stabilizer matrix can be cast to the following canonical form 
	\begin{align}
	\begin{pmatrix}
	I_{M}\\
	A
	\end{pmatrix}
	\end{align}
	where $I_M$ is the $M\times M$ identity matrix, and $A$ is an arbitrary $(N-M)\times M$ matrix. To prove the claim, we will now introduce $N-M$ number of new stabilizers, each of which is a product of Pauli $Z$ operators. Those new stabilizers can be similarly represented by an $N\times (N-M)$ matrix in $\mathbb{F}_2$; the $(i,j)$ entry of the matrix is $1$ when the $j$-th new stabilizer contains a $Z$ operator acting on the $i$-th qubit, and is $0$ otherwise. We choose this new stabilizer matrix to be 
	\begin{align}
	\begin{pmatrix}
	A^T\\
	I_{N-M}
	\end{pmatrix}. 
	\end{align} 
	One can check that the $N$ operators represented by those two matrices indeed mutually commute and are independent. This completes the proof. 
\end{proof}

\section{Hilbert space isomorphism in the generalized Kramers-Wannier duality}\label{app:KWDuality}
In the main text, we introduced the generalized Kramers-Wannier duality by the operator map in Eq.\,\ref{eq:KMOperatorMap}. In this section, we will prove that this operator map follows from an isomorphism of symmetric subspaces. 

The following result that follows from our definition of the symmetry groups in the main text and Lemma \ref{thm:StabilizerExtLemmaXZ} will be useful. 
\begin{corollary}\label{thm:GIModelCompleteStabilizers}
    Let $\mc L$ ($\mc L'$) be the full Hilbert space of the original (dual) model. The following statements hold. 
    \begin{enumerate}
	    \item $\{ \mc O^X_i \}$ ($\{ \Delta^Z_\alpha \}$) together with all the $Z$-type ($X$-type) symmetry generators of the original (dual) theory form a complete set of stabilizers in $\mc L$ ($\mc L'$). 
	    \item $\{ \mc O^Z_\alpha \}$ ($\{\Delta^X_i\}$) together with all the symmetry generators of the original (dual) theory form a complete set of stabilizers in $\mc L$ ($\mc L'$). 
	\end{enumerate}
\end{corollary}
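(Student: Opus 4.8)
The plan is to prove each of the two statements by exhibiting, inside the group generated by the listed operators, a complete independent set of $N$ stabilizers ($N=$ number of qubits), using the extension Lemma~\ref{thm:StabilizerExtLemmaXZ}. I will argue only for the original model; the parenthetical claims for the dual model then follow by the same argument applied verbatim, since the dual model is itself a GI model with the roles of $X$ and $Z$ exchanged, so that its $\Delta^Z_\alpha$ play the role of the $\mc O^X_i$, its compatible $X$-type symmetries play the role of the $Z$-type symmetries, and so on. Notice that only Lemma~\ref{thm:StabilizerExtLemmaXZ} and the definitions of the symmetry groups enter; the CSLO assumptions are not needed. For statement~1: first I would check that $\{\mc O^X_i\}$ is a valid stabilizer set (sign-free pure-$X$ operators, mutually commuting, and any subfamily product is again sign-free pure-$X$, hence never $-1$), then pass to a maximal independent subset $\{\mc O^X_{i_1},\dots,\mc O^X_{i_M}\}$, whose span as a set of Pauli-$X$ strings equals that of the whole family. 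By Lemma~\ref{thm:StabilizerExtLemmaXZ} one enlarges it to a complete independent set by adjoining $N-M$ operators $T_1,\dots,T_{N-M}$ that are products of Pauli $Z$. Each $T_j$ commutes with every $\mc O^X_{i_k}$, hence with every $\mc O^X_i$, and commutes with every $\mc O^Z_\alpha$ automatically (both are $Z$-type), so $T_j$ commutes with $H$, i.e.\ $T_j$ is a $Z$-type symmetry. Since a generating set of the $Z$-type symmetry group generates every $Z$-type symmetry, the group generated by $\{\mc O^X_i\}\cup\{Z\text{-type symmetry generators}\}$ contains the complete independent set $\{\mc O^X_{i_k}\}\cup\{T_j\}$, as required.

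\textbf{Statement 2.} Here I would apply the same idea with the Pauli types swapped, but starting from a \emph{larger} seed: the family $\{\mc O^Z_\alpha\}\cup\{Z\text{-type symmetry generators}\}$, which is again an all-$Z$-type, mutually commuting stabilizer set; pass to a maximal independent subset. By Lemma~\ref{thm:StabilizerExtLemmaXZ} (with $X$ and $Z$ interchanged) one completes it to $N$ independent stabilizers by adjoining operators $S_1,\dots,S_p$ that are products of Pauli $X$. Each $S_j$ commutes with every $\mc O^Z_\alpha$ — hence, as above, with $H$, so $S_j$ is an $X$-type symmetry — \emph{and} with every $Z$-type symmetry generator, hence with every $Z$-type symmetry, so $S_j$ is a \emph{compatible} $X$-type symmetry and therefore lies in the group generated by $\{G^X_s\}\cup\{U^X_k\}$, hence in the group generated by $\{\mc O^Z_\alpha\}\cup\{\text{all symmetry generators}\}$. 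That group thus contains the complete set $\{\mc O^Z_{\alpha_k}\}\cup\{S_j\}$. As a byproduct one recovers the main-text claim that every compatible $X$-type symmetry generator is a product of $\mc O^X_i$'s: by statement~1, $\{\mc O^X_i\}\cup\{Z\text{-type symmetry generators}\}$ is complete, so any pure-$X$ Pauli commuting with all of it — in particular any compatible $X$-type symmetry — must be a product of the $\mc O^X_i$, since the $Z$-type generators cannot contribute to the $X$-part.

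\textbf{Where the difficulty lies.} The only genuinely delicate step is the choice of seed in statement~2. Extending $\{\mc O^Z_\alpha\}$ alone would only force the adjoined $X$-type operators to commute with $H$, i.e.\ to be $X$-type symmetries, which need not be \emph{compatible}; one could then not conclude they lie in the group generated by $\{G^X_s\}\cup\{U^X_k\}$, and the argument would fail to close. Feeding the $Z$-type symmetry generators into the extension lemma at the outset is precisely what upgrades ``symmetry of $H$'' to ``compatible symmetry.'' Everything else is bookkeeping: that each family in question is a bona fide stabilizer set (immediate, being sign-free and of a single Pauli type, so the generated group omits $-1$), that passing to a maximal independent subset does not change the relevant $\mathbb{F}_2$-spans, and that a generating set of a symmetry group can be taken to contain any prescribed finitely many of its elements — equivalently, one may phrase completeness throughout in terms of the groups generated rather than the literal sets, invoking Corollary~\ref{thm:CompleteIffMaximal} to identify completeness with maximality.
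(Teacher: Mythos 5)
Your proposal is correct and follows exactly the route the paper intends: the paper states this corollary without an explicit proof, presenting it as following from Lemma~\ref{thm:StabilizerExtLemmaXZ} and the definitions of the symmetry groups, and your argument is precisely the fleshing-out of that — extend the $X$-type (resp.\ $Z$-type) seed by the opposite Pauli type and observe that the adjoined operators are forced to be ($Z$-type, resp.\ compatible $X$-type) symmetries. Your identification of the delicate point in statement~2 — that seeding with the $Z$-type symmetry generators is what upgrades ``symmetry'' to ``compatible symmetry'' — is exactly right and is where the definition of compatibility earns its keep.
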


It is also useful to establish the following result. 
\begin{lemma}\label{thm:StandardBasis}
	Let $\{ U^Z_1,U^Z_2,\cdots,U^Z_{m}, U^X_1,U^X_2,\cdots,U^X_{n} \}$ be a set of stabilizers acting on some multiple-qubit Hilbert space $\mc L$, such that each $U^Z_p$ ($U^X_q$) is a product of several Pauli $Z$ (Pauli $X$) operators acting on different qubits. Let $\mc L_0$ be the subspace where $U^Z_p=U^X_q=1$ for all $p$ and $q$. There exists an orthonormal basis of $\mc L_0$ such that 
	\begin{enumerate}
		\item Any operator $\mc O^Z$ that is a product of several Pauli $Z$ operators and commutes with all the stabilizers is diagonal in this basis. 
		\item Any operator $\mc O^X$ that is a product of several Pauli $X$ operators and commutes with all the stabilizers has matrix elements in this basis equal to either $0$ or $1$. 
	\end{enumerate}
\end{lemma}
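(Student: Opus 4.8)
The plan is to reduce the statement to the standard theory of logical operators of a stabilizer code, while taking care that the logical ``$Z$'' operators can be chosen to be genuine products of Pauli $Z$'s and the logical ``$X$'' operators genuine products of Pauli $X$'s. First I would pass to independent generating sets: since a product of the $U^Z_p$'s and $U^X_q$'s equals the identity only if its $Z$-part and its $X$-part vanish separately, we may replace $\{U^Z_p\}$ and $\{U^X_q\}$ by independent generating subsets without changing $\mathcal{L}_0$, and I will write $m,n$ for their cardinalities, $N$ for the number of qubits, and $k := N - m - n = \log_2 \dim \mathcal{L}_0$.

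Next I would work in the symplectic $\mathbb{F}_2$-vector space $\mathbb{F}_2^{2N}$ attached to the Pauli group modulo phases. Let $L_Z$ (resp.\ $L_X$) be the Lagrangian subspace of pure $Z$-type (resp.\ $X$-type) operators; they are dual Lagrangians, so $L_Z \cap L_X = 0$ and the symplectic pairing restricts to a perfect pairing $L_Z \times L_X \to \mathbb{F}_2$. Let $S_Z \subseteq L_Z$ and $S_X \subseteq L_X$ be the spans of our generators; since $S_Z \cap S_X = 0$ the stabilizer subspace is $V = S_Z \oplus S_X$ of dimension $m+n$. Set $\mathcal{C}_Z := V^\perp \cap L_Z$ and $\mathcal{C}_X := V^\perp \cap L_X$, the $Z$-type and $X$-type operators commuting with all stabilizers. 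Because elements of $L_Z$ are automatically orthogonal to $S_Z$, the space $\mathcal{C}_Z$ is cut out inside $L_Z$ by the $n$ independent conditions coming from $S_X$ (independent by the perfect pairing), whence $\dim \mathcal{C}_Z = N - n$ and, since $S_Z \subseteq \mathcal{C}_Z$, $\dim (\mathcal{C}_Z / S_Z) = k$; likewise $\dim(\mathcal{C}_X / S_X) = k$. The images $\bar{\mathcal Z} := \mathcal{C}_Z/S_Z$ and $\bar{\mathcal X} := \mathcal{C}_X/S_X$ in the logical space $V^\perp / V$, which carries a nondegenerate symplectic form of rank $2k$, are isotropic (as images of isotropic subspaces) of dimension $k$, hence Lagrangian; and they intersect trivially, because if a class had lifts $z \in \mathcal{C}_Z \subseteq L_Z$ and $x \in \mathcal{C}_X \subseteq L_X$ with $z + x = s_Z + s_X \in S_Z \oplus S_X$, then $z + s_Z = x + s_X \in L_Z \cap L_X = 0$, forcing $z = s_Z \in V$. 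So $\bar{\mathcal Z}$ and $\bar{\mathcal X}$ are complementary Lagrangians.

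With this in hand I would choose, by elementary $\mathbb{F}_2$-linear algebra using the perfect pairing between $\bar{\mathcal Z}$ and $\bar{\mathcal X}$, representatives $\bar Z_1, \dots, \bar Z_k \in \mathcal{C}_Z$ (each a product of Pauli $Z$'s) and $\bar X_1, \dots, \bar X_k \in \mathcal{C}_X$ (each a product of Pauli $X$'s) so that $\bar Z_a$ anticommutes with $\bar X_a$ and commutes with $\bar X_b$ for $b \neq a$, all of them commuting with the $U^Z_p, U^X_q$, and all $\bar Z_a$ (resp.\ $\bar X_a$) mutually commuting. Crucially, a product of Pauli $Z$'s carrying no overall sign squares exactly to the identity and such operators multiply with no phase, so $\{U^Z_p\} \cup \{\bar Z_a\}$ generate $\mathcal{C}_Z$ as an elementary abelian $2$-group with no sign ambiguity, and likewise for $\mathcal{C}_X$. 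Let $|\bar 0\rangle \in \mathcal{L}_0$ be the joint $+1$-eigenvector (unique up to overall phase) of all the stabilizers and all $\bar Z_a$ — it exists because these $N$ operators form a complete independent stabilizer set — and define $|\bar x\rangle := \bar X_1^{x_1} \cdots \bar X_k^{x_k} |\bar 0\rangle$ for $x \in \mathbb{F}_2^k$. Using that the $\bar X$'s commute with the stabilizers and with each other, one checks these are $2^k$ orthonormal vectors spanning $\mathcal{L}_0$ (distinct $|\bar x\rangle$ have distinct $\bar Z$-eigenvalue tuples).

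Finally I would verify the two claims. An operator $\mathcal{O}^Z$ that is a product of Pauli $Z$'s and commutes with all stabilizers lies in $\mathcal{C}_Z$, hence equals a product of some $U^Z_p$'s and $\bar Z_a$'s \emph{as operators}, with no sign, so on $\mathcal{L}_0$ it acts as the corresponding product of $\bar Z_a$'s, which is diagonal in the $|\bar x\rangle$ basis with eigenvalue $\pm 1$; this gives (1). Similarly $\mathcal{O}^X \in \mathcal{C}_X$ equals a product of $U^X_q$'s and $\bar X_b$'s, so on $\mathcal{L}_0$ it acts as $\bar X_{b_1}\cdots \bar X_{b_t}$; since the $\bar X$'s commute and square to the identity, $\bar X_{b_1}\cdots\bar X_{b_t}|\bar x\rangle = |\bar x + e_{b_1}+\dots+e_{b_t}\rangle$ exactly, with coefficient $+1$, giving (2). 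The main obstacle — the only place the argument is more than bookkeeping — is the structural input that $\bar{\mathcal Z}$ and $\bar{\mathcal X}$ are complementary Lagrangians, which is what makes the joint $\mathcal{C}_Z$-eigenspaces on $\mathcal{L}_0$ one-dimensional and the $\mathcal{C}_X$-orbit of $|\bar 0\rangle$ all of them, together with the observation that pure $Z$-type and pure $X$-type operators carry no phase ambiguity, which is exactly what upgrades ``matrix elements $\pm 1$'' to ``matrix elements $0$ or $1$.''
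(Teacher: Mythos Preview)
Your proof is correct, but it takes a genuinely different route from the paper's. The paper simply projects the computational $Z$-basis onto $\mc L_0$: it sets $\overline{\ket{\{Z_i\}}}:=\prod_q\frac{1+U^X_q}{2}\prod_p\frac{1+U^Z_p}{2}\ket{\{Z_i\}}$, observes that the nonzero ones (after discarding repetitions coming from $U^X$-orbits and normalizing by the common factor) form an orthonormal basis of $\mc L_0$, and then the two claims are essentially immediate because $\mc O^Z$ is already diagonal and $\mc O^X$ already a signless permutation on the computational basis, and both commute with the projector.

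Your argument instead goes through the symplectic $\mathbb F_2$-picture: you show that the $Z$-type and $X$-type centralizers descend to complementary Lagrangians $\bar{\mc Z},\bar{\mc X}$ in the logical space $V^\perp/V$, pick a symplectic basis of logical $\bar Z_a,\bar X_a$ lifted inside $L_Z,L_X$, and build the basis as the $\bar X$-orbit of the joint $+1$ eigenstate. The paper's approach is shorter and needs no structure theory; yours is more conceptual and explains \emph{why} such a basis exists---it is exactly the statement that a stabilizer group generated by pure-$Z$ and pure-$X$ operators admits logical operators of the same pure types, which is the content of your complementary-Lagrangian lemma. Both hinge on the same elementary but crucial fact (which you isolate explicitly and the paper uses implicitly): sign-free products of only $Z$'s, or only $X$'s, multiply with no phase, so equality in $\mathbb F_2^{2N}$ upgrades to equality of operators.
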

\begin{proof}
	Let $\ket{ \{Z_i\} }$ be the standard eigenbasis for all Pauli $Z$ operators in $\mc L$, i.e. tensor products of the states $(1,0)^T$ and $(0,1)^T$. Consider the following list of states, 
	\begin{align}
	\overline{\ket{ \{Z_i\} }}:=\prod_q\left( \frac{1+U^X_q}{2} \right)\prod_p\left( \frac{1+U^Z_p}{2} \right)\ket{ \{Z_i\} }. 
	\end{align}
	Notice that these states generate all states in $\mc L_0$. Moreover, the above states have the following simple properties: (1) $\overline{\ket{ \{Z_i\} }}=0$ if and only if the spin configuration $\{Z_i\}$ violates the $U^Z$ constraints. (2) Given any two nonzero states $\overline{\ket{ \{Z_i\} }}$ and $\overline{\ket{ \{Z_i'\} }}$, they are equal if $\ket{ \{Z_i\} }$ can be mapped to $\ket{ \{Z_i'\} }$ by the action of several $U^X$ operators, otherwise they are orthogonal. We can restrict the above list of states to a nonzero and nonequal subset. After normalization, this subset of states form an orthogonal basis of $\mc L_0$, and it has the desired properties claimed in the statement of the lemma. 
\end{proof}

The key result for this section is the following. 
\begin{theorem}\label{thm:KWDuality}
	The operator map in the generalized Kramers-Wannier duality follows from an isomorphism of Hilbert spaces when we restrict to the symmetric sectors on both sides.  
\end{theorem}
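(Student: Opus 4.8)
The plan is to exhibit the intertwining isomorphism explicitly, by matching joint eigenbases of the Ising-type operators on the two sides. Write $\mc L_{\rm sym}\subset\mc L$ (respectively $\mc L'_{\rm sym}\subset\mc L'$) for the common $+1$-eigenspace of all symmetry generators of the original (respectively dual) model. The first point is that each $\mc O^Z_\alpha$ commutes with every symmetry generator that we will feed into Lemma~\ref{thm:StandardBasis}: trivially with the $Z$-type ones, and with the compatible $X$-type ones because a symmetry of $H$ cannot anticommute with an individual Ising term without spoiling invariance of $\sum_\alpha J_\alpha\mc O^Z_\alpha$ (distinct positive Pauli-$Z$ monomials admit no sign cancellation). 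Hence Lemma~\ref{thm:StandardBasis}, applied with the stabilizer set equal to all $Z$-type and all $X$-type symmetry generators, yields an orthonormal basis of $\mc L_{\rm sym}$ in which every $\mc O^Z_\alpha$ is diagonal and every $\mc O^X_i$ has matrix elements in $\{0,1\}$. By the second part of Corollary~\ref{thm:GIModelCompleteStabilizers}, $\{\mc O^Z_\alpha\}$ together with the symmetry generators is a complete set of stabilizers, so a vector of $\mc L_{\rm sym}$ is determined by its tuple $\epsilon=(\epsilon(\alpha))_\alpha$ of $\mc O^Z_\alpha$-eigenvalues; I call $\epsilon$ \emph{admissible} and write $|\epsilon\rangle$ for the corresponding basis vector. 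A dimension count against $\dim\mc L_{\rm sym}$ shows that the admissible tuples are exactly those with $\prod_{\alpha\in S}\epsilon(\alpha)=1$ for every finite set $S$ such that $\prod_{\alpha\in S}\mc O^Z_\alpha$ lies in the symmetry group. The identical construction on the dual side produces a basis $\{|\epsilon'\rangle'\}$ of $\mc L'_{\rm sym}$, indexed by tuples that are admissible for the $\Delta^Z_\alpha$, i.e.\ satisfy $\prod_{\alpha\in S}\epsilon'(\alpha)=1$ whenever $\prod_{\alpha\in S}\Delta^Z_\alpha$ lies in the symmetry group of the dual.

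The second step is to show the two admissibility conditions coincide. Since each $\mc O^Z_\alpha$ and each $\Delta^Z_\alpha$ is a positive product of Pauli-$Z$'s, $\prod_{\alpha\in S}\mc O^Z_\alpha$ lies in the symmetry group of the original model if and only if it commutes with all $\mc O^X_i$, and likewise on the dual side; and because the operator map \eqref{eq:KMOperatorMap} preserves the commutation/anticommutation pattern of the generators (anticommutation of a product with $\mc O^X_i$ being just a parity of the overlaps), $\prod_{\alpha\in S}\mc O^Z_\alpha$ commutes with every $\mc O^X_i$ exactly when $\prod_{\alpha\in S}\Delta^Z_\alpha$ commutes with every $\Delta^X_i$. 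Thus a set $S$ is ``trivial'' on one side iff it is on the other, the admissible tuples agree, and we may define $\Phi\colon\mc L_{\rm sym}\to\mc L'_{\rm sym}$ by $\Phi|\epsilon\rangle=|\epsilon\rangle'$. This $\Phi$ is a Hilbert-space isomorphism, and it intertwines $\mc O^Z_\alpha$ with $\Delta^Z_\alpha$ by construction, since both act as multiplication by $\epsilon(\alpha)$.

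It remains to check $\Phi\,\mc O^X_i=\Delta^X_i\,\Phi$, and this is the delicate part of the argument, because a priori $\mc O^X_i$ could act on the $|\epsilon\rangle$ basis with extra signs not present for $\Delta^X_i$. Here the $\{0,1\}$-valued matrix elements from Lemma~\ref{thm:StandardBasis} are decisive: $\mc O^X_i$ commutes with all symmetry generators, so it restricts to a unitary involution of $\mc L_{\rm sym}$, and a unitary with entries in $\{0,1\}$ is an honest permutation of the basis with no phases, $\mc O^X_i|\epsilon\rangle=|\sigma_i\epsilon\rangle$, where $\sigma_i$ flips $\epsilon(\alpha)$ precisely on the set of $\alpha$ with $\{\mc O^Z_\alpha,\mc O^X_i\}=0$. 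The same reasoning on the dual side gives $\Delta^X_i|\epsilon\rangle'=|\sigma'_i\epsilon\rangle'$ with $\sigma'_i$ flipping $\epsilon(\alpha)$ on the set of $\alpha$ with $\{\Delta^Z_\alpha,\Delta^X_i\}=0$, and that set equals the previous one because the operator map preserves commutation relations; hence $\sigma'_i=\sigma_i$, so $\Phi\,\mc O^X_i\,\Phi^{-1}=\Delta^X_i$. Together with the intertwining of the $Z$-operators, this shows the operator map \eqref{eq:KMOperatorMap} is realized by conjugation by $\Phi$, proving the theorem. The main obstacle is exactly this sign ambiguity in the action of the $\mc O^X$ operators; once it is removed by the $\{0,1\}$-matrix-element statement, the rest is bookkeeping with Corollary~\ref{thm:GIModelCompleteStabilizers} and the commutation-preservation built into the duality.
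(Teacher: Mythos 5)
Your proof is correct and rests on the same two pillars as the paper's: Corollary~\ref{thm:GIModelCompleteStabilizers} to label states of the symmetric sector by the joint eigenvalues of the $\mc O^Z_\alpha$ (resp.\ $\Delta^Z_\alpha$), and Lemma~\ref{thm:StandardBasis} to fix phases so that the $\mc O^X_i$ (resp.\ $\Delta^X_i$) act as honest permutations of that eigenbasis, after which the isomorphism is just a matching of basis vectors by eigenvalue tuple. Where you genuinely diverge is in establishing bijectivity. The paper defines the map $f$ one-sidedly, then separately checks well-definedness (if $\prod_{\alpha\in S}\Delta^Z_\alpha=1$ then the corresponding product of $\mc O^Z_\alpha$ acts trivially on $\mc L_0$), image containment in $\mc L'_0$, and injectivity, and finally closes the dimension gap by building a reverse map $g$. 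You instead characterize the realizable tuples on each side intrinsically---admissibility is equivalent to the parity constraints coming from products of Ising terms that commute with all transverse-field terms---and note that this commutation condition is manifestly preserved by the duality map, so the two index sets are literally equal and the basis matching is automatically one-to-one and onto. This is a more symmetric packaging that absorbs the paper's well-definedness, image, and dimension arguments into a single observation; it is arguably cleaner. Two steps are asserted rather than argued and deserve a sentence each: (i) the ``dimension count'' showing that \emph{every} constraint-satisfying tuple is realized is the load-bearing sufficiency half of your admissibility characterization; it does follow from Corollary~\ref{thm:GIModelCompleteStabilizers} together with the fact from Appendix~\ref{app:StabilizerExtension} that independent stabilizers take all eigenvalue combinations, but you should say so. (ii) The claim that a compatible $X$-type symmetry commutes with each individual $\mc O^Z_\alpha$ (needed to invoke Lemma~\ref{thm:StandardBasis}) tacitly assumes the $\mc O^Z_\alpha$ are distinct Pauli monomials with nonvanishing couplings; this is the same genericity the paper itself relies on, so it is consistent with the paper's level of rigor, but it is worth flagging.
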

\begin{proof}
	We start by defining some notations. Let $\mc L$ and $\mc L'$ be the full Hilbert spaces of the original theory and the dual theory, respectively. We denote by $\mc L_Z\subset\mc L$ the symmetric subspace for all $Z$-type symmetries, and by $\mc L_0\subset \mc L$ the symmetric subspace for both $Z$-type and $X$-type symmetries. Thus $\mc L_0\subset\mc L_Z\subset \mc L$. Similarly, we define $\mc L_X'$ and $\mc L_0'$ such that $\mc L_0'\subset\mc L_X'\subset\mc L'$. 
	%It follows from our definition of the symmetry groups and Lemma\,\ref{thm:StabilizerExtLemmaXZ} in Appendix \ref{app:StabilizerExtension} that 
	
	Each possible list of eigenvalues of a complete set of stabilizers uniquely determines a state in the Hilbert space (see Appendix \ref{app:StabilizerExtension}). Therefore, according to Corollary \ref{thm:GIModelCompleteStabilizers}, the eigenvalues of all $\mc O^Z_\alpha$ uniquely determine a state in $\mc L_0$, and the eigenvalues of all $\Delta^Z_\alpha$ uniquely determine a state in $\mc L'_X$. 
	Denote by $\ket{\{\mc O^Z_\alpha\}}\in\mc L_0$ the simultaneous eigenstates of all $\mc O^Z_\alpha$ (here we use the same notation for an operator and its eigenvalue). By Lemma\,\ref{thm:StandardBasis} just proved above, we can always choose the phase factors of those states such that the matrix elements of each $\mc O^X_i$ in this basis are either 0 or 1. Similarly, we denote by $\ket{\{\Delta^Z_\alpha\}}\in\mc L_X'$ the simultaneous eigenstates of all $\Delta^Z_\alpha$, such that the matrix elements of $\Delta^X_i$ are either 0 or 1. 
	
	We define a homomorphism $f:\mc L_0\rightarrow \mc L'_X$ such that $f$ maps $\ket{\{\mc O^Z_\alpha\}}$ to the state $\ket{\{\Delta^Z_\alpha\}}$ with the eigenvalues satisfying $\Delta^Z_\alpha=\mc O^Z_\alpha$. As long as $f$ is well-defined which is to be proved below, the operator $\mc O^X_i$ is mapped to $\Delta^X_i$ under this map, i.e. $f \mc O^X_i=\Delta^X_i f$. This is because the commuting or anticommuting relations between $\{ \mc O^Z_\alpha \}$ and $\{\mc O^X_i\}$ are the same as those between $\{ \Delta^Z_\alpha \}$ and $\{ \Delta^X_i \}$. We will prove that $f$ is actually an isomorphism from $\mc L_0$ to the subspace $\mc L_0'\subset\mc L_X'$. 
	\[
	\xymatrix{
	\mc L_0 \ar@{^{(}->}[r]\ar[dr]^f \ar@{<->}[d]_{\cong} & \mc L_Z \ar@{^{(}->}[r] & \mc L \\
	\mc L'_0 \ar@{^{(}->}[r] & \mc L'_X \ar@{^{(}->}[r] & \mc L' 
	}
	\]
	
	Since the $\Delta^Z_\alpha$ operators are not necessarily independent, we need to first confirm that $f$ is well-defined, i.e. the desired image state always exists in $\mc L_X'$. Although $\Delta^Z_\alpha$ may not be independent, they are generated from an independent subset of operators which are able to independently take eigenvalues $\pm 1$. Therefore, to prove that $f$ is well-defined, it suffices to show that whenever $\prod_{\alpha\in S}\Delta^Z_\alpha=1$ in $\mc L'$ for some set $S$, $\prod_{\alpha\in S}\mc O^Z_\alpha=1$ is also true in $\mc L_0$. This is not hard; since $\prod_{\alpha\in S}\Delta^Z_\alpha$ commutes with all $\Delta^X_i$, $\prod_{\alpha\in S}\mc O^Z_\alpha$ must commute with all $\mc O^X_i$ by the duality, and is therefore either the identity or a $Z$-type symmetry of the original theory which is set to $1$ in $\mc L_0$. 
	
	Next, we would like to show $\operatorname{Im} f\subset\mc L_0'$. Corollary \ref{thm:GIModelCompleteStabilizers} says that $\{\Delta^Z_\alpha\}$ and all the $X$-type symmetry generators of the dual theory form a complete set of stabilizers in $\mc L'$. It follows that each $Z$-type symmetry generator of the dual theory is a product of some $\Delta^Z_\alpha$ operators, which is then dual to either the identity or a $Z$-type symmetry generator of the original theory. Given that all symmetry generators are set to $1$ in $\mc L_0$, the image of $f$ must be contained in $\mc L_0'$.  
	
	From the definition, we see that $f$ maps a basis of $\mc L_0$ to a set of linearly independent states in $\mc L_0'$. It follows that $f$ is injective and $\operatorname{dim} \mc L_0=\operatorname{dim} \operatorname{Im} f\leq \operatorname{dim} \mc L_0'$. Now, we can construct another map $g:\mc L_0'\rightarrow \mc L_Z$ similar to $f$ (the roles of $X$ and $Z$ need to be exchanged), and eventually show that $\operatorname{dim} \mc L_0'\leq \operatorname{dim} \mc L_0$. Therefore, we must have $\operatorname{dim} \mc L_0'=\operatorname{dim} \mc L_0=\operatorname{dim} \operatorname{Im} f$, which implies that the restricted map $f:\mc L_0\rightarrow \mc L_0'$ is one-to-one. This is the isomorphism claimed by the theorem. 
\end{proof}

\section{A toy example for sanity checks}\label{app:AdditionalExample}
In this subsection, we introduce a toy example that will be used later for testing our results. Consider the model
\begin{align}
H= -J\dia{AppExample_GITerm}{-1}{0}-h\dia{XCubeBdry_TFTerm}{-18}{0}, 
\end{align}
and its dual model 
\begin{align}
H'= -J\dia{PIsingAltDual_TFTerm}{-18}{0}-h\dia{AppExampleDual_GITerm}{-1}{0}, 
\end{align}
both defined on a 2D square lattice of the size $L_x\times L_y$, with periodic boundary condition, and with qubits living on the links. One may shift the dual lattice relative to the original one along $y$ by half the lattice constant, such that the centers of each dual pair of operators coincide. We emphasize that there is no vertical-link single $Z$ (single $X$) term in $H$ ($H'$). 
The $Z$-type symmetries of $H$ are the same as those for Eq.\,\ref{eq:XCubeBdryTheory}. Each $X$-type symmetry generator of $H$ is a product of all vertical-link $X$ operators along even number of vertical lines, which is nonlocal. The dual model $H'$ is equivalent to $H$ under the substitutions $J\leftrightarrow h$ and $Z\leftrightarrow X$. 
One may check that $n_Z=m_X=2$ and $n_X=m_Z=L_x-1$. A three-dimensional bulk model can thus be constructed according to our prescription. 

\section{Ground state degeneracy of the bulk model}\label{app:BulkModelGSD}
In this section, we compute the GSD of the bulk model with periodic boundary condition along the out-of-layer direction. We denote by $L\in 2\mathbb{Z}$ the total number of layers. 

We need to introduce some new parameters that enter our final result. Let $\mc G_Z=\ex{U^Z_1,\cdots,U^Z_{n_Z}}$ be the group generated by the $U^Z$ operators of the original theory. We define two subgroups of it: 
\begin{itemize}
	\item $\mc G_{Z,1}=$\{$g\in \mc G_Z$| $g=\prod_{\alpha\in A}\mc O^Z_\alpha\prod_{r\in R} G^Z_r$ for some sets $A$ and $R$\}. 
	
	\item $\mc G_{Z,0}=$\{$g\in \mc G_Z$| $g=\prod_{\alpha\in A}\mc O^Z_\alpha\prod_{r\in R} G^Z_r$ for some sets $A$ and $R$ satisfying $\prod_{\alpha\in A}\Delta^Z_\alpha=1$\}. 
\end{itemize}
Obviously, $\mc G_{Z,0}\subset \mc G_{Z,1}$. We can always redefine the $U^Z$ operators such that, for some integers $\nu$ and $\bar\nu$, with $\bar{\nu}\geq \nu$ $\mc G_{Z,0}=\ex{U^Z_1,\cdots,U^Z_\nu}$ and $\mc G_{Z,1}=\ex{U^Z_{1},\cdots U^Z_{\bar\nu}}$. Similarly, let $\mc G'_X=\ex{\Omega^X_1,\cdots,\Omega^X_{m_X}}$ be the group generated by the $\Omega^X$ operators of the dual theory. We define the following two subgroups: 
\begin{itemize}
	\item $\mc G'_{X,1}=$\{$g\in \mc G'_X$| $g=\prod_{i\in I}\Delta^X_i\prod_{\rho\in R} 
	\Gamma^X_\rho$ for some sets $I$ and $R$\}. 
	
	\item $\mc G'_{X,0}=$\{$g\in \mc G'_X$| $g=\prod_{i\in I}\Delta^X_i\prod_{\rho\in R} 
	\Gamma^X_\rho$ for some sets $I$ and $R$ satisfying $\prod_{i\in I}\mc O^X_i=1$\}. 
\end{itemize}
We have $\mc G'_{X,0}\subset \mc G'_{X,1}$. We can always redefine the $\Omega^X$ operators such that, for some integers $\mu$ and $\bar\mu$, with $\bar\mu\geq \mu$, $\mc G'_{X,0}=\ex{\Omega^X_1,\cdots,\Omega^X_\mu}$ and $\mc G'_{X,1}=\ex{\Omega^X_{1},\cdots,\Omega^X_{\bar\mu}}$. We show two examples below in which $n_Z>\bar{v}$.

Our main result for this section is the following. 
\begin{theorem}\label{thm:BulkModelGSD}
	Let $D$ be the GSD of the bulk model, then
	\begin{align}
		\log_2 D=&~n_X+m_Z\nonumber\\
		&-[(\bar\nu-\nu)+(\bar\mu-\mu)]\nonumber\\
		&+[(n_Z-\nu)+(m_X-\mu)](L/2). 
	\end{align}
	%\begin{align}
	%	\log_2 D&=n_X+m_Z\nonumber\\
	%	&+[(\bar\nu-\nu)+(\bar\mu-\mu)](L/2-1)\nonumber\\
	%	&+[(n_Z-\bar\nu)+(m_X-\bar\mu)](L/2). 
	%\end{align}
\end{theorem}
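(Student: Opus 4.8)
The plan is to compute $\log_2 D$ directly in the stabilizer formalism. Since every local term of $H_{\rm bulk}$ is either a product of $X$'s or a product of $Z$'s, the code is of CSS type, so $\log_2 D$ equals the $\mathbb{F}_2$-dimension of the space of $Z$-type logical operators: writing $\mc S_Z$ for the subgroup generated by the $Z$-type stabilizers (the $Z$-suspension terms, the $G^Z_{r,2l+1}$, and the $\Gamma^Z_{\sigma,2l}$) and $\mc Z_\perp$ for the group of $Z$-type Paulis commuting with every $X$-type stabilizer, one has $\log_2 D=\dim_{\mathbb{F}_2}\!\big(\mc Z_\perp/\mc S_Z\big)$ (counting only $Z$-logicals loses nothing, as their number equals that of $X$-logicals). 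The first move is to observe that by Theorem~\ref{thm:RobustDegeneracy} the full stabilizer set is a CSLO, so any \emph{local} element of $\mc Z_\perp$ already lies in $\mc S_Z$; hence every class in $\mc Z_\perp/\mc S_Z$ has a \emph{nonlocal} representative and it suffices to enumerate those.

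To enumerate them I would work layer by layer. Writing $A_Z=\prod_l A_Z^{(l)}$ with $A_Z^{(2l-1)}$ on an original layer and $A_Z^{(2l)}$ on a dual layer, commuting with $G^X_{s,2l-1}$ (resp. $\Gamma^X_{\rho,2l}$) forces each $A_Z^{(2l-1)}$ (resp. $A_Z^{(2l)}$) to commute with all local $X$-type symmetries of the GI model (resp. the dual model), while commuting with the $X$-suspension terms $\mc O^X_{i,2l-1}\Delta^X_{i,2l}\mc O^X_{i,2l+1}$ couples adjacent layers into a discrete flux-free condition. Running the height-reduction argument of Theorem~\ref{thm:RobustDegeneracy} — now with the layers periodically identified, so that the reduction terminates at winding operators rather than the identity — one finds that modulo $\mc S_Z$ every such $A_Z$ is a product of two families: family (A), the nonlocal $Z$-type symmetries $U^Z_{k,2l-1}$ of the GI model placed on individual odd layers together with the nonlocal $Z$-type symmetries $\Omega^Z_{k,2l}$ of the dual on individual even layers; and family (B), the $Z$-type operators that detect the nonlocal $X$-type symmetries, namely the winding columns $W_k=\prod_l V^Z_{k,2l-1}$ (with $V^Z_k$ anticommuting with $U^X_k$ and with no other $X$-symmetry generator) and analogous operators built from $Z$-suspension blocks that anticommute with a single $\Omega^X_{k,2l}$. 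The reason every operator in both families commutes with all $X$-type stabilizers is structural: each suspension term touches two original layers (or two dual layers) in the \emph{same} pattern, so the overlap parities cancel in pairs.

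Next I would determine all $\mathbb{F}_2$-relations among this spanning set modulo $\mc S_Z$; this is exactly where the parameters $\nu\le\bar\nu\le n_Z$ and $\mu\le\bar\mu\le m_X$ enter, through the chains $\mc G_{Z,0}\subset\mc G_{Z,1}\subset\mc G_Z$ and $\mc G'_{X,0}\subset\mc G'_{X,1}\subset\mc G'_X$. For $U^Z_k$ expanded (modulo $G^Z$'s) as $\prod_\alpha\mc O^Z_\alpha$, the matching product of $Z$-suspension terms equals $U^Z_{k,2l-1}$ dressed by the dual operator $\prod_\alpha\Delta^Z_\alpha$ on the two neighbouring even layers: if $U^Z_k\in\mc G_{Z,0}$ ($\nu$ such generators) this dressing is the identity, so $U^Z_{k,2l-1}\in\mc S_Z$ for all $l$ and the generator drops out; if $U^Z_k\in\mc G_{Z,1}\setminus\mc G_{Z,0}$ ($\bar\nu-\nu$ generators) the dressing is a nontrivial dual symmetry, so on odd layers this $U^Z$ is expressible through the corresponding $\Omega^Z$ on even layers and, summing over all odd layers with the periodic identification, one gets one relation $\prod_l U^Z_{k,2l-1}\in\mc S_Z$, the merged pair contributing $L/2$ dimensions; and if $U^Z_k\notin\mc G_{Z,1}$ ($n_Z-\bar\nu$ generators) there is no suspension relation and all $L/2$ layer copies are independent. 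The $\Omega^Z_k$ not absorbed by a pairing ($m_Z-(\bar\nu-\nu)$ of them) satisfy $\Omega^Z_{k,2l}\equiv\Omega^Z_{k,2l+2}$ and contribute one dimension each. Adding these gives $(n_Z-\nu)(L/2)+m_Z-(\bar\nu-\nu)$ from family (A), and the mirror computation on the dual side — with $\Delta^X,\Gamma^X,\Omega^X,U^X$ in the roles of $\mc O^Z,G^Z,U^Z,\Omega^Z$ — gives $(m_X-\mu)(L/2)+n_X-(\bar\mu-\mu)$ from family (B). The two families span complementary subspaces (a family (A) operator, being a $Z$-type symmetry, commutes with all the $X$-type symmetry operators that detect family (B)), so $\log_2 D$ is the sum, which is the claimed formula.

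The hard part is the completeness claim in the second step together with the exactness of the relations in the third: one must show that no exotic nonlocal $Z$-logical operator (e.g. one winding diagonally, or with fractal support) escapes the two families, and that the listed relations are \emph{all} of them and are independent. This amounts to solving a layer-indexed linear system over $\mathbb{F}_2$, and the substance of the argument is precisely that $\mc G_{Z,0}\subset\mc G_{Z,1}\subset\mc G_Z$ and $\mc G'_{X,0}\subset\mc G'_{X,1}\subset\mc G'_X$ are the right invariants to make the bookkeeping close; I would carry this out by re-examining the reduction moves of Theorem~\ref{thm:RobustDegeneracy} and tracking exactly which ones fail once the layers form a circle. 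When the GI model, its dual, and $H_{\rm bulk}$ all have full translation symmetry, an independent check is available by passing to the polynomial (Laurent-ring over $\mathbb{F}_2$) representation and reading $\log_2 D$ off a normal form of the stabilizer map, with $\nu,\bar\nu,\mu,\bar\mu$ encoded in factorizations of the relevant polynomials; but the general statement needs the direct argument.
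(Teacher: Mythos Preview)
Your approach is valid in outline but is a genuine detour compared with the paper's argument, and it also leaves a real gap.

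The paper never computes $\mc Z_\perp/\mc S_Z$. Instead it \emph{extends} the bulk stabilizer set to a complete one by adjoining the nonlocal symmetry operators $U^Z_{k,2l-1}$, $U^X_{k,2l-1}$, $\Omega^Z_{k,2l}$, $\Omega^X_{k,2l}$ on all layers, and shows (by a reduction essentially identical to Theorem~\ref{thm:RobustDegeneracy}) that these generate everything commuting with $H_{\rm bulk}$; then $\log_2 D$ is simply the number of these extra operators that are independent modulo $H_{\rm bulk}$. The relations are listed explicitly: $U^X_{k,2l-1}$ and $\Omega^Z_{k,2l}$ collapse to a single layer each; $U^Z_{k\le\nu,2l-1}$ and $\Omega^X_{k\le\mu,2l}$ die; and for $\nu<k\le\bar\nu$ (resp.\ $\mu<k\le\bar\mu$) only the single global relation $\prod_l U^Z_{k,2l-1}\in\mc S_Z$ (resp.\ $\prod_l\Omega^X_{k,2l}\in\mc S_X$) survives. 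Independence is established by exhibiting, for each retained operator, a sign-flipper that commutes with $H_{\rm bulk}$ and with the other retained operators. This is short and self-contained: no ``family~(B)'' ever appears, because both $X$- and $Z$-type nonlocal symmetries are used directly as the extra commuting stabilizers.

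Your family~(A) is exactly the paper's $Z$-type portion of that list, and your ``mirror computation'' for family~(B) is literally the paper's count of the $X$-type portion. What you then do is dualize each $X$-type extra stabilizer to a $Z$-type logical (the $W_k$, etc.)\ so as to express everything inside $\mc Z_\perp/\mc S_Z$. That dualization is correct in principle but buys nothing: once you have counted the $X$-type extras you are done, and the paper stops there. The main gap in your write-up is that the ``analogous operators built from $Z$-suspension blocks that anticommute with a single $\Omega^X_{k,2l}$'' are only named, not constructed; for $\mu<k\le\bar\mu$ the paper's sign-flipper is a three-step composite and the $Z$-type dual you would need is equally composite, so this case cannot be dismissed as ``analogous''. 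A secondary imprecision is the ``merged pair'' bookkeeping in family~(A): you assert that exactly $\bar\nu-\nu$ of the $\Omega^Z$ generators are absorbed by pairing with $U^Z$'s and that each unabsorbed $\Omega^Z$ contributes one dimension, but the relation $\Omega^Z_{j,2l}\Omega^Z_{j,2l+2}\equiv\prod_{k\in K_j}U^Z_{k,2l+1}$ (with $K_j$ determined by the dual of $\Omega^Z_j$) shows that an ``unabsorbed'' $\Omega^Z$ need not be layer-independent on its own---only modulo the $U^Z$'s already counted. The paper's enumeration handles this automatically by keeping $\Omega^Z_{k,2l_0}$ for a single $l_0$ and all the relevant $U^Z$'s, whereas your phrasing hides the dependence.
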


\noindent\textbf{Examples. }As the first example, consider the X-cube order described by the Hamiltonian in Eq.\,\ref{eq:XCubeModel} that is constructed from Eq.\,\ref{eq:XCubeBdryTheory} and Eq.\,\ref{eq:DualPIsingModel}. We put the model on a 3D cubic lattice with $L_x\times L_y\times L_z$ number of vertices, with periodic boundary condition along all three directions. The number of layers $L$ is given by $L=2L_z$. The model has $n_Z=2$, $n_X=L_x+L_y-2$, $m_X=0$, $m_Z=L_x+L_y-1$, and $\bar\nu=\nu=\bar\mu=\mu=0$. It follows that $\log_2D=2(L_x+L_y+L_z)-3$, which is a well-known result \cite{VijayXCube}. 

As the second example, consider the model in Appendix \ref{app:AdditionalExample}. We take periodic boundary condition along the out-of-layer direction, and the number of layers $L$ has to be an even integer. The model has $n_Z=m_X=2$, $n_X=m_Z=L_x-1$, 
$\bar\nu=\bar\mu=1$, and 
\begin{align}
	\nu=\mu=
	\begin{cases}
	1 & (L_x\text{ odd})\\
	0 & (L_x\text{ even})
	\end{cases}. 
\end{align}
It follows that 
\begin{align}
	\log_2D=\begin{cases}
	2L_x+L-2 & (L_x\text{ odd})\\
	2L_x+2L-4 & (L_x\text{ even})
	\end{cases}, 
\end{align}
which we have verified numerically. 

The rest of this section is to prove the theorem. Our strategy for counting the GSD is as follows. We extend the set of stabilizers in $H_{\rm bulk}$ to a complete one by including $r$ number of additional stabilizers $C_1,C_2,\cdots,C_r$ which are independent modulo the stabilizers in $H_{\rm bulk}$, i.e. each $C_k$ can not be generated by the other $C_{k'}$ operators and the stabilizers in $H_{\rm bulk}$. Then the GSD is nothing but $2^r$. The first useful result is the following. 
\begin{lemma}
	The stabilizers in $H_\text{bulk}$ together with all the nonlocal symmetry operators, namely $U^Z_{k,2l-1}$, $U^X_{k,2l-1}$, $\Omega^X_{k,2l}$, and $\Omega^Z_{k,2l}$, form a complete set of stabilizers. 
\end{lemma}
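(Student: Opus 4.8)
The plan is to prove the two halves of the statement separately: that the listed operators genuinely form a set of stabilizers, and that this set is complete, i.e.\ contains $N$ independent elements, with $N$ the total number of qubits. The first half I would settle by inspection. Each $U^Z_{k,2l-1}$ and $U^X_{k,2l-1}$ is supported on a single original layer, and each $\Omega^X_{k,2l}$, $\Omega^Z_{k,2l}$ on a single dual layer, so operators on different layers commute automatically; within a layer mutual commutation is exactly the compatibility built into the definitions of the symmetry groups. Commutation with the suspension and gauge terms of $H_{\rm bulk}$ rests on the elementary fact that conjugating a Pauli $X$-string by a Pauli $Z$-string returns the same string up to sign: a $Z$-type symmetry of the GI model or of its dual therefore commutes with \emph{each} $\mc O^X_i$ (resp.\ $\Delta^X_i$) individually, not merely with the combination $\sum_i h_i\mc O^X_i$ in the Hamiltonian, and dually for $X$-type symmetries against the $\mc O^Z_\alpha$, $\Delta^Z_\alpha$. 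That the generated group omits $-I$ I would justify as in Section \ref{sec:BulkTheory}, where each of these operators is shown to act with both eigenvalues $\pm1$ on the ground space of $H_{\rm bulk}$, assignable independently of the others.

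For completeness it suffices, by Lemma \ref{thm:StabilizerExtLemmaGeneral} and Corollary \ref{thm:CompleteIffMaximal}, to show the set is maximal: any Pauli operator $\mc A$ commuting with all stabilizers of $H_{\rm bulk}$ and with all the listed operators is, up to a sign, a product of them. Writing $\mc A=\mc A_Z\mc A_X$ with $\mc A_Z$ ($\mc A_X$) a product of Pauli $Z$ (resp.\ $X$) operators, and using that every stabilizer of $H_{\rm bulk}$ and every listed operator is purely $X$- or $Z$-type, $\mc A_Z$ and $\mc A_X$ each commute with all of them; by the $X\leftrightarrow Z$, original$\leftrightarrow$dual symmetry of the construction it is enough to handle $\mc A_Z$.

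I would then reduce $\mc A_Z$ layer by layer, following the proof of Theorem \ref{thm:RobustDegeneracy} but adapted to the periodic (toroidal) stacking direction. On an even layer $2l$, $(\mc A_Z)_{2l}$ commutes with all $\Gamma^X_{\rho,2l}$ (terms of $H_{\rm bulk}$) and all $\Omega^X_{k,2l}$ (listed), hence with every $X$-type symmetry generator of the dual theory; since $\{\Delta^Z_\alpha\}$ together with those generators is a complete set of stabilizers in the dual Hilbert space (Corollary \ref{thm:GIModelCompleteStabilizers}), $(\mc A_Z)_{2l}$ must be a product of $\Delta^Z$ operators. Multiplying $\mc A_Z$ by $Z$-suspension stabilizers $\Delta^Z_{\alpha,2l-2}\mc O^Z_{\alpha,2l-1}\Delta^Z_{\alpha,2l}$ one clears the even layers one at a time until only one, say layer $2$, still carries $\Delta^Z$-content $\prod_{\alpha\in B}\Delta^Z_{\alpha,2}$. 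With the other even layers empty, the $X$-suspension constraints straddling them make the commutation pattern of $(\mc A_Z)_{2l-1}$ against the $\mc O^X_i$ the same on every odd layer, and feeding this into the $X$-suspension term straddling layer $2$ forces $\prod_{\alpha\in B}\Delta^Z_\alpha$ to commute with every $\Delta^X_i$ — it is thus a $Z$-type (and automatically compatible, since it commutes with each $\Delta^Z_\alpha$) symmetry of the dual, so a product of $\Gamma^Z$ and $\Omega^Z$ operators, and layer $2$ clears as well. Now $\mc A_Z$ lives on odd layers only; on each, $(\mc A_Z)_{2l-1}$ commutes with all $G^X_{s,2l-1}$ and $U^X_{k,2l-1}$, hence with every $X$-type symmetry generator of the GI model, so by Corollary \ref{thm:GIModelCompleteStabilizers} it equals a product of $\mc O^Z$'s times $G^Z$ and $U^Z$ operators; discarding the latter (stabilizers) leaves $\prod_{\alpha\in A_l}\mc O^Z_{\alpha,2l-1}$, and the $X$-suspension constraints force the sets $A_l$ to coincide modulo $Z$-type symmetries of the GI model. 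Finally the telescoping identity $\prod_l(\Delta^Z_{\alpha,2l}\mc O^Z_{\alpha,2l+1}\Delta^Z_{\alpha,2l+2})=\prod_{\text{odd layers}}\mc O^Z_{\alpha,\,\cdot}$, which closes up precisely because the number of layers is even, exhibits the residual operator as a product of $Z$-suspension stabilizers. Hence $\mc A_Z$, and by symmetry $\mc A_X$, and so $\mc A$, is generated by the stabilizers of $H_{\rm bulk}$ together with the listed nonlocal symmetry operators.

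The step I expect to be the main obstacle is the toroidal stacking direction. Unlike the open-boundary reduction used for Theorem \ref{thm:RobustDegeneracy}, there is no extremal layer to anchor the reduction, so content cannot simply be pushed off the end; one has to park the leftover even-layer piece on a chosen layer and then \emph{recognize} it — via the $X$-suspension constraints across the now-empty layers — as a genuine compatible $Z$-type symmetry of the dual, and likewise recognize the common odd-layer piece through the wraparound telescoping identity. Keeping careful track of which local and nonlocal symmetry generators are available to absorb each piece — in particular the identification of $\{\Gamma^X_\rho\}\cup\{\Omega^X_k\}$ with the full set of $X$-type symmetry generators of the dual, which is what licenses the appeal to Corollary \ref{thm:GIModelCompleteStabilizers} — is the part of the argument most prone to error.
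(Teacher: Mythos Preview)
Your proof is correct and takes essentially the same approach as the paper: reduce to maximality via Corollary~\ref{thm:CompleteIffMaximal}, split $\mc A=\mc A_Z\mc A_X$, and clear layers by recognizing their content (through commutation with the single-layer symmetry operators and the suspension constraints) as products of Hamiltonian local terms and local/nonlocal symmetry generators, finishing with the telescoping identity on the torus. The only cosmetic difference is the order of operations --- you clear the residual single even layer \emph{before} tackling the odd layers, whereas the paper collapses the odd layers first, homogenizes and telescopes the even layers, and identifies the leftover single odd layer as a symmetry last; both orderings use the same ingredients.
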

\begin{proof}
	As an equivalent statement, any operator $\mc A$ that is a product of Pauli operators and commutes with all these stabilizers can be generated by them (see Corollary \ref{thm:CompleteIffMaximal}). As in Theorem \ref{thm:RobustDegeneracy}, we write $\mc A=\mc A_Z\mc A_X$. If suffices to show that both $\mc A_Z$ and $\mc A_X$ are generated by the stabilizers in the theorem statement. 
	
	We will focus on $\mc A_X$, and $\mc A_Z$ is similar. Since $\mc A_X$ commutes with all the $G^Z$ and $U^Z$ operators, each odd layer of $\mc A_X$ is a product of several $\mc O^X$ operators. Therefore, using the $X$-suspension operators, we are able to remove the $X$ operators in $\mc A_X$ on all but one odd layers. Say this remaining odd layer is just the 1st layer. We may thus assume $\mc A_X$ only contains $X$ operators on the 1st layer or with even layer indices. To commute with the $Z$-suspension terms, different even layers of $\mc A_X$ must differ from each other only by some $\Gamma^X$ and $\Omega^X$ operators. Thus, using the $\Gamma^X$ and $\Omega^X$ operators, $\mc A_X$ can be cast to the form $\mc A_X^1\prod_{p\in\Lambda}\prod_{l=1}^{L/2}X_{p,2l}$ for some set $\Lambda$, where $\mc A_X^1$ is an operator that only acts on the 1st layer. 
	This $\mc A_X$ has to commute with the $\Omega^Z_{k,2l}$ and $\Gamma^Z_{\sigma,2l}$ operators, which are the $Z$-type symmetry generators for the dual theory. It follows from our Lemma \ref{thm:StabilizerExtLemmaXZ} that $\prod_{p\in\Lambda} X_p$ must be a product of $\Delta^X_i$, $\Gamma^X_\rho$ and $\Omega^X_k$.
	Therefore, using the $X$-suspension operators, the $\Gamma^X$ and the $\Omega^X$ operators, we are able to reduce $\mc A_X$ to an operator that acts on the 1st layer. In order to commute with the $G^Z$, $U^Z$ and $Z$-suspension operators, this single-layer operator is an $X$-type symmetry generator of the original theory, and thus a product of the $G^X$ and $U^X$ operators. This completes the proof.  
\end{proof}

Next, we shall restrict the nonlocal symmetry operators to a subset such that, modulo the stabilizers in $H_{\rm bulk}$, the subset is independent and equivalent to the original set. First notice that, starting from $U^X_{k,2l_0-1}$ for some $l_0$, using the $X$-suspension, $\Gamma^X$ and $\Omega^X$ operators, we are able to generate $U^X_{k,2l-1}$ for all $l$. This is because each $U^X_k$ operator acting on the original lattice can be written as a product of several $\mc O^X$ operators, and is then dual to an $X$-type symmetry generator of the dual theory. It thus suffices to retain $U^X_{k,2l_0-1}$ while dropping the $U^X$ operators acting on all the other layers. Similarly, it suffices to retain $\Omega^Z_{k,2l_0}$ while dropping the $\Omega^Z$ operators acting on all the other layers. Secondly, according to the definition of $\mc G_{Z,0}$, each $U^Z_{k,2l-1}$ with $k\leq\nu$ is a product of some $Z$-suspension and $G^Z$ operators, and thus can be dropped. Similar for the $\Omega^X_{k,2l}$ operators with $k\leq \mu$. Thirdly, according to the definition of $\mc G_{Z,1}$, $\prod_{l=1}^{L/2}U^Z_{k,2l-1}$ for $\nu<k\leq\bar\nu$ can be generated by the $Z$-suspension and $G^Z$ operators, thus we can drop the $U^Z_{k,1}$ operators acting on the 1st layer with $\nu<k\leq\bar\nu$. Similarly, we can drop the $\Omega^X_{k,2}$ operators acting on the 2nd layer with $\mu<k\leq\bar\mu$. After all these steps, we are left with the following reduced set of nonlocal symmetry operators: 
\begin{itemize}
	\item $U^X_{k,2l_0-1}$ for all $k$ and \emph{some} $l_0$; $\Omega^Z_{k,2l_0}$ for all $k$ and \emph{some} $l_0$. 
	\item $U^Z_{k,2l-1}$ for $\nu<k\leq\bar\nu$ and $l\neq 1$; $\Omega^X_{k,2l}$ for $\mu<k\leq\bar\mu$ and $l\neq 1$. 
	\item $U^Z_{k,2l-1}$ for $\bar\nu<k\leq n_Z$ and all $l$; $\Omega^X_{k,2l}$ for $\bar\mu<k\leq m_X$ and all $l$. 
\end{itemize}
We complete the proof for Theorem \ref{thm:BulkModelGSD} by the following claim. 
\begin{lemma}
	The above stabilizers are independent modulo the stabilizers in $H_{\rm bulk}$. Equivalently, they can independently take the eigenvalues $\pm 1$ in the ground state subspace. 
\end{lemma}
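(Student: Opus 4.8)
\emph{Proof strategy.} The plan is to show that the displayed reduced list is \emph{independent modulo the stabilizers in} $H_{\rm bulk}$, i.e.\ that no product of a nonempty subset of it equals, possibly up to an overall sign, a product of the stabilizers in $H_{\rm bulk}$. Granting this, the ``equivalently'' clause follows from Appendix~\ref{app:StabilizerExtension}: the ground subspace is the common $+1$-eigenspace of the stabilizers in $H_{\rm bulk}$, every operator of the reduced list commutes with those stabilizers (being a symmetry operator placed on some layers) and hence restricts to a commuting Hermitian involution on the ground subspace, and such involutions realize all sign patterns precisely when no nonempty product of them acts as a scalar there, i.e.\ precisely when no such product lies in $\pm\langle\text{bulk stabilizers}\rangle$. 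To prove independence I will exhibit, for \emph{each} operator $O$ in the reduced list, a Pauli \emph{conjugate partner} $\widetilde O$ that commutes with every stabilizer in $H_{\rm bulk}$ and with every other operator of the reduced list, but anticommutes with $O$. Then if $\prod_{O\in T}O=\pm\,Q$ with $Q$ a product of bulk stabilizers and $T\neq\varnothing$, picking $O_0\in T$ we find that $\widetilde O_0$ commutes with $Q$, hence with $\prod_{O\in T}O$, yet anticommutes with it --- a contradiction. So everything reduces to constructing the partners.

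\emph{Partners for the single-layer symmetry operators.} For $U^X_{k,2l_0-1}$ I reuse $W_k=\prod_l V^Z_{k,2l-1}$ from Section~\ref{sec:BulkTheory}, where $V^Z_k$ anticommutes with $U^X_k$ but commutes with every other $X$-type symmetry generator (local or nonlocal) of the original model. As explained there, $W_k$ commutes with $H_{\rm bulk}$ --- the only nontrivial check being against the $X$-suspension terms, where a sign from $V^Z_k$ against an $\mc O^X$ end occurs on \emph{both} odd layers of the term and hence squares away --- and, being supported on odd layers only, it is disjoint from the $\Gamma$ and $\Delta^X$ operators and trivially commutes with the $\Omega^X_{k',2l}$; being $Z$-type it commutes with $\Omega^Z_{k',2l_0}$ and with the $U^Z$ operators; it commutes with $U^X_{k',2l_0-1}$ for $k'\neq k$ by the defining property of $V^Z_k$; and it anticommutes with $U^X_{k,2l_0-1}$. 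The partner of $\Omega^Z_{k,2l_0}$ is the mirror construction $\prod_l\Theta^X_{k,2l}$ with $\Theta^X_k$ anticommuting with $\Omega^Z_k$ only among the $Z$-type symmetry generators of the dual model; its verification is identical with original$\leftrightarrow$dual and $X\leftrightarrow Z$ exchanged.

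\emph{Partners for the internal-layer operators.} Consider $U^Z_{k,2l-1}$ with $k>\nu$ (the $\Omega^X_{k,2l}$ with $k>\mu$ are dual). Since $U^Z_k$ is independent of $\{G^Z_r\}\cup\{U^Z_{k'}:k'\neq k\}$, there is a product $p$ of Pauli $X$'s on the original lattice anticommuting with $U^Z_k$ and commuting with every $G^Z_r$ and $U^Z_{k'}$ ($k'\neq k$); when $k>\bar\nu$ one may additionally demand $p$ commute with every $\mc O^Z_\alpha$, which is consistent precisely because then $U^Z_k\notin\langle\{\mc O^Z_\alpha\}\cup\{G^Z_r\}\cup\{U^Z_{k'}:k'\neq k\}\rangle$. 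In that case $p$ placed on layer $2l-1$ already commutes with all of $H_{\rm bulk}$ and with every other member of the list, and anticommutes only with $U^Z_{k,2l-1}$, so it is the partner. When $\nu<k\le\bar\nu$ no single-layer partner exists, because $U^Z_k=\prod_{\alpha\in A}\mc O^Z_\alpha\prod_{r\in R}G^Z_r$ forces $p$ to anticommute with some $\mc O^Z_\alpha$; there I dress $p_{2l-1}$ by a string of $X$-suspension terms running up and down from layer $2l-1$. By the generalized Kramers--Wannier map the anticommutation pattern of $p$ with the $\mc O^Z_\alpha$ matches that of the corresponding dual $X$-operator with the $\Delta^Z_\alpha$, so the successive factors $\Delta^X$ on an even layer and $\mc O^X$ on the next odd layer cancel the offending signs one layer at a time, and the string closes up consistently around the periodic stack by the telescoping identity for $X$-suspensions. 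Any residual anticommutation of this closed string with a retained $\Omega^Z_{k',2l_0}$ or with another retained $U^Z$ operator is then removed by further multiplying by the $\Omega^Z$-flipping strings $\prod_l\Theta^X_{k',2l}$ and by internal-layer $U^Z$-flipping operators --- all of which commute with $H_{\rm bulk}$ --- chosen so as not to touch $U^Z_{k,2l-1}$ itself; that this can be arranged is controlled by the subgroup chains $\mc G_{Z,0}\subset\mc G_{Z,1}$ and $\mc G'_{X,0}\subset\mc G'_{X,1}$ and the definitions of $\nu,\bar\nu,\mu,\bar\mu$.

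\emph{Main obstacle, and an alternative.} The hard part is this last step: for $\nu<k\le\bar\nu$, simultaneously arranging that the partner (i) commutes with \emph{all} $Z$-suspension terms, (ii) commutes with every retained $\Omega^Z$ operator, and (iii) anticommutes with $U^Z_{k,2l-1}$ but with no other entry of the reduced list --- these requirements pull against one another, and reconciling them is where the generalized Kramers--Wannier duality and the auxiliary parameters $\nu,\bar\nu,\mu,\bar\mu$ do the real work. An alternative that bypasses explicit partners: the paragraph preceding this lemma shows that the reduced list together with the stabilizers in $H_{\rm bulk}$ generates the same group as those stabilizers together with \emph{all} nonlocal symmetry operators, which by the previous lemma is a complete set of stabilizers; hence $\log_2 D$ equals the size of a maximal sublist that is independent modulo $H_{\rm bulk}$, and one may instead verify --- via the same telescoping identities and an $\mathbb F_2$ rank count organized by $\nu,\bar\nu,\mu,\bar\mu$ --- that no proper sublist is already complete, which forces the whole list to be independent.
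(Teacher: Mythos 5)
Your overall strategy is the same as the paper's: for each operator in the reduced list, exhibit a ``partner'' that anticommutes with it while commuting with every bulk stabilizer and every other entry of the list. Your treatment of the single-layer operators $U^X_{k,2l_0-1}$, $\Omega^Z_{k,2l_0}$ and of $U^Z_{k>\bar\nu,2l-1}$, $\Omega^X_{k>\bar\mu,2l}$ matches the paper and is correct. But the case you yourself flag as the ``main obstacle'' --- $U^Z_{k,2l-1}$ with $\nu<k\le\bar\nu$ and $l\neq 1$ (and its dual) --- is a genuine gap: you never produce the partner, and your sketch of it is off in two ways. First, ``dressing by a string of $X$-suspension terms'' cannot repair anything, since $X$-suspension terms are themselves bulk stabilizers and multiplying by them changes nothing modulo the bulk group; the operators actually needed are \emph{not} pieces of $X$-suspension terms but auxiliary $X$-type operators $E^X_{\alpha_n}$ on even layers, each chosen (as a dual vector over $\mathbb{F}_2$) to anticommute with exactly one member of an independent subset of the $\Delta^Z$ operators. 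Second, your closing move --- ``further multiplying by $\Omega^Z$-flipping strings and internal-layer $U^Z$-flipping operators chosen so as not to touch $U^Z_{k,2l-1}$'' --- is circular, since constructing those very flipping operators is what this case of the lemma is about.

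The mechanism that actually closes the case, and which your proposal misses, is the deliberate exclusion of the $l=1$ entries $U^Z_{k,1}$ ($\nu<k\le\bar\nu$) from the reduced list. The paper's partner is built in three steps: (1) apply $V^X_{k,2l-1}$, which flips $U^Z_{k,2l-1}$ but breaks some $Z$-suspension operators centered on layer $2l-1$; (2) repair only an \emph{independent} subset of those (the rest are generated from them together with $G^Z_{r,2l-1}$ and $U^Z_{k\le\nu,2l-1}$) using half-open strings $\prod_{l'}E^X_{\alpha_n,2l'}$ of even-layer operators run in the direction that avoids layer $2l_0$, which transfers the sign defects to the $Z$-suspension operators centered on layer $1$; (3) apply $V^X_{k,1}$ to absorb those, at the cost of flipping $U^Z_{k,1}$ --- harmless precisely because that operator was dropped from the list (equivalently, $\prod_l U^Z_{k,2l-1}$ for $\nu<k\le\bar\nu$ is already a product of bulk stabilizers, so only $L/2-1$ of its layers can be independent). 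Your proposed ``alternative'' via a rank count is not carried out either: showing that no proper sublist is complete is exactly the independence statement restated, so it does not bypass the construction.
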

\begin{proof}
	It suffices to find operators that can independently flip the signs of the above stabilizers without affecting the stabilizers in $H_{\rm bulk}$. In the main text, we have discussed how to flip the signs of $U^X_{k,2l_0-1}$ or $\Omega^Z_{k,2l_0}$, and let us not repeat it here. According to the definition of $\mc G_{Z,1}$ and $\bar\nu$, each $U^Z_{k>\bar\nu}$ is independent from (not a product of) the $\mc O^Z$ and $G^Z$ operators. As a result, there exist some $X$-type operators $V^X_{k>\bar\nu}$ acting on the Hilbert space of the original model such that each $V^X_{k>\bar\nu}$ anticommutes with $U^Z_{k>\bar\nu}$ but commutes with all the other $U^Z_{k'>\bar\nu}$, $\mc O^Z$ and $G^Z$ operators. We can then use the operator $V^X_{k>\bar\nu,2l-1}$ to flip the sign of $U^Z_{k>\bar\nu,2l-1}$, without affecting the other stabilizers listed above and the stabilizers in $H_{\rm bulk}$. The scenario for the $\Omega^X_{k>\bar\mu,2l}$ operators is similar. Flipping the sign of a $U^Z_{k,2l-1}$ operator with $\nu<k\leq\bar\nu$ and $l\neq 1$ is somewhat complicated, consisting of several steps: (1) Let $V^X_{k\leq\bar\nu}$ be $X$-type operators acting on the Hilbert space of the original theory such that $V^X_k$ for each $k\leq \bar\nu$ anticommutes with $U^Z_k$ but commutes with all the other local or nonlocal $Z$-type symmetry generators of the original theory ($G^Z_r$ and $U^Z_{k'}$ for all the other $k'$). Apply $V^X_{k,2l-1}$ with some $k\in\{ \nu+1,\nu+2,\cdots,\bar\nu \}$ to flip the sign of $U^Z_{k,2l-1}$. This will necessarily flip the signs of some $Z$-suspension operators centering on the $(2l-1)$-th layer as well, which we need to fix. (2) We can restrict the $\Delta^Z$ operators of the dual theory to an independent and equivalent subset: $\Delta^Z_{\alpha_1},\Delta^Z_{\alpha_2},\cdots$. It suffices to fix the $Z$-suspension operators centering on the $(2l-1)$-th layer with these $\alpha$-indices ($\alpha_1,\alpha_2,\cdots$) because they can generate all the other $Z$-suspension operators centering on the $(2l-1)$-th layer with the help of the $G^Z_{r,2l-1}$ and $U^Z_{k\leq\nu,2l-1}$ operators. 
	There exists a set of operators $E^X_{\alpha_1},E^X_{\alpha_2},\cdots$ such that $E^X_{\alpha_n}$ anticommutes with $\Delta^Z_{\alpha_n}$ while commutes with all the other independent $\Delta^Z$ operators. We can use $\prod_{l'=1}^{l-1}E^X_{\alpha_n,2l}$ ($\prod_{l'=l}^{L/2}E^X_{\alpha_n,2l}$) when $l_0\geq l$ ($l_0<l$) to fix the eigenvalues of the $Z$-suspension operators centered on the $(2l-1)$-th layer without affecting $\Omega^Z_{k,2l_0}$. This step will flip the signs of some $Z$-suspension operators centering on the 1st layer. (3) Apply $V^X_{k,1}$ with the same $k$ as in the first step. Notice that the $\Gamma^Z$ operators will not be affected by the above steps, because $\Gamma^Z_{\sigma,2l_0}$ are unaffected, and the other $\Gamma^Z$ operators can all be generated from the $\Gamma^Z_{\sigma,2l_0}$, $Z$-suspension, and $G^Z$ operators. 
\end{proof}

\section{Further details on the bulk-boundary correspondence}\label{app:BulkBdryCorrespondence}
This section is devoted to understanding the boundary theory of our bulk model: the boundary Hilbert space $\mc L_{\rm bdry}$ and the boundary Hamiltonian $H_{\rm bdry}$ which are defined in Section \ref{sec:Boundary}. The bulk model is defined in $d+1$ spatial dimensions, and we expect the boundary theory to have an effective $d$-dimensional description that will enable us to examine whether it is anomalous or not. 

In Appendix \ref{app:StabilizersFullHilbertSpace}, we derive a complete and independent set of stabilizers characterizing the full Hilbert space of our system, which turns out to be useful for understanding $\mc L_{\rm bdry}$. We work out a $d$-dimensional effective description of the boundary theory as well as a condition for anomaly in Appendix \ref{app:BBCSimpleCase} for a special case and in Appendix \ref{app:BBCGeneralCase} for the general situation. The analysis uses a classification result about possible local terms in $H_{\rm bdry}$, which is proved in Appendix \ref{app:PossibleBdryTerms}. In Appendix \ref{app:ViolatingAnomalyConds}, we discuss in what cases the anomaly condition is violated. 

\subsection{Stabilizers characterizing the full Hilbert space}\label{app:StabilizersFullHilbertSpace}
Our strategy for analyzing $\mc L_{\rm bdry}$ is as follows. We first find a complete and independent set of stabilizers that characterizes the full Hilbert space, done in this subsection. Then we divide this set into two subsets, such that one subset of stabilizers is equivalent to the stabilizers appearing in the bulk Hamiltonian. It follows that the boundary Hilbert space is characterized by the other subset of stabilizers. Once this is done, we will find a natural $d$-dimensional description of $\mc L_{\rm bdry}$. 

The first useful result is the following. 
\begin{lemma}
	The following operators form a complete (but not independent) set of stabilizers. 
	\begin{enumerate}
		\item All the $\mc O^X$ operators on the two boundary layers. 
		\item All the $X$-suspension and $Z$-suspension terms in the bulk Hamiltonian. 
		\item All $G^Z_{r,2l-1}$ and $U^Z_{k,2l-1}$, namely the $Z$-type symmetry generators of the original theory acting on all odd layers. 
		\item All $\Gamma^X_{\rho,2l}$ and $\Omega^X_{k,2l}$, namely the $X$-type symmetry generators of the dual theory acting on all even layers. 
		\item $\Gamma^Z_{\sigma,2l_0}$ and  $\Omega^Z_{k,2l_0}$ for all $\sigma$, $k$ and {some fixed $l_0$}. 
	\end{enumerate}
\end{lemma}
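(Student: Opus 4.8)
I would establish the three defining properties of a complete set of stabilizers in turn: (i) the listed operators mutually commute; (ii) the group they generate does not contain $-I$; and (iii) every Pauli operator commuting with all of them already lies in that group — which, by Corollary~\ref{thm:CompleteIffMaximal}, is exactly what makes the (independent subset of the) list complete. For (i), the work reduces to opposite-type pairs, since two $X$-type (or two $Z$-type) Pauli products always commute; the $X$-type members are the boundary $\mc O^X$'s, the $X$-suspension terms, and the $\Gamma^X_{\rho,2l},\Omega^X_{k,2l}$, while the $Z$-type members are the $Z$-suspension terms, the $G^Z_{r,2l-1},U^Z_{k,2l-1}$, and the $\Gamma^Z_{\sigma,2l_0},\Omega^Z_{k,2l_0}$. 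Pairs on disjoint layers commute for free, leaving only finitely many collisions: suspension vs.\ suspension (these all appear inside $H_{\rm bulk}$, hence commute by construction), $X$-suspension vs.\ an odd-layer $Z$-generator, boundary $\mc O^X$ vs.\ a $Z$-suspension or an odd-layer $Z$-generator, and an even-layer $\Gamma^X/\Omega^X$ vs.\ a $Z$-suspension or the layer-$2l_0$ $Z$-generators. Each of these is settled by the defining fact (Corollary~\ref{thm:GIModelCompleteStabilizers}) that every $Z$-type symmetry generator of the original model commutes with all its Hamiltonian local terms $\mc O^X_i$ and with every compatible $X$-type generator, together with the dual statement for the dual model, plus the locality of the Kramers--Wannier operator map.

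For completeness (iii) I would avoid redoing a full layer-by-layer reduction and instead show that the present list \emph{generates} the complete set already established in Appendix~\ref{app:BulkModelGSD} (the $H_{\rm bulk}$ stabilizers together with all nonlocal symmetry operators on all relevant layers). The pieces of that set not literally on our list are: the $G^X_{s,2l-1}$ and $U^X_{k,2l-1}$ on odd layers, and the $\Gamma^Z_{\sigma,2l}$ and $\Omega^Z_{k,2l}$ on even layers $\neq 2l_0$. Since each $G^X_s$ and each $U^X_k$ is (locally) generated by the $\mc O^X_i$, on layer $1$ they are products of the boundary $\mc O^X$'s; moving them to any other odd layer with the $X$-suspension terms costs a product of $\Delta^X_i$'s on an intermediate even layer, which — being an $X$-type symmetry of the dual model — is a product of $\Gamma^X,\Omega^X$ (here Lemma~\ref{thm:StabilizerExtLemmaXZ} is what guarantees such structure), all on the list. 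Dually, $\Gamma^Z_\sigma$ (resp.\ $\Omega^Z_k$) is Kramers--Wannier dual to a $Z$-type symmetry of the original model (resp.\ to the identity) by Theorem~\ref{thm:KWDuality}, so stacking the corresponding $Z$-suspension terms telescopes $\Gamma^Z_{\sigma,2l_0},\Omega^Z_{k,2l_0}$ into $\Gamma^Z_{\sigma,2l},\Omega^Z_{k,2l}$ modulo odd-layer $G^Z,U^Z$ operators that are already on the list; induction in $l$ from $2l_0$ reaches every even layer. A commuting Pauli family that generates a superset of a complete stabilizer set must (modulo phases) equal it, so (iii) follows once (ii) is in hand. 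Property (ii) I would argue by reducing every relation $\prod_{i\in S}(\text{listed op}) = \pm I$, via the same Kramers--Wannier dictionary and the suspension telescoping, to relations internal to $H_{\rm bulk}$ (sign-free because the bulk is frustration-free), to the GI model, or to its dual, each of which is sign-free; equivalently, one exhibits a common $+1$-eigenstate by starting from a ground state of $H_{\rm bulk}$ and imposing the remaining eigenvalue conditions, which are consistent precisely because the extra operators (boundary $\mc O^X$'s, odd-layer $U^Z$'s, even-layer $\Omega^X$'s, layer-$2l_0$ $\Omega^Z$'s) commute with $H_{\rm bulk}$ and with each other.

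\textbf{Expected main obstacle.} The delicate point is the asymmetry built into the list: $G^Z,U^Z$ are kept on all odd layers and $\Gamma^X,\Omega^X$ on all even layers, but $\Gamma^Z,\Omega^Z$ only on the single layer $2l_0$. Carrying the reduction (or the generation argument) through this asymmetry requires carefully tracking that transporting $\Gamma^Z_{\sigma,2l_0},\Omega^Z_{k,2l_0}$ across layers with $Z$-suspension terms produces only operators already on the list and introduces no stray sign — which is exactly where the identification of $\prod_\alpha \mc O^Z_\alpha$ with a $Z$-type symmetry of the original model (or with $1$) is used, and where a careless bookkeeping of $\mathbb{F}_2$-relations could hide a spurious $-I$. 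If a referee prefers a self-contained route, I would fall back on the direct reduction modeled on Theorem~\ref{thm:RobustDegeneracy}: write $\mc A=\mc A_Z\mc A_X$, clear $\mc A_X$ off all odd layers but one using the odd-layer $Z$-generators and the $X$-suspensions, then finish on even layers using the $Z$-suspensions and $\Gamma^Z_{\sigma,2l_0},\Omega^Z_{k,2l_0}$; dually clear $\mc A_Z$ off the two boundary layers first with the boundary $\mc O^X$'s, then proceed inward. Either way the remaining steps are routine once the duality dictionary and the CSLO assumptions are invoked.
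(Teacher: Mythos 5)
Your primary route for completeness has a genuine gap. You propose to show that the listed operators generate ``the complete set already established in Appendix \ref{app:BulkModelGSD}'' (the $H_{\rm bulk}$ stabilizers plus all nonlocal symmetry operators) and then conclude completeness by inclusion. But that completeness result is proved for the \emph{periodic} geometry with $L\in 2\mathbb{Z}$ layers; the lemma at hand lives in the \emph{open} geometry with $L$ odd and two original-lattice boundary layers, which is a different Hilbert space. Worse, in the open geometry the set $\{H_{\rm bulk}\text{ stabilizers}\}\cup\{\text{nonlocal symmetries}\}$ is \emph{not} complete: a single boundary operator $\mc O^X_{i,1}$ commutes with all of them (there is no $Z$-suspension term centered on layer $1$, and $\mc O^X_i$ commutes with every $Z$-type symmetry of the GI model), yet it is generically not a product of them. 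So generating that set establishes nothing about maximality of your list; the extra boundary freedom is precisely what the lemma must account for, and there is no previously established complete set to bootstrap from. A related symptom of conflating the two geometries appears in your fallback sketch, where you clear $\mc A_X$ off ``all odd layers but one'' --- that is the periodic-case procedure; with the boundary $\mc O^X$'s on the list one clears \emph{all} odd layers.

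Your fallback route is, however, exactly the paper's proof: write $\mc A=\mc A_Z\mc A_X$; use commutation with the $G^Z$ and $U^Z$ operators to force each odd layer of $\mc A_X$ to be a product of $\mc O^X$'s, strip all odd layers with the boundary $\mc O^X$'s and the $X$-suspensions, use the $Z$-suspensions to make the even layers agree modulo $\Gamma^X,\Omega^X$, and finish with commutation against $\Gamma^Z_{\sigma,2l_0},\Omega^Z_{k,2l_0}$ and Corollary \ref{thm:GIModelCompleteStabilizers}; dually, reduce $\mc A_Z$ inward from the boundary layers to a single even layer, identify the remainder as a dual-theory $Z$-type symmetry, and transport it to layer $2l_0$ by telescoping $Z$-suspensions modulo odd-layer $G^Z,U^Z$. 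Your generation computations (transporting $G^X_{s,1},U^X_{k,1}$ across layers at the cost of even-layer $\Delta^X$ products that are dual-model $X$-symmetries, and telescoping $\Gamma^Z_{\sigma,2l_0},\Omega^Z_{k,2l_0}$ to other even layers) are correct and are exactly the identities the direct reduction needs, and your explicit attention to the absence of $-I$ is more careful than the paper, which leaves that point implicit. So the proof goes through only if you promote the fallback to the main argument and drop the appeal to the periodic-case completeness result.
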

\begin{proof}
	Our strategy is similar to that for Theorem \ref{thm:RobustDegeneracy} in the main text. Suppose $\mc A=\mc A_Z\mc A_X$ commutes with all the above operators, where $\mc A_Z$ ($\mc A_X$) is a product of Pauli $Z$ (Pauli $X$) operators. We will show that $\mc A$ is a product of the above operators. Once this is proved, it follows from Corollary \ref{thm:CompleteIffMaximal} that those stabilizers are complete. 
	
	Let us start with $\mc A_Z$. Similar to the case of Theorem \ref{thm:RobustDegeneracy}, because of the existence of the $X$-suspension operators and the $\mc O^X$ operators on the boundary layers, we can reduce $\mc A_Z$ to an operator that acts on a single layer with an even layer index, by repeatedly multiplying it with the $Z$-suspension operators, the $G^Z$ operators and the $U^Z$ operators. This single-layer operator must be a symmetry (local or nonlocal) of the dual theory, namely a product of the $\Omega^Z_{k,2l}$ and $\Gamma^Z_{\sigma,2l}$ operators for some $l$. Note that using the $Z$-suspension operators, the $G^Z$ operators and the $U^Z$ operators, we are able to generate $\Omega^Z_{k,2l}$ and $\Gamma^Z_{\sigma,2l}$ for all $l$ starting from $\Omega^Z_{k,2l_0}$ and $\Gamma^Z_{\sigma,2l_0}$. We have thus proved that $\mc A_Z$ is a product of the operators in the statement of the lemma. 
	
	Next, we consider $\mc A_X$. Since $\mc A_X$ commutes with all the $G^Z$ and $U^Z$ operators, each odd layer of $\mc A_X$ is a product of several $\mc O^X$ operators. Therefore, using the $\mc O^X$ operators on the boundary layers and the $X$-suspension operators, we are able to remove all the odd-layer $X$ operators in $\mc A_X$. We may thus assume $\mc A_X$ only contains $X$ operators with even layer indices. Then, to commute with all the $Z$-suspension terms, different even layers of $\mc A_X$ must differ from each other only by some $\Gamma^X$ and $\Omega^X$ operators. Thus, using the $\Gamma^X$ and $\Omega^X$ operators, $\mc A_X$ can be cast to the form $\prod_{p\in\Lambda}\prod_{l=1}^{(L-1)/2}X_{p,2l}$ for some set $\Lambda$. This $\mc A_X$ has to commute with the $\Omega^Z_{k,2l_0}$ and $\Gamma^Z_{\sigma,2l_0}$ operators, which are the $Z$-type symmetry generators for the dual theory. It follows from our Corollary \ref{thm:GIModelCompleteStabilizers} that $\prod_{p\in\Lambda} X_p$ must be a product of $\Delta^X_i$, $\Gamma^X_\rho$ and $\Omega^X_k$. One can then see that $\mc A_X$ is a product of the $X$-suspension operators, the boundary $\mc O^X$ operators, the $\Gamma^X$ and the $\Omega^X$ operators. This completes the proof. 
\end{proof}
%\noindent In fact, with only small modifications on the above proof, one can also show that all the \emph{local} operators listed above form a complete set of \emph{local} observables. 

The next step is to restrict the set of stabilizers in the above lemma to an independent (and still complete) subset. This is helpful because only independent stabilizers can independently take eigenvalues $\pm 1$. Let us start by defining some notations. The local $Z$-type symmetry generators $G^Z_r$ of the original theory may not be independent, and we can restrict it to an independent subset with $\tilde n_Z$ number of elements. This process is equivalent to choosing a basis for a vector space over $\mathbb{F}_2$. Similarly, we denote by $\tilde n_X$ the number of independent local $X$-type symmetry generators of the original theory, and denote by $\tilde m_X$ ($\tilde m_Z$) the number of independent local $X$-type ($Z$-type) symmetry generators of the dual theory. Recall that we denote the full Hilbert spaces of the original and the dual GI models by $\mc L$ and $\mc L'$, respectively. Let ${\rm dim} \mc L=N$ and ${\rm dim }\mc L'=N'$. The number of independent $\mc O^X_i$ ($\Delta^Z_\alpha$) operators in the original (dual) theory is then $N-n_Z-\tilde n_Z$ ($N'-m_X-\tilde m_X$). 

First consider the $\mc O^X$ operators on layer $1$, one of the two boundary layers. As mentioned above, only $N-n_Z-\tilde n_Z$ number of them are independent. For our convenience later, we replace these $N-n_Z-\tilde n_Z$ number of independent $\mc O^X_{i,1}$ operators by the following equivalent set of stabilizers: $n_X$ number of $U^X_{k,1}$, $\tilde n_X$ number of $G^X_{s,1}$, and additional $N-n_Z-\tilde n_Z-n_X-\tilde n_X$ number of $\mc O^X_{i,1}$ operators. The same applies to the $\mc O^X$ operators on layer $L$, the other boundary layer. It turns out that $U^X_{k,1}$ and $U^X_{k,L}$ are not independent from each other. Using the $X$-suspension operators, and the $\Gamma^X$, $\Omega^X$ operators acting on all even layers, one can generate the operators $U^X_{k,1}U^X_{k,L}$, because each $X$-type symmetry generator of the original theory can be written as a product of a few $\mc O^X$ operators, and the product is dual to the identity or an $X$-type symmetry generator in the dual theory. Similarly, $G^X_{s,1}G^X_{s,L}$ can also be generated. Therefore, we can remove $U^X_{k,L}$ and $G^X_{s,L}$ as redundancies from our list of stabilizers. 

The $X$-suspension operators $\mc O^X\Delta^X\mc O^X$ are in general not independent. In particular, given any relation between the $\mc O^X_i$ operators, say $\prod_{i\in\Lambda}\mc O^X_i=1$ for some set $\Lambda$, the operator $\prod_{i\in\Lambda} \mc O^X_{i,2l-1}\Delta^X_{i,2l}\mc O^X_{i,2l+1}=\prod_{i\in\Lambda}\Delta^X_{i,2l}$ is an $X$-type symmetry generator acting on the lattice layer $2l$, and is therefore a product of the $\Gamma^X$ and $\Omega^X$ operators. This means that for each $l$, it suffices to retain $N-n_Z-\tilde n_Z$ number of the $X$-suspension operators $\mc O^X_{i,2l-1}\Delta^X_{i,2l}\mc O^X_{i,2l+1}$. Similarly, for each $l$, it suffices to retain $N'-m_X-\tilde m_X$ number of the $Z$-suspension operators $\Delta^Z_{2l}\mc O^Z_{2l+1}\Delta^Z_{2l+2}$. Things become even simpler when the standard dual theory is used, i.e. when $\Delta^Z_\alpha=Z_\alpha$. In this situation, due to the absence of $\Gamma^X$ and $\Omega^X$ operators, each relation between the $\mc O^X_i$ operators directly translates to a relation between the $X$-suspension operators, namely $\prod_{i\in\Lambda}\mc O^X_i=\prod_{i\in\Lambda} \mc O^X_{i,2l-1}\Delta^X_{i,2l}\mc O^X_{i,2l+1}=1$, which will be useful later. Moreover, all the $Z$-suspension operators are independent in this special case.  

With all the above procedures of removing redundancies, let us count how many stabilizers we are left with. We have 
\begin{enumerate}
	\item $n_X$ number of $U^X_{k,1}$, $\tilde n_X$ number of $G^X_{s,1}$, and additional $2(N-n_Z-\tilde n_Z-n_X-\tilde n_X)$ number of $\mc O^X$ operators acting on the two boundary layers. 
	\item $(N'-m_X-\tilde m_X)(L-3)/2$ number of $Z$-suspension operators and $(N-n_Z-\tilde n_Z)(L-1)/2$ number of $X$-suspension operators. 
	\item $(n_Z+\tilde n_Z)(L+1)/2$ number of original-theory $Z$-type symmetry generators acting on all odd layers and $(m_X+\tilde m_X)(L-1)/2$ number of dual-theory $X$-type symmetry generators acting on all even layers. 
	\item $m_Z+\tilde m_Z$ number of dual-theory $Z$-type symmetry generators acting on some fixed layer $2l_0$. 
\end{enumerate}
Theorem \ref{thm:KWDuality} about the generalized Kramers-Wannier duality implies that $N-n_Z-\tilde n_Z-n_X-\tilde n_X=N'-m_X-\tilde m_X-m_Z-\tilde m_Z$. With this relation, one can check that the total number of stabilizers listed above is $N'(L-1)/2 + N(L+1)/2$, same as the total number of qubits. Therefore, the above stabilizers must be independent. A complete basis of states in the Hilbert space are labeled by the independent eigenvalues of those operators. 
%This leads to the result \eqref{eq:BdrySpace} for $\mc L_{\rm bdry}$ in the main text. 

\subsection{Simple situation: using the standard dual theory}\label{app:BBCSimpleCase}
In this subsection, we restrict to the special situation where the standard dual theory is used, i.e. $\Delta^Z_\alpha=Z_\alpha$. This means that the dual theory has no $X$-type symmetry at all. All the $\Gamma^X$ and $\Omega^X$ operators disappear, and $m_X=\tilde m_X=0$. 

Now, we divide the complete and independent set of stabilizers found above into two subsets: 
\begin{itemize}
	\item Subset A: the $Z$-suspension and $X$-suspension operators, $G^Z_{r,2l-1}$ for all $l$, $G^X_{s,1}$ acting on the first layer, and $\Gamma^Z_{\sigma,2l_0}$ acting on the layer $2l_0$. 
	\item Subset B: the $2(N-n_Z-\tilde n_Z-n_X-\tilde n_X)$ number of $\mc O^X$ operators acting on the two boundary layers, the $n_X$ number of $U^X_{k,1}$, the $m_Z$ number of $\Omega^Z_{k,2l_0}$, and the $n_Z(L+1)/2$ number of $U^Z_{k,2l-1}$. 
\end{itemize}
Subset A is actually equivalent to the set of stabilizers appearing in the bulk Hamiltonian: Firstly, recall that Subset A already contains all the $Z$-suspension operators, and the $X$-suspension operators contained in Subset A are able to generate all the other $X$-suspension operators, thanks to the absence of $\Omega^X$ and $\Gamma^X$. Moreover, it is obvious that all the $G^Z$ operators can be generated by the independent ones we retained here. Finally, all the $G^X$ operators can be generated using the independent set of $G^X_{s,1}$ acting on the first layer and the $X$-suspension operators. Similarly, all the $\Gamma^Z$ operators are generated by the independent set of $\Gamma^Z_{\sigma,2l_0}$ acting on the layer $2l_0$, the $Z$-suspension operators and the $G^Z$ operators. This result implies that $\mc L_{\rm bdry}$, the ground subspace of the bulk Hamiltonian, is completely characterized by the stabilizers in Subset B. 

\noindent{\bf Example. }Consider the bulk model in Eq.\,\ref{eq:XCubeModel} that is constructed from Eq.\,\ref{eq:XCubeBdryTheory} and Eq.\,\ref{eq:DualPIsingModel}. We put the model on a 3D cubic lattice with $L_x\times L_y\times L_z$ number of vertices, with periodic boundary condition along $x$ and $y$, and open boundary condition along $z$. The two boundary surfaces are ``smooth'', and $L$ is related to $L_z$ by $L=2L_z-1$ (the height of the system is $L_z-1$ number of lattice constants). We have $n_Z=2$, $\tilde n_Z=L_xL_y-1$, $n_X=L_x+L_y-2$, $\tilde n_X=0$, $m_Z=L_x+L_y-1$, $N=2L_xL_y$, and $N'=L_xL_y$. It follows that $\log_2({\rm dim}\mc L_{\rm bdry})=2L_xL_y+L$ which we have verified numerically. 

It is convenient to divide $\mc L_{\rm bdry}$ into several sectors labeled by the eigenvalues of $\Omega^Z_{k,2l_0}$ and $U^Z_{k,2l-1}$. Each of these sectors is characterized by a set of stabilizers that are all products of Pauli $X$ operators. Let us first focus on the sector where $\Omega^Z_{k,2l_0}=1$ and $U^Z_{k,2l-1}=1$, denoted by $\tilde{\mc L}_{{\rm bdry},0}$ (the reason for a tilde symbol will be clear below). We denote by $\mc L$ the Hilbert space of the original $d$-dimensional lattice, and by $\mc L_G$ the gauge-invariant subspace of it (the symmetric sector for all \emph{local} symmetries). We see that $\tilde{\mc L}_{{\rm bdry},0}$ is isomorphic to the following fictitious space, 
\begin{align}
\tilde{\mc L}_{\rm fic}:=\text{$\mathbb{Z}_2^{n_X+2n_Z}$ symmetric subspace of $\mc L_G\otimes \mc L_G$}, 
\label{eq:FicSpaceAppendix}
\end{align}
where the $\mathbb{Z}_2$ symmetries are generated by $U^X_k\otimes U^X_k$, $U^Z_k\otimes 1$ and $1\otimes U^Z_k$. We would like to choose the isomorphism from $\tilde{\mc L}_{{\rm bdry},0}$ to the above fictitious space such that the independent stabilizers characterizing $\tilde{\mc L}_{{\rm bdry},0}$ have the most natural correspondence, i.e. $\mc O^X_{i,1}\mapsto \mc O^X_i\otimes 1$, $\mc O^X_{i,L}\mapsto 1\otimes \mc O^X_i$, and $U^X_{k,1}\mapsto U^X_k\otimes 1$. Thus, we shall identity the simultaneous eigenstates of the corresponding stabilizers. One simple consequence is that, \emph{all} the $\mc O^X$ operators on the two boundary layers, not just the independent ones in Subset B, will satisfy the above mapping rule: 
\begin{align}
    \mc O^X_{i,1}\mapsto \mc O^X_i\otimes 1,~~\mc O^X_{i,L}\mapsto 1\otimes \mc O^X_i. 
    \label{eq:OXOpMapAppSimple}
\end{align}
However, we have not uniquely determined the isomorphism yet, since the eigenstates have arbitrary phase factors. Thanks to the fact that all the stabilizers are products of Pauli $X$-operators, we can choose orthonormal bases for both $\tilde{\mc L}_{{\rm bdry},0}$ and the fictitious space according to Lemma\,\ref{thm:StandardBasis}, with the roles of $X$ and $Z$ exchanged. Vectors in these orthonormal bases are automatically eigenstates of the stabilizers and can be identified according to the eigenvalues. As one important consequence of this special choice of bases, the operators $\mc O^Z_{\alpha,1}Z_{\alpha,2}$ and $Z_{\alpha,L-1}\mc O^Z_{\alpha,L}$, which all preserve $\tilde{\mc L}_{{\rm bdry},0}$ and can be added to the boundary Hamiltonian, have the simple mapping rule: 
\begin{align}
    \mc O^Z_{\alpha,1}Z_{\alpha,2}\mapsto \mc O^Z_{\alpha}\otimes 1,~~Z_{\alpha,L-1}\mc O^Z_{\alpha,L}\mapsto 1\otimes \mc O^Z_{\alpha}. 
    \label{eq:OZOpMapAppSimple}
\end{align}
%This enables us to represent the boundary Hamiltonian in the fictitious space. 

We may take the boundary Hamiltonian $H_{\rm bdry}$ to be a linear combination of the operators $\mc O^X_{i,1}$, $\mc O^X_{i,L}$, $\mc O^Z_{\alpha,1}Z_{\alpha,2}$ and $Z_{\alpha,L-1}\mc O^Z_{\alpha,L}$. From the above discussion, we see that the $\tilde{\mc L}_{{\rm bdry},0}$ block of $H_{\rm bdry}$, when represented in the fictitious space $\tilde{\mc L}_{\rm fic}$, takes the form 
\begin{align}
	H^{\rm I}_{\rm GI}(J_\alpha,h_i)+H^{\rm II}_{\rm GI}(J'_\alpha,h'_i)
	\label{eq:EffectiveBdryHAppendix}
\end{align}
where $H^{\rm I}_{\rm GI}$ and $H^{\rm II}_{\rm GI}$ act on the two copies of $\mc L_{G}$ in Eq.\,\ref{eq:FicSpaceAppendix}, respectively. We prove in Appendix \ref{app:PossibleBdryTerms} that any allowed local term in $H_{\rm bdry}$ can be generated by the four types of terms considered here and the stabilizers in $H_{\rm bulk}$. Therefore, our canonical choice of $H_{\rm bdry}$ is a quite general one. 

Other sectors with different eigenvalues of $\Omega^Z_{k,2l_0}$ or $U^Z_{k,2l-1}$ can be analyzed by looking for unitary operators that map them to $\tilde{\mc L}_{{\rm bdry},0}$. Given a set of independent $Z$-type operators $A^Z_1,A^Z_2,\cdots,A^Z_n$ acting on an arbitrary multiple-qubit Hilbert space, there is always a set of $X$-type operators $B^X_1,\cdots,B^X_n$ such that $B^X_k$ anticommutes with $A^Z_k$ but commutes with all the other $A^Z$ operators. It follows that we can always find some $X$-type operators which commute with the bulk Hamiltonian but can independently change the eigenvalues of $\Omega^Z_{k,2l_0}$ and $U^Z_{k,2l-1}$. They generate isomorphisms from all the other sectors of $\mc L_{\rm bdry}$ to $\tilde{\mc L}_{{\rm bdry},0}$ which is equivalent to $\tilde{\mc L}_{\rm fic}$ as we just shown. These isomorphisms commute with $\mc O^X_{i,1}$ and $\mc O^X_{i,L}$, but may anticommute with some $\mc O^Z_{\alpha,1}Z_{\alpha,2}$ and $Z_{\alpha,L-1}\mc O^Z_{\alpha,L}$ operators. Therefore, the effective boundary Hamiltonian in each of the other sectors takes the same form as Eq.\,\ref{eq:EffectiveBdryHAppendix}, but the signs of some GI terms may be flipped. 

Let us try to write down the isomorphisms between different sectors more explicitly, which turn out to be illuminating. We start from the $U^Z$ operators acting on the two boundary layers, namely $U^Z_{k,1}$ and $U^Z_{k,L}$. Let $V^X_1,V^X_2,\cdots,V^X_{n_Z}$ be the $X$-type operators acting on $\mc L$ such that $V^X_k$ anticommutes with $U^Z_k$ but commutes with all the other $Z$-type symmetry generators (local or nonlocal) of the original theory. $V^X_{k,1}$ and $V^X_{k,L}$ commute with the bulk Hamiltonian, and thus can be used to adjust the eigenvalues of $U^Z_{k,1}$ and $U^Z_{k,L}$ within $\mc L_{\rm bdry}$, respectively. Each $V^X_{k,1}$ may flip the signs of some $\mc O^Z$ terms in the $H^{\rm I}_{\rm GI}$ part of the effective boundary Hamiltonian. We can, however, apply the unitary operator $V^X_k\otimes 1$ to $\tilde{\mc L}_{\rm fic}$ to compensate these sign changes, at the cost of flipping the sign of $U^Z_k\otimes 1$ \footnote{$V^X_k\otimes 1$ is not an automorphism of $\tilde{\mc L}_{\rm fic}$, so strictly speaking we shall first embed $\tilde{\mc L}_{\rm fic}$ to $\mc L_G\otimes \mc L_G$, and then apply $V^X_k\otimes 1$. }. A similar statement holds for $V^X_{k,L}$. This observation inspires us that there is actually a nicer way of viewing $\mc L_{\rm bdry}$: We can alternatively divide $\mc L_{\rm bdry}$ into sectors labeled by $\Omega^Z_{k,2l_0}$ and $U^Z_{k,2l-1}$ for all \emph{internal} layers ($3\leq 2l-1\leq L-2$). Each new sector now includes several old sectors that differ from each other only by the eigenvalues of $U^Z_{k,1}$ and/or $U^Z_{k,L}$. Let $\mc L_{{\rm bdry},0}$ be the new sector where $\Omega^Z_{k,2l_0}=1$ and $U^Z_{k,2l-1}=1$ for internal layers. $\mc L_{{\rm bdry},0}$ is isomorphic to a new fictitious space
\begin{align}
	\mc L_{\rm fic}:=\mathbb{Z}_2^{n_X}\text{ symmetric subspace of $\mc L_G\otimes \mc L_G$}, 
	\label{eq:NewFicSpaceAppendix}
\end{align}
where the $\mathbb{Z}_2$ symmetries are generated by $U^X_k\otimes U^X_k$. The operator mapping rule from $\tilde{\mc L}_{{\rm bdry},0}$ to $\tilde{\mc L}_{\rm fic}$ that we established earlier still works here, now from $\mc L_{\rm bdry}$ to $\mc L_{\rm fic}$. The effective boundary Hamiltonian takes the form of Eq.\,\ref{eq:EffectiveBdryHAppendix} as well. Other new sectors are all isomorphic to $\mc L_{{\rm bdry},0}$, and thus to $\mc L_{\rm fic}$:
\begin{itemize}
	\item Denote by $\mc L'$ the Hilbert space for the $d$-dimensional dual lattice. We may find some $X$-type operators $\Theta^X_k$ acting on $\mc L'$ such that $\Theta^X_k$ anticommutes with $\Omega^Z_k$ but commutes with all the other $Z$-type symmetry generators (local or nonlocal). It follows that $\prod_{l=1}^{(L-1)/2}\Theta^X_{k,2l}$ is a multilayer operator that commutes with the bulk Hamiltonian and can flip the eigenvalue of $\Omega^Z_{k,2l_0}$. This multilayer operator may flip the signs of some $\mc O^Z$ terms simultaneously in both $H^{\rm I}_{\rm GI}$ and $H^{\rm II}_{\rm GI}$. 
	%effectively changing the boundary conditions of the boundary theory. 
	
	\item Adjusting the values of the internal-layer $U^Z$ operators is more complicated, consisting of three steps: (1) Flip the sign of $U^Z_{k,2l-1}$ using $V^X_{k,2l-1}$ defined earlier, but this may unexpectedly flip the signs of some $Z$-suspension operators as well. (2) Fix the $Z$-suspension operators using string operators of the form $\prod_{l'=1}^{l-1}X_{q,2l'}$ when $l_0\geq l$ or $\prod_{l'=l}^{(L-1)/2}X_{q,2l'}$ when $l_0<l$, without affecting the $\Omega^Z_{k,2l_0}$ operators. Notice that the $\Gamma^Z$ operators will not be affected by the above steps either, because $\Gamma^Z_{\sigma,2l_0}$ are unaffected, and the other $\Gamma^Z$ operators can all be generated from the $\Gamma^Z_{\sigma,2l_0}$, $Z$-suspension, and $G^Z$ operators. (3) As an optional step, apply $V^X_{k,1}$ ($V^X_{k,L}$) when $l_0\geq l$ ($l_0<l$). With the last step added, the effective boundary Hamiltonian will be invariant under the above operations. 
\end{itemize}

We have seen that altering the eigenvalue of $\Omega^Z_{k,2l_0}$ may change the signs of some $\mc O^Z$ terms in both the $H^{\rm I}_{\rm GI}$ and $H^{\rm II}_{\rm GI}$ parts of the effective boundary Hamiltonian. Is it possible to cancel this effect by some additional unitary rotation on $\mc L_{\rm fic}$? The answer depends on a certain property of the $\Omega^Z$ operators. Let  $\mc G_Z'=\ex{\Omega^Z_1,\cdots,\Omega^Z_{m_Z}}$ be the group generated by $\Omega^Z$'s in the dual theory. We define a subgroup $\mc G'_{Z,0}=$\{$g\in \mc G'_Z$|$g$ is dual to the identity modulo the $G^Z$ operators\}. 
%$\mc G'_{Z,0}=$\{$g\in \mc G'_Z$|$g=\prod_{\alpha\in A}\Delta^Z_\alpha$ for some set $A$ such that $\prod_{\alpha\in A}\mc O^Z_\alpha=1$ modulo $G^Z$'s\}. 
We can always redefine the $\Omega^Z$ operators such that for some integer $m_0$, $\mc G'_{Z,0}=\ex{\Omega^Z_1,\cdots,\Omega^Z_{m_0}}$. We claim and will elaborate below that: \emph{When $k>m_0$ ($k\leq m_0$), it is possible (not possible) to flip the sign of $\Omega^Z_{k,2l_0}$ without affecting the effective boundary Hamiltonian. }

First consider $k\leq m_0$. We write $\Omega^Z_k=\prod_{\alpha\in A}Z_\alpha$ for some subset $A$, then under the Kramers-Wannier operator map, $\Omega^Z_k\mapsto \prod_{\alpha\in A}\mc O^Z_\alpha=\prod_{r\in R}G^Z_r$ for some subset $R$. 
In an arbitrary sector of $\mc L_{\rm bdry}$, 
\begin{align}
	\Omega^Z_{k,2l_0}=\Omega^Z_{k,2}=\prod_{\alpha\in A}(\mc O^Z_{\alpha,1}Z_{\alpha,2}), 
\end{align}
where we used the fact that $\Omega^Z_{k\leq m_0,2l_0}$ is related to $\Omega^Z_{k\leq m_0,2}$ by the multiplication of several $Z$-suspension and $G^Z$ operators. 
In $\mc L_{\rm fic}$, we have
\begin{align}
	1=\prod_{\alpha\in A}(\mc O^Z_\alpha\otimes 1). 
\end{align}
Suppose there is an isomorphism from this sector of $\mc L_{\rm bdry}$ to $\mc L_{\rm fic}$, such that
\begin{align}
	\mc O^Z_{\alpha,1}Z_{\alpha,2}\mapsto \eta_\alpha \mc O^Z_\alpha\otimes 1\quad(\eta_\alpha=\pm 1),  
	\label{eq:etaDefAppendix}
\end{align}
then we necessarily have $\Omega^Z_{k,2l_0}=\prod_{\alpha\in A}\eta_\alpha$. This means that as we go from one sector to another with a different $\Omega^Z_{k,2l_0}$, some of the $\eta_\alpha$ must change signs! 

Next, consider $k>m_0$. We wish to correct all the aforementioned sign changes of the $\mc O^Z$ terms in the effective boundary Hamiltonian due to the sign flip of $\Omega^Z_{k,2l_0}$. To this end, we restrict the $\mc O^Z$ operators acting on $\mc L$ to a subset $\mc O^Z_{\beta_1},\mc O^Z_{\beta_2},\cdots$ that is independent modulo $G^Z$'s, meaning that each $\mc O^Z_{\beta_j}$ is not a product of the remaining ones and $G^Z$'s. Then there exist operators $Q^X_{\beta_1},Q^X_{\beta_2},\cdots$ such that each $Q^X_{\beta_j}$ anticommutes with $\mc O^Z_{\beta_j}$ while commutes with all the others in the independent subset and also commutes with all $G^Z$'s. Now, using the operators $Q^X_{\beta_j}\otimes 1$ acting on $\mc L_{\rm fic}$, we can freely adjust the signs of the $\mc O^Z_{\beta_j}\otimes 1$ terms in the effective boundary Hamiltonian. In other words, we can freely adjust the signs of $\eta_{\beta_j}$ whose definition is in Eq.\,\ref{eq:etaDefAppendix} above. 
In fact, once we correct all the sign changes of $\eta_{\beta_j}$ due to the sign flip of $\Omega^Z_{k,2l_0}$, the sign changes of all $\eta_\alpha$ are also corrected. To see this, notice that given any relation of the form $\prod_{\alpha\in A} \mc O^Z_{\alpha}=1 \mod G^Z$, $\prod_{\alpha\in A}Z_\alpha$ must be an element of $\mc G'_{Z,0}$ modulo some $\Gamma^Z$'s, which follows from the definition of $\mc G'_{Z,0}$ as well as the fact that each $\Gamma^Z$ operator is dual to the product of some $G^Z$'s, and thus $\prod_{\alpha\in A}\eta_\alpha$ is equal to the eigenvalue of this element of $\mc G'_{Z,0}$ acting on the $2l_0$-th layer. Since each $\mc O^Z_\alpha$ is a product of some $\mc O^Z_{\beta_j}$ and $G^Z$ operators, each $\eta_\alpha$ is then a product of some $\eta_{\beta_j}$ and the eigenvalue of some element of $\mc G'_{Z,0}$ acting on the $2l_0$-th layer. Flipping the sign of $\Omega^Z_{k>m_0,2l_0}$ does not affect any $\Omega^Z_{k',2l_0}$ with $k'\leq m_0$, therefore our statement above about $\eta_\alpha$ is indeed true. 

With all these discussions, we propose the following necessary and sufficient condition for anomaly. 
\begin{claim*}\label{thm:AppAnomalyConjSimpleCase}
	%In the special case $\Delta^Z=Z$, 
	When the standard dual theory is used for constructing the bulk model, the boundary theory is anomalous if and only if either of the following two conditions is satisfied: 
	\begin{enumerate}
		\item $n_X\geq 1$. 
		\item For some nonempty subset $K\subset \{1,2,\cdots,m_Z\}$, $\prod_{k\in K}\Omega^Z_k$ is dual to the identity modulo the $G^Z$ operators. 
	\end{enumerate}
\end{claim*}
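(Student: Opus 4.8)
The plan is to translate the claim into a statement about how the effective boundary Hamiltonian varies over the sector decomposition $\mc L_{\rm bdry}=\bigoplus_a\mc L_{{\rm bdry},a}$ established above. Each sector is isomorphic to the fictitious space of Eq.~\eqref{eq:NewFicSpaceAppendix}, the $\mathbb{Z}_2^{n_X}$-symmetric subspace of $\mc L_G\otimes\mc L_G$ under $U^X_k\otimes U^X_k$, on which a canonical $H_{\rm bdry}$ acts as $H^{\rm I}_{\rm GI}(\eta_\alpha J_\alpha,h_i)+H^{\rm II}_{\rm GI}(\eta_\alpha J'_\alpha,h'_i)$ with signs $\eta_\alpha=\pm1$ fixed by Eq.~\eqref{eq:etaDefAppendix} that may depend on $a$; by Appendix~\ref{app:PossibleBdryTerms} no local operator connects distinct sectors, so this block picture is complete. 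Hence the boundary is nonanomalous precisely when $n_X=0$ (so $\mc L_{\rm fic}=\mc L_G\otimes\mc L_G$ with no residual projection) \emph{and} the family $\{\eta_\alpha\}_a$ can be made independent of $a$ by a single basis rotation of $\mc L_{\rm fic}$. The argument then splits into showing that each of conditions~1 and~2 obstructs this, and that their joint failure ensures it.

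For condition~1, I would write $U^X_k=\prod_{i\in I_k}\mc O^X_i$ using Corollary~\ref{thm:GIModelCompleteStabilizers}; in the standard dual theory $\prod_{i\in I_k}\Delta^X_i$ commutes with every $\Delta^Z_\alpha=Z_\alpha$ and is therefore the identity. Telescoping the $X$-suspension stabilizers then collapses $\prod_{l}\prod_{i\in I_k}\bigl(\mc O^X_{i,2l-1}\Delta^X_{i,2l}\mc O^X_{i,2l+1}\bigr)$ to $U^X_{k,1}U^X_{k,L}$, so $U^X_{k,1}U^X_{k,L}=1$ on $\mc L_{\rm bdry}$, i.e.\ $U^X_k\otimes U^X_k=1$ in every sector; when $n_X\ge1$ this makes $\mc L_{\rm fic}$ a proper subspace of $\mc L_G\otimes\mc L_G$, which is the anomaly. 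For condition~2, I would take a nonempty $\mc M$ with $\Omega^Z_{\mc M}\equiv\prod_{k\in\mc M}\Omega^Z_k=\prod_{\alpha\in A}Z_\alpha$ and $\prod_{\alpha\in A}\mc O^Z_\alpha=1\bmod G^Z$. Since $\Omega^Z_{\mc M}\in\mc G'_{Z,0}$, the operator $\Omega^Z_{\mc M,2l_0}$ equals $\prod_{\alpha\in A}(\mc O^Z_{\alpha,1}Z_{\alpha,2})$ modulo $Z$-suspension and $G^Z$ operators (all $+1$ on $\mc L_{\rm bdry}$), so Eq.~\eqref{eq:etaDefAppendix} forces $\Omega^Z_{\mc M,2l_0}=\prod_{\alpha\in A}\eta_\alpha$ in every sector. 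Because the $\Omega^Z_{k,2l_0}$ take independent $\pm1$ eigenvalues on $\mc L_{\rm bdry}$, some sectors have $\prod_{\alpha\in A}\eta_\alpha=+1$ and others $-1$: the generalized boundary condition $\eta_{\mc S}$ of Eq.~\eqref{eq:gbc} genuinely changes, and being gauge invariant it cannot be removed by any unitary on $\mc L_{\rm fic}$, so the boundary is anomalous.

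For the converse I would argue the contrapositive: assume $n_X=0$ and $\mc G'_{Z,0}$ trivial, so $\mc L_{\rm fic}=\mc L_G\otimes\mc L_G$. Choosing a subset $\{\mc O^Z_{\beta_j}\}$ of the $\mc O^Z$ operators that is independent modulo $G^Z$, together with dual operators $Q^X_{\beta_j}$, the operators $Q^X_{\beta_j}\otimes1$ flip the individual signs $\eta_{\beta_j}$; and since $\mc G'_{Z,0}$ is trivial, any relation $\prod_{\alpha\in A}\mc O^Z_\alpha=1\bmod G^Z$ forces $\prod_{\alpha\in A}\eta_\alpha=1$ identically (it equals the eigenvalue of an element of $\mc G'_{Z,0}$ on layer $2l_0$), so correcting the generating $\eta_{\beta_j}$ corrects every $\eta_\alpha$ at once and consistently. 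Hence $H^{\rm I}_{\rm GI}(J_\alpha,h_i)+H^{\rm II}_{\rm GI}(J'_\alpha,h'_i)$ can be realized identically in all sectors; combining this with the earlier observation that internal-layer $U^Z$ flips act trivially on the effective Hamiltonian and with the Appendix~\ref{app:PossibleBdryTerms} classification of admissible boundary terms, $\mc L_{\rm bdry}$ together with $H_{\rm bdry}$ is a direct sum of identical copies of a GI-model pair on $\mc L_G\otimes\mc L_G$, which is nonanomalous.

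The hard part will be the consistency step in the converse: showing that when $\mc G'_{Z,0}$ is trivial \emph{all} the signs $\eta_\alpha$ can be corrected \emph{simultaneously} across \emph{all} the $\mathbb{F}_2$-linear relations among the $\mc O^Z_\alpha$ modulo $G^Z$, not merely one relation at a time. This rests on the precise dictionary between relations among the $\mc O^Z_\alpha$ and membership in $\mc G'_{Z,0}$, which in turn uses the generalized Kramers--Wannier isomorphism of Theorem~\ref{thm:KWDuality}, a careful separation of ``independent'' from ``independent modulo $G^Z$'' for the $\mc O^Z$ and $\Delta^Z$ families, and a tracking of the $\Gamma^Z$ operators (each dual to a product of $G^Z$'s). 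The sufficiency directions, by contrast, should reduce to the telescoping identity above plus bookkeeping already set up before the claim.
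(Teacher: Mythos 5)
Your proposal is correct and follows essentially the same route as the paper's Appendix proof: the same sector decomposition of $\mc L_{\rm bdry}$, the same telescoping of $X$-suspension operators to force $U^X_k\otimes U^X_k=1$, the same $\eta_\alpha$ sign-tracking via $\Omega^Z_{\mc M,2l_0}=\prod_{\alpha\in A}\eta_\alpha$ for elements of $\mc G'_{Z,0}$, and the same $Q^X_{\beta_j}\otimes 1$ correction argument for the converse. The "hard part" you flag — simultaneous consistency of all $\eta_\alpha$ across every relation $\prod_{\alpha\in A}\mc O^Z_\alpha=1\bmod G^Z$ — is resolved in the paper exactly as you anticipate, by identifying each such relation with an element of $\mc G'_{Z,0}$ modulo $\Gamma^Z$'s (each $\Gamma^Z$ being dual to a product of $G^Z$'s) and noting that every $\eta_\alpha$ is then a product of the freely adjustable $\eta_{\beta_j}$ and eigenvalues of $\Omega^Z_{k'\le m_0,2l_0}$, which are unaffected by the sector changes under consideration.
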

\noindent The first condition guarantees a symmetry charge projection $U^X_k\otimes U^X_k=1$ acting on the two copies of the GI model in the boundary theory; each copy is allowed to have states with both values of the symmetry charge, but the total charge of the two copies is fixed. When the second condition holds, $\prod_{k\in K}\Omega^Z_k$ is dual to a generalized boundary condition, and a boundary condition projection is applied to the two copies of the GI model in the boundary theory; each copy is allowed to take both values of the boundary condition, but the boundary condition values of the two copies are locked together. When neither condition is satisfied, the boundary theory is a direct sum of \emph{identical} sectors. The Hilbert space of each sector is isomorphic to $\mc L_G\otimes \mc L_G$ with the operator mapping rule in Eq.\,\ref{eq:OXOpMapAppSimple} and \ref{eq:OZOpMapAppSimple}. The effective boundary Hamiltonian of each sector takes the form of Eq.\,\ref{eq:EffectiveBdryHAppendix}, or more generally consists of local terms generated by those in Eq.\,\ref{eq:EffectiveBdryHAppendix}. 
%the boundary theory is the direct sum of several identical sectors, each being equivalent to two copies of the GI model with no nonlocal symmetry charge projection nor any varying boundary condition value. 

\subsection{General situation}\label{app:BBCGeneralCase}
To work with the most general situation, let us recall some notations defined in Appendix \ref{app:BulkModelGSD}. 
In general, the nonlocal $Z$-type symmetry group $\mc G_Z\cong\mathbb{Z}_2^{n_Z}$ of our GI model contains a subgroup $\mc G_{Z,0}\cong\mathbb{Z}_2^\nu$, whose definition is $\mc G_{Z,0}=$\{$g\in \mc G_Z$| $g=\prod_{\alpha\in A}\mc O^Z_\alpha\prod_{r\in R} G^Z_r$ for some sets $A$ and $R$ satisfying $\prod_{\alpha\in A}\Delta^Z_\alpha=1$\}. We can always choose our symmetry generators such that $\mc G_{Z,0}$ is generated by $U^Z_1,\cdots,U^Z_\nu$. Similarly, the nonlocal $X$-type symmetry group $\mc G_X'\cong\mathbb{Z}_2^{m_X}$ of the dual theory contains a $\mathbb{Z}_2^{\mu}$ subgroup, whose definition is $\mc G'_{X,0}=$\{$g\in \mc G'_X$| $g=\prod_{i\in I}\Delta^X_i\prod_{\rho\in R} \Gamma^X_\rho$ for some sets $I$ and $R$ satisfying $\prod_{i\in I}\mc O^X_i=1$\}. Without loss of generality, we assume this $\mathbb{Z}_2^\mu$ subgroup is generated by $\Omega^X_1,\cdots,\Omega^X_\mu$. 

As in the previous case, we divide the complete and independent set of stabilizers for the full Hilbert space into two subsets: 
\begin{itemize}
	\item Subset A: the $Z$-suspension and $X$-suspension operators, $G^Z_{r,2l-1}$ for all $l$, $G^X_{s,1}$ acting on the first layer, $\Gamma^X_{\rho,2l}$ for all $l$, $\Gamma^Z_{\sigma,2l_0}$ acting on the layer $2l_0$, $U^Z_{1,2l-1},\cdots,U^Z_{\nu,2l-1}$ for all the \emph{internal} layers ($3\leq 2l-1\leq L-2$), and $\Omega^X_{1,2l},\cdots,\Omega^X_{\mu,2l}$ for all $l$. 
	\item Subset B: the $2(N-n_Z-\tilde n_Z-n_X-\tilde n_X)$ number of $\mc O^X$ operators acting on the two boundary layers, the $n_X$ number of $U^X_{k,1}$, the $m_Z$ number of $\Omega^Z_{k,2l_0}$, the $2n_Z$ number of $U^Z_{k,1}$ and $U^Z_{k,L}$ acting on the boundary layers, the $(n_Z-\nu)(L-3)/2$ number of $U^Z_{k>\nu,2l-1}$ for all the \emph{internal} layers ($3\leq 2l-1\leq L-2$), and the $(m_X-\mu)(L-1)/2$ number of $\Omega^X_{k>\mu,2l}$ for all $l$. 
\end{itemize}
One can check that Subset A is equivalent to the set of stabilizers appearing in the bulk Hamiltonian. In particular, the $Z$-suspension operators contained in Subset A are able to generate all the other $Z$-suspension operators with the help of the included $G^Z$ and $U^Z$ operators. A similar statement applies to the $X$-suspension operators. Also notice that the stabilizers in the bulk Hamiltonian are able to generate Subset A. This result implies that $\mc L_{\rm bdry}$ is completely characterized by the stabilizers in Subset B. 

\noindent{\bf Example. }
Consider the model in Appendix \ref{app:AdditionalExample}. We take open boundary condition along the out-of-layer direction, and the number of layers $L$ is an odd integer as explained in Section \ref{sec:Boundary}. One may check that $n_Z=m_X=2$, $\tilde n_Z=L_xL_y-1$, $n_X=m_Z=L_x-1$, $\tilde n_X=0$, and 
\begin{align}
	\nu=\mu=
	\begin{cases}
	1 & (L_x\text{ odd})\\
	0 & (L_x\text{ even})
	\end{cases}. 
\end{align}
It follows that 
\begin{align}
	\log_2({\rm dim}\mc L_{\rm bdry})=
	\begin{cases}
	2L_xL_y+L & (L_x\text{ odd})\\
	2L_xL_y+2L-2 & (L_x\text{ even})
	\end{cases}, 
\end{align}
which we have verified numerically. 

We can divide $\mc L_{\rm bdry}$ into several sectors labeled by the eigenvalues of $\Omega^Z_{k,2l_0}$, $U^Z_{k>\nu,2l-1}$ for all \emph{internal} layers, and $\Omega^X_{k>\mu,2l}$. Denote by ${\mc L}_{{\rm bdry},0}$ the sector where these operators all equal to $1$. Then ${\mc L}_{{\rm bdry},0}$ is again isomorphic to the fictitious space in Eq.\,\ref{eq:NewFicSpaceAppendix}. The operator mapping rule is also similar: 
\begin{align}
    &\mc O^X_{i,1}\mapsto \mc O^X_i\otimes 1,~~\mc O^X_{i,L}\mapsto 1\otimes \mc O^X_i, \nonumber\\
    &\mc O^Z_{\alpha,1}\Delta^Z_{\alpha,2}\mapsto \mc O^Z_{\alpha}\otimes 1,~~\Delta^Z_{\alpha,L-1}\mc O^Z_{\alpha,L}\mapsto 1\otimes \mc O^Z_{\alpha}. 
    \label{eq:OpMap2FicSpaceAppGeneral}
\end{align}
%$\mc O^X_{i,1}\mapsto \mc O^X_i\otimes 1$, $\mc O^X_{i,L}\mapsto 1\otimes \mc O^X_i$, $\mc O^Z_{\alpha,1}\Delta^Z_{\alpha,2}\mapsto \mc O^Z_{\alpha}\otimes 1$, and $\Delta^Z_{\alpha,L-1}\mc O^Z_{\alpha,L}\mapsto 1\otimes \mc O^Z_{\alpha}$. 
By taking $H_{\rm bdry}$ as a linear combination of $\mc O^X_{i,1}$, $\mc O^X_{i,L}$, $\mc O^Z_{\alpha,1}\Delta^Z_{\alpha,2}$ and $\Delta^Z_{\alpha,L-1}\mc O^Z_{\alpha,L}$, the ${\mc L}_{{\rm bdry},0}$ block of $H_{\rm bdry}$ may again have the form of Eq.\,\ref{eq:EffectiveBdryHAppendix} when represented in $\mc L_{\rm fic}$. We prove in Appendix \ref{app:PossibleBdryTerms} that any allowed local term in $H_{\rm bdry}$ can be generated by the four types of terms considered here and the stabilizers in $H_{\rm bulk}$. 

The next task is to establish isomorphisms from other sectors to ${\mc L}_{{\rm bdry},0}$, and therefore to the fictitious space: 
\begin{itemize}
	\item As before, the eigenvalue of each $\Omega^Z_{k,2l_0}$ can be flipped by the operator $\prod_{l=1}^{(L-1)/2}\Theta^X_{k,2l}$ without affecting the $U^Z$ or $\Omega^X$ operators. This multilayer operator may flip the signs of some $\mc O^Z$ terms simultaneously in both $H^{\rm I}_{\rm GI}$ and $H^{\rm II}_{\rm GI}$. 
	
	\item The $V^X_{k,2l-1}$ operator mentioned previously is able to flip the eigenvalue of $U^Z_{k,2l-1}$ where $3\leq 2l-1\leq L-2$ and $k>\nu$, but it may also anticommute with some $Z$-suspension operators. On each internal layer, there are only $N'-m_X-\tilde m_X$ number of independent $Z$-suspension operators, corresponding to the same number of independent $\Delta^Z$ operators acting on $\mc L'$; the remaining ones can be generated with the help of the $G^Z_r$ and $U^Z_{k\leq\nu,2l-1}$ operators. Thus, it suffices to fix the eigenvalues of these independent $Z$-suspension operators. To this end, we first choose an independent subset of the $\Delta^Z$ operators acting on $\mc L'$: $\Delta^Z_{\alpha_1},\Delta^Z_{\alpha_2},\cdots$. Then, there exists a set of operators $E^X_{\alpha_1},E^X_{\alpha_2},\cdots$ such that $E^X_{\alpha_j}$ anticommutes with $\Delta^Z_{\alpha_j}$ while commutes with all the other independent $\Delta^Z$ operators. We can use the multilayer operators $\prod_{l'=1}^{l-1}E^X_{\alpha_j,2l}$ ($\prod_{l'=l}^{(L-1)/2}E^X_{\alpha_j,2l}$) when $l_0\geq l$ ($l_0<l$) to fix the eigenvalues of the $Z$-suspension operators centered on the $(2l-1)$-th layer without affecting $\Omega^Z_{k,2l_0}$. For a reason explained previously, the $\Gamma^Z$ operators will not be affected by the above steps either. These multilayer operators may anticommute with some $\mc O^Z_{\alpha,1}\Delta^Z_{\alpha,2}$ or $\Delta^Z_{\alpha,L-1}\mc O^Z_{\alpha,L}$ operators, thus affecting the effective boundary Hamiltonian, but this effect can be completely canceled by additionally applying $V^X_{k,1}$ or $V^X_{k,L}$. In other words, adjusting the values of the internal-layer $U^Z_{k>\nu,2l-1}$ operators does not affect the effective boundary Hamiltonian. 
	
	\item The eigenvalues of $\Omega^X_{k>\mu,2l}$ can be adjusted as follows. Let $\Theta^Z_1,\cdots,\Theta^Z_{m_X}$ be $Z$-type operators acting on $\mc L'$ such that $\Theta^Z_k$ anticommutes with $\Omega^X_k$ while commutes with all the other local or nonlocal $X$-type symmetry generators of the dual theory. $\Theta^Z_{k>\mu,2l}$ is able to flip the sign of $\Omega^X_{k>\mu,2l}$, but it may anticommute with some $X$-suspension operators. Analogous to the $E^X$ operators define above, we can find some $\mc P^Z_{i_n}$ operators acting on $\mc L$ such that they can flip the signs of an independent subset of $\mc O^X$ operators, $\mc O^X_{i_1},\mc O^X_{i_2},\cdots$, respectively. The multilayer operators $\prod_{l'=1}^l \mc P^Z_{i_n,2l'-1}$ can be used to fix the eigenvalues of the $X$-suspension operators centered at the $2l$-th layer. The $G^X$ operators will not be affected by the above steps. These multilayer operators may anticommute with some $\mc O^X_{i,1}$ operators, or equivalently flip the signs of some $\mc O^X$ operators in the $H^{\rm I}_{\rm GI}$ part of the effective boundary Hamiltonian. 
\end{itemize}

The sign changes of the $\mc O^Z$ terms in the effective boundary Hamiltonian due to the sign flip of $\Omega^Z_{k,2l_0}$ can be analyzed in essentially the same way as we did in the previous subsection right above the claim of anomaly condition, with the same conclusion. Thus we will not repeat the discussion again, but just note that in the general situation, the definition of $\mc G'_{Z,0}$ should be modified to $\mc G'_{Z,0}=$\{$g\in \mc G'_Z$|$g=\prod_{\alpha\in A}\Delta^Z_\alpha$ for some set $A$ such that $\prod_{\alpha\in A}\mc O^Z_\alpha=1\mod G^Z$\}.

We have seen that altering the eigenvalue of $\Omega^X_{k>\mu,2l}$ may change the signs of some $\mc O^X$ terms in $H^{\rm I}_{\rm GI}$. Denote the combined operator we described above for flipping the sign of $\Omega^X_{k>\mu,2l}$ by $\Theta^Z_{k,2l}\prod_{l'=1}^l \mc B^Z_{2l'-1}$ where $\mc B^Z$ is a product of several $\mc P^Z$ operators. 
Applying the unitary operator $\mc B^Z\otimes 1$ to $\mc L_{\rm fic}$ is able to cancel all the sign changes of the $\mc O^X$ terms in $H^{\rm I}_{\rm GI}$, but this may flip the signs of $U^X_{p}\otimes U^X_{p}$ for some $p$, since $\mc B^Z$ may anticommute with some $U^X_{p}$ operators \footnote{$\mc B^Z\otimes 1$ may not be an automorphism of $\mc L_{\rm fic}$, so strictly speaking we shall first embed $\mc L_{\rm fic}$ to $\mc L_G\otimes \mc L_G$, and then apply $\mc B^Z\otimes 1$. }. This raises an important question: Is there still some symmetry charge projection, such as $U^X_p\otimes U^X_p$ for some $p$, that applies to all sectors of the boundary theory, and thus can be regarded as a source of anomaly? The answer depends on a certain property of the $U^X$ operators. Let $\mc G_X=\ex{U^X_1,\cdots, U^X_{n_X}}$ be the group generated by the $U^X$'s in the original theory. We define a subgroup $\mc G_{X,0}=$\{$g\in\mc G_X$|$g=\prod_{i\in I}\mc O^X_i$ for some set $I$ such that $\prod_{i\in I}\Delta^X_i=1\mod\Gamma^X$\}. We can always redefine the $U^X$ operators such that for some integer $n_0$, $\mc G_{X,0}=\ex{U^X_1,\cdots,U^X_{n_0}}$. We claim and will elaborate below that: \emph{$U^X_p\otimes U^X_p=1$ holds in the whole boundary theory for $1\leq p\leq n_0$ but not for $p>n_0$. }

First consider $p\leq n_0$. We note that the operator $\Theta^Z_{k,2l}\prod_{l'=1}^l \mc B^Z_{2l'-1}$ mentioned above commutes with $U^X_{p,1}U^X_{p,L}$ that is equal to a product of some $X$-suspension and $\Gamma^X$ operators. It follows that $\mc B^Z$ commutes with $U^X_{p}$, and the symmetry charge projection $U^X_{p}\otimes U^X_{p}=1$ for this $p$ is not affected by altering the eigenvalue of $\Omega^X_{k>\mu,2l}$. 

Next, consider $p>n_0$. The key is again to understand the dual of $U^X_p$. Each $U^X$ operator can be written as a product of $\mc O^X$ operators, though there may be multiple ways of doing it, thus we can always define some dual for each $U^X_p$. From the definition of $\mc G_{X,0}$, we see that the dual operator for each $U^X_{p>n_0}$, up to $\Gamma^X$'s, is a nontrivial product of some $\Omega^X$ operators, which we will denote as $\tilde\Omega^X_p$. We claim that any nontrivial product of the $\tilde\Omega^X_p$ ($p>n_0$) operators does not belong to $\mc G'_{X,0}$ which is defined both in Appendix \ref{app:BulkModelGSD} and at the beginning of this subsection. Suppose this statement is not true, then there is some nonempty subset $P\subset\{n_0+1,n_0+2,\cdots,n_X\}$ such that the following equations hold. 
\begin{align}
	&\prod_{p\in P}U^X_p=\prod_{i\in I}\mc O^X_i\mapsto\prod_{i\in I}\Delta^X_i=\prod_{p\in P}\tilde \Omega^X_p \mod \Gamma^X,\\
	&\prod_{p\in P}\tilde \Omega^X_p=\prod_{i\in J}\Delta_i^X\mod \Gamma^X\quad\text{s.t.}\quad\prod_{i\in J}\mc O^X_i=1. 
\end{align}
It follows that we can alternatively write $\prod_{p\in P}U^X_p=\prod_{i\in I\cup J}\mc O^X_i$ which is dual to the product of some $\Gamma^X$ operators, contradicting the fact that $\prod_{p\in P}U^X_p$ does not belong to $\mc G_{X,0}$. Hence, our statement above about the $\tilde\Omega^X_p$ operators is indeed true. Consequently, up to a redefinition of the $\Omega^X$ operators, we can simply identify $\tilde \Omega^X_{n_0+1},\tilde \Omega^X_{n_0+2},\cdots$ as $\Omega^X_{\mu+1},\Omega^X_{\mu+2},\cdots$, respectively. Now observe that $U^X_{n_0+j,1}U^X_{n_0+j,L}$ is a product of the $X$-suspension operators, the $\Gamma^X$ operators, and $\Omega^X_{\mu+j,2l}$ for all $l$. It follows that the effect of flipping the sign of $\Omega^X_{\mu+j,2l}$ for any $l$ is to flip the sign of $U^X_{n_0+j}\otimes U^X_{n_0+j}$ while leaving the effective boundary Hamiltonian invariant. As a result, if we conbine all sectors of the boundary theory together, $U^X_p\otimes U^X_p$ for $p>n_0$ no longer have definite values. 

With all these discussions, we propose the following necessary and sufficient condition for anomaly. 
\begin{claim*}\label{thm:AppAnomalyConj}
	The boundary theory is anomalous if and only if either of the following two conditions is satisfied: 
	\begin{enumerate}
		\item For some nonempty subset $K\subset \{1,2,\cdots,n_X\}$, $\prod_{k\in K}U^X_k$ can be written as a product of $\mc O^X$ operators such that the product is dual to the identity modulo the $\Gamma^X$ operators. 
		\item For some nonempty subset $K\subset \{1,2,\cdots,m_Z\}$, $\prod_{k\in K}\Omega^Z_k$ can be written as a product of $\Delta^Z$ operators such that the product is dual to the identity modulo the $G^Z$ operators. 
	\end{enumerate}
\end{claim*}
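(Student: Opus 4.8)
The plan is to determine the full sector structure of $\mc L_{\rm bdry}$ and to track, sector by sector, how the canonical effective boundary Hamiltonian of the form (\ref{eq:EffBdryH}) is deformed; the biconditional then reduces to identifying precisely which global projections remain in force across all sectors. Concretely, starting from the complete and independent stabilizer set for the full Hilbert space built in Appendix~\ref{app:StabilizersFullHilbertSpace}, I would split it into Subset~A — shown there to be equivalent to the stabilizers of $H_{\rm bulk}$ in (\ref{eq:Hbulk}) — and Subset~B, so that $\mc L_{\rm bdry}$ is characterized by Subset~B alone. Its nonlocal members are $\Omega^Z_{k,2l_0}$, the internal-layer $U^Z_{k>\nu,2l-1}$ ($3\le 2l-1\le L-2$), and the $\Omega^X_{k>\mu,2l}$, together with boundary-layer operators; labelling sectors by their eigenvalues, one defines $\mc L_{{\rm bdry},0}$ to be the sector where all of these equal $+1$. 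As in the $\Delta^Z=Z$ case (Section~\ref{sec:BBCLessSimple}), one then exhibits an isomorphism $\mc L_{{\rm bdry},0}\cong\mc L_{\rm fic}$, the $\ZZ_2^{n_X}$-symmetric subspace of $\mc L_G\otimes\mc L_G$, with the operator map (\ref{eq:OpMap2FicSpaceMainTextGeneral}); all phase ambiguities are fixed by choosing orthonormal bases via Lemma~\ref{thm:StandardBasis} (with $X\leftrightarrow Z$ interchanged), so that the canonical $H_{\rm bdry}$-block in $\mc L_{{\rm bdry},0}$ is exactly (\ref{eq:EffBdryH}), Appendix~\ref{app:PossibleBdryTerms} guaranteeing this choice is general; the $\ZZ_2^{n_X}$ projection $U^X_k\otimes U^X_k=1$ arises because $U^X_{k,1}U^X_{k,L}$ is a product of $X$-suspension, $\Gamma^X$, and $\Omega^X$ operators.

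\textbf{Sufficiency ($1$ or $2$ $\Rightarrow$ anomalous).} For each remaining sector I would write down an explicit unitary to $\mc L_{{\rm bdry},0}$ — $\prod_l\Theta^X_{k,2l}$ to flip $\Omega^Z_{k,2l_0}$; combinations of the $V^X$ and multilayer $E^X$ operators to flip an internal $U^Z_{k>\nu,2l-1}$ without touching $H_{\rm bdry}$ at all; and $\Theta^Z_{k,2l}\prod_{l'}\mc B^Z_{2l'-1}$ to flip $\Omega^X_{k>\mu,2l}$ — and then compute its action on the couplings of (\ref{eq:EffBdryH}). Writing $\Omega^Z_k=\prod_{\alpha\in A}\Delta^Z_\alpha$ and mapping $\Omega^Z_{k,2l_0}=\prod_{\alpha\in A}(\mc O^Z_{\alpha,1}\Delta^Z_{\alpha,2})$ into $\mc L_{\rm fic}$ under $\mc O^Z_{\alpha,1}\Delta^Z_{\alpha,2}\mapsto\eta_\alpha\,\mc O^Z_\alpha\otimes 1$ gives $\Omega^Z_{k,2l_0}=\prod_{\alpha\in A}\eta_\alpha$; hence if $\Omega^Z_{\mathcal M}\equiv\prod_{k\in\mathcal M}\Omega^Z_k$ has a $\Delta^Z$-expansion dual to the identity modulo $G^Z$, then moving to a sector with $\Omega^Z_{\mathcal M,2l_0}=-1$ forces some $\eta_\alpha$ (and, by the same argument on layer $L$, some $\eta'_\alpha$) to flip in a way no unitary of $\mc L_{\rm fic}$ can undo: a generalized boundary condition of the form (\ref{eq:gbc}) is locked across sectors, so the theory is anomalous. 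Symmetrically, if $U^X_{\mathcal N}\equiv\prod_{k\in\mathcal N}U^X_k$ has an $\mc O^X$-expansion dual to the identity modulo $\Gamma^X$, then $U^X_{\mathcal N,1}U^X_{\mathcal N,L}$ is a product of $X$-suspension and $\Gamma^X$ operators only, hence independent of the $\Omega^X_{k>\mu}$-eigenvalues, so $U^X_{\mathcal N}\otimes U^X_{\mathcal N}=1$ in \emph{every} sector — a genuine symmetry charge projection, again an anomaly.

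\textbf{Necessity (not $1$ and not $2$ $\Rightarrow$ nonanomalous).} If neither condition holds, then in the notation of the appendix the subgroups $\mc G'_{Z,0}$ and $\mc G_{X,0}$ are both trivial, and I would show every sector is isomorphic to $\mc L_{{\rm bdry},0}$ by isomorphisms that \emph{preserve} the effective Hamiltonian. For $\Omega^Z_{k,2l_0}$ one uses operators $Q^X_{\beta_j}\otimes 1$ on $\mc L_{\rm fic}$ to correct all induced $\eta$-sign changes, the crucial point being that correcting the independent generators $\eta_{\beta_j}$ corrects \emph{all} $\eta_\alpha$: any relation $\prod_{\alpha\in A}\mc O^Z_\alpha=1\bmod G^Z$ lifts under Kramers–Wannier (Theorem~\ref{thm:KWDuality}, Corollary~\ref{thm:GIModelCompleteStabilizers}) to an element of $\mc G'_{Z,0}$ modulo $\Gamma^Z$, which is trivial by hypothesis. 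For $\Omega^X_{k>\mu,2l}$ one cancels the $\mc O^X$-sign changes in $H^{\rm I}_{\rm GI}$ by $\mc B^Z\otimes 1$; the key sub-lemma is that, since $\mc G_{X,0}$ is trivial, any nontrivial product of the duals $\tilde\Omega^X_p$ of the $U^X_p$ lies outside $\mc G'_{X,0}$ — otherwise $\prod_{p\in P}U^X_p$ would itself be dual to the identity modulo $\Gamma^X$, a contradiction — so, after identifying the $\tilde\Omega^X_p$ with suitable $\Omega^X$ generators, flipping these leaves every $U^X_p\otimes U^X_p$ untouched, and the internal $U^Z$ flips were already harmless. Hence $\mc L_{\rm bdry}$ is a direct sum of identical copies of $\mc L_G\otimes\mc L_G$ carrying the Hamiltonian (\ref{eq:EffBdryH}) (or local terms generated by those) — a genuine local $d$-dimensional lattice model, i.e.\ nonanomalous. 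Combining the two directions gives Claim~\ref{clm:anomaly}.

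\textbf{Main obstacle.} The delicate part is the sign bookkeeping — showing rigorously that the induced sign changes of (\ref{eq:EffBdryH}) are, or are not, gauge-removable. This needs the precise characterizations of $\mc G'_{Z,0}$ and $\mc G_{X,0}$, the duality-compatibility of $\mathbb{F}_2$-relations among the $\mc O^Z$ and $\mc O^X$ operators under the Kramers–Wannier map, and a verification that the three families of sign-flip operators ($\Omega^Z$, internal $U^Z$, $\Omega^X$) can be chosen so that they act independently and consistently — in particular that fixing the independent generators of the $\eta$'s (respectively of the $\mc O^X$-signs) automatically fixes all of them. Getting these two mutually dual bookkeeping arguments to interlock without a residual inconsistency is the crux; once that is in place, the claim follows by assembling the sufficiency and necessity steps.
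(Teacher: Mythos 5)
Your proposal is correct and follows essentially the same route as the paper's own proof in Appendices \ref{app:StabilizersFullHilbertSpace}--\ref{app:BBCGeneralCase}: the same Subset A/B stabilizer split, the same sector labels ($\Omega^Z_{k,2l_0}$, internal-layer $U^Z_{k>\nu}$, $\Omega^X_{k>\mu}$), the same isomorphism to $\mc L_{\rm fic}$, and the same characterization of removable versus non-removable sign changes via the subgroups $\mc G'_{Z,0}$ and $\mc G_{X,0}$, including the contradiction sub-lemma for the duals $\tilde\Omega^X_p$. The sign-bookkeeping you flag as the crux is handled in the paper exactly as you outline, so no further changes are needed.
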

\noindent The first condition guarantees a symmetry charge projection acting on the two copies of the GI model in the boundary theory; each copy is allowed to have states with both values of the symmetry charge, but the total charge of the two copies is fixed. When the second condition holds, $\prod_{k\in K}\Omega^Z_k$ is dual to a generalized boundary condition, and a boundary condition projection is applied to the two copies of the GI model in the boundary theory; each copy is allowed to take both values of the boundary condition, but the boundary condition values of the two copies are locked together. When neither condition is satisfied, the boundary theory is a direct sum of \emph{identical} sectors. The Hilbert space of each sector is isomorphic to $\mc L_G\otimes \mc L_G$ with the operator mapping rule in Eq.\,\ref{eq:OpMap2FicSpaceAppGeneral}. The effective boundary Hamiltonian of each sector takes the form of Eq.\,\ref{eq:EffectiveBdryHAppendix}, or more generally consists of local terms generated by those in Eq.\,\ref{eq:EffectiveBdryHAppendix}. 
%the boundary theory is the direct sum of several identical sectors, each being equivalent to two copies of the GI model with no nonlocal symmetry charge projection nor any varying boundary condition value. 
Note that each (new) sector here may be the sum of several (old) sectors discussed previously with different values of $U^X_k\otimes U^X_k$.

\subsection{Violating the anomaly conditions}\label{app:ViolatingAnomalyConds}
In what cases is the necessary and sufficient condition of anomaly in Claim \ref{clm:anomaly} violated? We find the following result. 
\begin{theorem}
	If the anomaly condition is violated, the bulk theory is either trivial, or fracton ordered. Equivalently, if the bulk theory has a topological order (not including fracton order), then the anomaly condition must be statisfied. 
\end{theorem}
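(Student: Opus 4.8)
The plan is to establish the ``equivalently'' form of the statement: if the bulk model carries a conventional topological order---so in particular it is neither trivial nor fracton ordered---then the anomaly condition of Claim~\ref{clm:anomaly} is satisfied. The two ingredients I would combine are the GSD formula of Theorem~\ref{thm:BulkModelGSD} and a purely algebraic duality lemma identifying $n_0$ with $n_X$ and $m_0$ with $m_Z$, where $n_0,m_0$ are the integers introduced at the end of Appendix~\ref{app:BBCGeneralCase} that govern, respectively, conditions (1) and (2) of Claim~\ref{clm:anomaly}.

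First I would record the inclusions $\mc G_{Z,0}\subseteq\mc G_{Z,1}\subseteq\mc G_Z$ and $\mc G'_{X,0}\subseteq\mc G'_{X,1}\subseteq\mc G'_X$, which make $n_Z-\nu$, $m_X-\mu$, $\bar\nu-\nu$, $\bar\mu-\mu$ non-negative integers. Put the bulk model on a $(d+1)$-torus obtained by stacking $L$ copies of the intra-layer $d$-torus and identifying the ends; a conventional (non-fracton) topological order must then have a ground state degeneracy that does not grow with $L$. By Theorem~\ref{thm:BulkModelGSD}, $\log_2 D$ contains the term $[(n_Z-\nu)+(m_X-\mu)](L/2)$, so $L$-independence forces $n_Z=\nu$ and $m_X=\mu$. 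This is the only place where the hypothesis ``not fracton'' is used.

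The heart of the argument, and the step I expect to be the main obstacle, is the lemma: $\mc G'_X=\mc G'_{X,0}$ (that is, $m_X=\mu$) implies $\mc G_X=\mc G_{X,0}$ (that is, $n_0=n_X$). I would prove it directly. Fix a nonempty $\mathcal N\subset\{1,\dots,n_X\}$ and write $U^X_{\mathcal N}=\prod_{i\in I}\mc O^X_i$ (possible because every $X$-type symmetry of the GI model is a product of $\mc O^X$ operators). The dual operator $D:=\prod_{i\in I}\Delta^X_i$ commutes with all $\Delta^Z_\alpha$ by the duality, hence is an $X$-type symmetry of the dual model, $D=\prod_\rho(\Gamma^X_\rho)^{b_\rho}\prod_k(\Omega^X_k)^{a_k}$. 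Because $\mc G'_X=\mc G'_{X,0}$, the factor $\prod_k(\Omega^X_k)^{a_k}$ lies in $\mc G'_{X,0}$, so it equals $\prod_{i\in J}\Delta^X_i\prod_\rho(\Gamma^X_\rho)^{c_\rho}$ for some index set $J$ with $\prod_{i\in J}\mc O^X_i=1$. Substituting and cancelling gives $\prod_{i\in I\triangle J}\Delta^X_i=1$ modulo $\Gamma^X$, while at the same time $U^X_{\mathcal N}=\prod_{i\in I}\mc O^X_i\,\prod_{i\in J}\mc O^X_i=\prod_{i\in I\triangle J}\mc O^X_i$; thus $U^X_{\mathcal N}$ has an $\mc O^X$-expansion whose dual is the identity modulo $\Gamma^X$, i.e.\ $U^X_{\mathcal N}\in\mc G_{X,0}$. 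The delicate points are that the $\mc O^X$-expansion of $U^X_{\mathcal N}$ is not unique---its ambiguity is exactly captured by $\mc G'_{X,0}$, which is why one should first check that $\mc G'_{X,0}$ is closed under products (immediate from symmetric differences)---and that the qualifier ``modulo $\Gamma^X$'' must be carried consistently rather than upgraded to an equality on the nose.

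Finally I would invoke the symmetry of the whole construction under simultaneously exchanging the original and dual models and swapping $X\leftrightarrow Z$; under this involution $(n_X,n_Z,m_X,m_Z)\mapsto(m_Z,m_X,n_Z,n_X)$, $\nu\leftrightarrow\mu$, $\bar\nu\leftrightarrow\bar\mu$, and $n_0\leftrightarrow m_0$, and the GSD formula of Theorem~\ref{thm:BulkModelGSD} is manifestly invariant. Applying the lemma on the dual side then yields $n_Z=\nu\Rightarrow m_0=m_Z$. Combining, a conventional topological bulk has $n_0=n_X$ and $m_0=m_Z$; since the standing assumption on the dual theory gives $n_X+m_Z\geq1$, at least one of $n_0\geq1$, $m_0\geq1$ holds, which is exactly condition (1) or (2) of Claim~\ref{clm:anomaly}. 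I would close by remarking that, under that same standing assumption, $\log_2 D\geq n_X+m_Z\geq1$, so the ``trivial'' alternative in the theorem never actually occurs in our framework and is listed only for robustness against relaxing the assumptions.
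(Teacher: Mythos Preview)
Your proof is correct and is essentially the contrapositive of the paper's argument, with the same algebraic core. The paper assumes the anomaly condition is violated and $n_X\geq 1$, picks some $U^X_a\notin\mc G_{X,0}$, and shows that its dual $\Omega^X_B$ lies in $\mc G'_{X,1}\setminus\mc G'_{X,0}$ (hence $\bar\mu>\mu$, hence the $L$-coefficient in Theorem~\ref{thm:BulkModelGSD} is positive and the bulk is fractonic); your lemma ``$\mc G'_X=\mc G'_{X,0}\Rightarrow\mc G_X=\mc G_{X,0}$'' is exactly the contrapositive of that step, and the key manipulation---replacing the expansion $I$ by $I\triangle J$ using an element of $\mc G'_{X,0}$---is identical in both.
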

\begin{proof}
    Claim \ref{clm:anomaly} consists of two sufficient conditions of anomaly. They combine to give a necessary condition; when both are violated, the boundary theory is claimed to be nonanomalous. 
    
	The simplest situation where the two anomaly conditions are both violated is $n_X+m_Z=0$. It then follows from Theorem \ref{thm:BulkModelGSD} that with periodic boundary condition, the bulk GSD is either trivial or system size dependent, meaning that the bulk model is either trivial or fractonic.
	
	Next, suppose $n_X\geq1$. The discussion for the case $m_Z\geq 1$ is similar and will not be repeated. Take some arbitrary $U^X$ operator, say $U^X_a$ with $a\in\{ 1,2,\cdots,n_X \}$. We can always write $U^X_a=\prod_{i\in I}\mc O^X_i$ for some index set $I$. We then must have
	\begin{align}
		\prod_{i\in I}\Delta^X_i = \prod_{b\in B}\Omega^X_b \mod \Gamma^X
	\end{align}
	for some nonempty set $B$, otherwise the first anomaly condition would be satisfied. Denote by $\Omega^X_B:=\prod_{b\in B}\Omega^X_b$. We have $\Omega^X_B\in \mc G'_{X,1}$ by definition (see Appendix \ref{app:BulkModelGSD}). We now claim that $\Omega^X_B\notin \mc G'_{X,0}$. If this claim were not right, we would have $\Omega^X_B=\prod_{i\in J}\Delta^X_i\mod\Gamma^X$
	for some set $J$ (not necessarily the same as $I$) such that $\prod_{i\in J}\mc O^X_i=1$. We can then alternatively write $U^X_a=\prod_{i\in I\cup J}\mc O^X_i$ with the property that $\prod_{i\in I\cup J}\Delta^X_i=1\mod \Gamma^X$, contradicting our assumption that the first anomaly condition is violated. We have thus found that $\bar\mu-\mu\geq1$ (see Appendix \ref{app:BulkModelGSD}). According to Theorem \ref{thm:BulkModelGSD}, the bulk GSD with periodic boundary condition grows as the system size increases, hence the bulk model has a fracton order. 
\end{proof}
It is possible to design a concrete fractonic example that violates both sufficient conditions of anomaly. Consider the following GI model whose qubits live on the links of a 2D square lattice with periodic boundary condition,
\begin{align}
	H=-J\dia{SingleZGITerm}{-2}{0}-h\dia{XCubeBdry_TFTerm}{-18}{0}, 
\end{align}
and consider its standard dual theory which takes exactly the same form. The model has $n_X=m_X=0$, $n_Z=m_Z=2$. We can take the $G^Z$'s ($U^Z$'s) to be the same as the $\Gamma^Z$'s ($\Omega^Z$'s). The first anomaly condition about $U^X$ operators is trivially violated. The second anomaly condition is also violated because in this example, the generalized Kramers-Wannier duality \emph{is trivial}. Let us not go into the details, but one can show that with the setup in Section \ref{sec:Boundary}, the boundary theory of this model is indeed nonanomalous: The boundary Hilbert space is $2^{L-1}$ copies of $\mc L_G\otimes \mc L_G$, and the effective boundary Hamiltonian takes the same form in all sectors. 

\subsection{On possible boundary terms}\label{app:PossibleBdryTerms}
As we defined in the main text, an allowed boundary term is a local operator that commutes with the stabilizers in the bulk Hamiltonian in Section \ref{sec:Boundary} (open boundary condition along the out-of-layer direction). The boundary terms that we have considered so far are $\mc O^X_{i,1}$, $\mc O^X_{i,L}$, $\mc O^Z_{\alpha,1}\Delta^Z_{\alpha,2}$, and $\Delta^Z_{\alpha,L-1}\mc O^Z_{\alpha,L}$. This set turns out to be complete in the following sense. 
\begin{lemma}
	Given any local operator $\mc A$ that is a product of Pauli operators and commutes with all the stabilizers appearing in the bulk Hamiltonian in Section \ref{sec:Boundary}, $\mc A$ can be locally generated by those bulk stabilizers together with $\mc O^X_{i,1}$, $\mc O^X_{i,L}$, $\mc O^Z_{\alpha,1}\Delta^Z_{\alpha,2}$, and $\Delta^Z_{\alpha,L-1}\mc O^Z_{\alpha,L}$.
\end{lemma}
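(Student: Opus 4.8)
The plan is to mimic the structure of the proof of Theorem~\ref{thm:RobustDegeneracy} and of the ``complete set of stabilizers'' lemma in Appendix~\ref{app:StabilizersFullHilbertSpace}, but now accounting for the two missing boundary layers. Given $\mc A=\mc A_Z\mc A_X$ (up to a phase, with $\mc A_Z$ a product of Pauli-$Z$'s and $\mc A_X$ a product of Pauli-$X$'s), each factor must separately commute with all bulk stabilizers since the latter are each purely $X$-type or purely $Z$-type. I would treat $\mc A_Z$ and $\mc A_X$ in turn, and for each run a ``layer-peeling'' reduction: identify the topmost and bottommost occupied layers, argue that the top/bottom layer of the operator is forced (by commutation with the $X$- or $Z$-suspension terms and the $G$/$\Gamma$ gauge terms) to be a local product of $\mc O^Z$, $\Delta^Z$, $G^Z$, $\Gamma^Z$ (resp.\ $\mc O^X$, $\Delta^X$, $G^X$, $\Gamma^X$) operators, and then multiply by bulk stabilizers and/or the designated boundary terms to push the support inward by at least one layer.

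First I would handle $\mc A_Z$. If $l_{\max}\geq 3$ (an interior layer), the reasoning is exactly as in Theorem~\ref{thm:RobustDegeneracy}: the top layer is a local product of $G^Z$ or $\Delta^Z$ operators according to parity, and one reduces the height with $Z$-suspension operators and gauge terms until the support is confined to layers $1,2,L-1,L$. The new ingredient is the final clean-up near the two boundary layers. Suppose the support has been reduced to layers $1$ and $2$ (the symmetric case near $L-1,L$ is identical by reflection). Because $\mc A_Z$ must commute with the $X$-suspension terms $\mc O^X_{i,1}\Delta^X_{i,2}\mc O^X_{i,3}$ centered on layer $2$ and with the $\Gamma^X_{\rho,2}$ operators on layer $2$, and because $\{\Delta^Z_\alpha\}\cup\{\Gamma^X_\rho\}$ is a CSLO for the dual theory, the layer-$2$ part of $\mc A_Z$ is a local product of $\Delta^Z_{\alpha,2}$ operators (note there is no $X$-suspension term centered on layer $1$, so layer $1$ is \emph{not} constrained the same way). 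Multiplying by the boundary terms $\mc O^Z_{\alpha,1}\Delta^Z_{\alpha,2}$ then removes the entire layer-$2$ content, leaving an operator supported on layer $1$ alone. That residual layer-$1$ operator must commute with $\mc O^X_{i,1}$ and with $G^X_{s,1}$ and $U^X_{k,1}$ (the latter two enter via $X$-suspension terms relating layer $1$ to interior layers), hence is an $X$-$Z$--compatible $Z$-type symmetry generator of the original theory, i.e.\ a local product of $G^Z_{r,1}$'s — which are themselves bulk stabilizers. So $\mc A_Z$ is locally generated. The argument for $\mc A_X$ is the mirror image: peel from the top/bottom using $X$-suspension and gauge terms, reduce to the near-boundary layers, and observe that the boundary $\mc O^X_{i,1}$ and $\mc O^X_{i,L}$ operators (which are themselves in the allowed list) absorb the layer-$1$ and layer-$L$ $X$-content, after which the interior-layer remnants are cleaned up exactly as in the full-Hilbert-space lemma.

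The main obstacle, and the place requiring the most care, is the asymmetry introduced by the boundary: at the odd boundary layers $1$ and $L$ there is no suspension term ``above'' layer $1$ or ``below'' layer $L$, so the usual argument that a single-layer operator must be a dual-theory symmetry generator fails there. One must instead exploit that (i) the layer-$2$ (resp.\ layer-$(L{-}1)$) content is forced and is killed by the designated boundary terms $\mc O^Z_{\alpha,1}\Delta^Z_{\alpha,2}$ (resp.\ $\Delta^Z_{\alpha,L-1}\mc O^Z_{\alpha,L}$), and (ii) what remains on layer $1$ alone is then only constrained by commutation with $\mc O^X_{i,1}$, $G^X_{s,1}$, $\Gamma^X$-images via suspensions, etc., forcing it to be a local $G^Z$-product which is a bulk stabilizer. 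Keeping track of which operators are ``allowed boundary terms'' versus ``bulk stabilizers'' versus ``things to be generated,'' and ensuring the locality claim (the generating neighborhood exceeds $\mc A$'s support by only an $O(1)$ amount) survives the finitely many peeling steps, is the bookkeeping one must be careful about; the locality follows as in Theorem~\ref{thm:RobustDegeneracy} because each step is local and the number of layers near the boundary that require special treatment is $O(1)$.
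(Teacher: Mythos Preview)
Your proposal is correct and follows essentially the same layer-peeling approach as the paper: split $\mc A=\mc A_Z\mc A_X$, reduce toward the boundary using the argument of Theorem~\ref{thm:RobustDegeneracy}, then absorb the near-boundary content with $\mc O^Z_{\alpha,1}\Delta^Z_{\alpha,2}$ (for $\mc A_Z$) or $\mc O^X_{i,1}$ (for $\mc A_X$). The paper's version is slightly more streamlined---it uses locality of $\mc A$ to note that support lies near only \emph{one} boundary (so the case ``layers $1,2,L{-}1,L$'' never arises), and for the layer-$1$ residual of $\mc A_Z$ it simply observes that commutation with the $X$-suspension terms spanning layers $1,2,3$ reduces (since support is only on layer~$1$) to commutation with $\mc O^X_{i,1}$, whence the CSLO assumption gives a local $G^Z$-product, without any detour through $G^X_{s,1}$ or $U^X_{k,1}$.
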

\begin{proof}
	The proof is not much different from that for Theorem \ref{thm:RobustDegeneracy}. We again write $\mc A=\mc A_Z\mc A_X$. Both $\mc A_Z$ and $\mc A_X$ must commute with all the stabilizers in the bulk Hamiltonian, and we will prove that both of them satisfy the claim of the lemma. 
	
	Let us start from $\mc A_Z$. Denote by $l_{\rm max}$ and $l_{\rm min}$ the maximal and minimal layer indices of $\mc A_Z$'s support, respectively. Since $\mc A_Z$ is supposed to be local, i.e. small compared to the system size, either $l_{\rm min}\gg 1$ or $l_{\rm max}\ll L$. These two scenarios are nearly identical, so let us just assume $l_{\rm max}\ll L$. We can then use the reduction procedure described in the proof of Theorem \ref{thm:RobustDegeneracy} to reduce $l_{\rm max}$ until $l_{\rm max}\leq 2$, if $\mc A_Z$ is not yet fully reduced to the identity. 
	Now suppose $l_{\rm max}=2$, in order to commute with the $\Gamma^X$ operators on the 2nd layer, the top layer of $\mc A_Z$ must be a local product of some $\Delta^Z$ operators. Thus, we can further reduce $\mc A_Z$ using the $\mc O^Z_{\alpha,1}\Delta^Z_{\alpha,2}$ operators so that it no longer has any support on the 2nd layer. If $l_{\rm max}=1$, in order to commute with the $X$-suspension operators spanning the 1st, 2nd and 3rd layers, this single-layer $\mc A_Z$ must be a local product of some $G^Z$ operators. This completes the proof for $\mc A_Z$. 
	
	Next, we consider $\mc A_X$. We similarly define $l_{\rm max}$ and $l_{\rm min}$, and assume $l_{\rm max}\ll L$ without loss of generality. Using the reduction procedure in the proof of Theorem \ref{thm:RobustDegeneracy}, we can reduce $l_{\rm max}$ all the way to $1$, if $\mc A_X$ is not yet fully reduced to the identity operator. If $l_{\rm max}=1$, in order to commute with the $G^Z$ operators, this single-layer $\mc A_X$ must be a local product of the $\mc O^X_{i,1}$ operators. We have thus completed the proof for $\mc A_X$. 
\end{proof}
Some minor technical comments: (1) The above result still holds without assuming $\mc A$ to be a product of Pauli operators. We can expand $\mc A$ as a superposition of the linearly independent tensor product operators of $I,X,Y,Z$. Since Pauli operators either commute or anticommute, $S\mc A S^{-1}=\mc A$ for a bulk stabilizer $S$ implies that any tensor product operator entering the expansion of $\mc A$ with a nonzero coefficient must also commute with $S$. (2) If $\mc A$ is a local operator that preserves $\mc L_{\rm bdry}$ but does not commute with the bulk Hamiltonian, its effect on $\mc L_{\rm bdry}$ is equivalent to some local operator that commutes with the bulk stabilizers, so it is unnecessary to consider such an operators as a boundary term candidate. To see this, notice that this operator $\mc A$ overlaps with at most finite number of bulk stabilizers since it is local. We can then repeatedly ``symmetrize'' the operator by $\mc A\mapsto (\mc A+S\mc AS^{-1})/2$ for each overlapping bulk stabilizer $S$. The resulting new operator has the same action on $\mc L_{\rm bdry}$, commutes with all the bulk stabilizers, and has a linear size exceeding the old one by at most an O(1) amount. 

\section{The polynomial representation and topological orders}\label{app:polynomial_rep}
We derive the properties of the bulk model in (\ref{eq:bulk_Z2Z2}) through the polynomial representation introduced in \cite{haah2013commuting}, see also \cite{haah2016algebraic} for pedagogical purpose. The Hamiltonian local terms are represented by a stabilizer map, which is a matrix with elements in $\mathbb{F}_2[x,x^{-1},y,y^{-1}]$, the Laurent polynomials whose coefficients are in $\mathbb{F}_2$, 
\begin{align}
    S=\begin{pmatrix}
    1 & 1+y & 0 & 0 \\
    1+\bar{x} & 0 & 0 & 0 \\
    1+\bar{y} & 0 & 0 & 0 \\
    0 & 1+\bar{x} & 0 & 0 \\
    0 & 0 & 1+x & 0 \\
    0 & 0 & 1 & 1+y \\
    0 & 0 & 0 & 1+x \\
    0 & 0 & 1+\bar{y} & 0
    \end{pmatrix},
\end{align}
where $\bar x \equiv x^{-1}$ and $\bar y \equiv y^{-1}$.
The model has a robust GSD. This is determined from that $\operatorname{ker} \epsilon_S = \operatorname{im} S$, where $\epsilon_S$ is the excitation map.\cite{haah2016algebraic} The Hamiltonian in the stabilizer formalism with such a property is also called an exact code. 

We compute the GSD on a square lattice with $L_x\times L_y$ sites and periodic boundary conditions. Treating the $\hat{y}$ as the layer-indexed direction, we also assume $L_y$ to be even. Since the number of types of stabilizers in the Hamiltonian is the same as the number of qubits in a unit cell, the GSD can be computed from
\begin{align}
\log_2 D = \dim_{\mathbb{F}_2}\operatorname{coker} S^\dagger.
\end{align}
The block-diagonal structure in $S^\dagger$ allows us to reduce the evaluation further \cite{vijay2015new,vijay2016fracton} to
\begin{align}
\log_2 D=&\dim_{\mathbb{F}_2}\operatorname{coker} (S^Z)^\dagger +\dim_{\mathbb{F}_2}\operatorname{coker} (S^X)^\dagger,\\
=&\dim_{\mathbb{F}_2}\frac{\mathbb{F}_2[x,y]}{I\left((S^Z)^\dagger\right)\oplus \mathbf{b}_L}+ \dim_{\mathbb{F}_2}\frac{\mathbb{F}_2[x,y]}{I\left((S^X)^\dagger\right)\oplus \mathbf{b}_L}.
\end{align}
where $S^Z$ and $S^X$ are sub-matrices of the block matrix $S=[S^Z,0;0,S^X]$, $I(\sigma)$ is the ideal of $\sigma$, 
and $\mathbf{b}_L$ is the ideal generated by the polynomials that declare boundary conditions.

We find that for the periodic boundary conditions represented by $\mathbf{b}_{L}=\langle x^{L_x}-1,y^{L_y}-1\rangle$ with even $L_y$, the associated ideals represented in a Groebner basis \cite{haah2016algebraic} are
\begin{align}
I\left((S^Z)^\dagger\right)\oplus \mathbf{b}_L = I\left((S^X)^\dagger\right)\oplus \mathbf{b}_L =\langle 1+y^2,1+x\rangle.
\end{align}
It follows that
\begin{align}
\log_2 D = 2\dim_{\mathbb{F}_2}\mathbb{F}_2^2 = 4.
\end{align}
The degeneracy is the same as two copies of toric code.

Similarly, we can also obtain the GSD for the $\ZZ_2\times \ZZ_2$ model given in (\ref{eq:bulk_Z2Z2_0}) from the polynomial representation. The stabilizer map for this model is
\begin{align}
S_{\mathbb{Z}_2\times \mathbb{Z}_2}=\begin{pmatrix}
1+x+x^{-1} & 0 \\
y+y^{-1} & 0 \\
0 & y+y^{-1} \\
0 & 1+x+x^{-1} 
\end{pmatrix}.
\end{align}
The model is locally topological ordered, implying the absence of symmetry breaking order. This is determined by that $S_{\mathbb{Z}_2\times \mathbb{Z}_2}$ satisfy the codimension condition
\footnote{We recall that the codimension of an ideal of polynomials $f_i$ is the codimension of the algebraic variety defined by the system of polynomial equations $f_i=0$. For instance, in two-dimensional space, the codimension of $\langle (x-1)(y-1),x-1\rangle$ is $1$, and the codimension of $\langle x-1,y-1\rangle$ is $2$.},
$\operatorname{codim} I(S_C)\geq 2$, where $S_C$ is the stabilizer map of the classical spin model, which becomes the quantum model after gauging, and $I(S_C)$ is the ideal of $S_C$. In the current case, $S_C$ is the map $(1+x+x^{-1},y+y^{-1})$, and $\operatorname{codim} I(S_C) = 2$.

The GSD on a torus can be evaluated by
\begin{align}
\log_2 D =2\operatorname{dim}_{\mathbb{F}_2}\frac{\mathbb{F}_2[x,y]}{\langle 1+x+x^2,1+y^2,x^{3L_x}-1,y^{L_y}-1\rangle}.
\end{align}
Since,
\begin{align}
\frac{\mathbb{F}_2[x,y]}{\langle 1+x+x^2,1+y^2,x^{3L_x}-1,y^{L_y}-1\rangle}\cong \mathbb{F}_2^2,
\end{align}
we obtain that
\begin{align}
\log_2 D=4.
\end{align}

\section{The variant bulk construction}\label{app:variant_bulk}
We explain the variation from the basic bulk construction, which allows us to produce the pure topologically ordered model from the SPT model in (\ref{eq:GI_SPT}).

Towards constructing the bulk model, we note that (\ref{eq:GI_SPT}) does not satisfy the assumptions in Section \ref{sec:GIM}. The price is that the GSD in the bulk model we would obtain from the basic construction is not robust. Part of the degeneracy originates from symmetry breaking orders. 
Nevertheless, let us give a minimal variation of the construction. This is enough to provide us a pure topologically ordered bulk. 

Firstly, we note that the GI model we begin with has the following property. We group the operators $\{O_\alpha^Z\}$ and $\{O_i^X\}$ according to their commutation relations. 
\iffalse
Particularly, we can assign a binary number $p_i$ ($s_\alpha$) to each operator $O_i^X$ $\left(O_\alpha^Z\right)$. We denote the $Z$-type operator we choose that will be dualized to a single $Z$ operator to be $O_{\alpha=0}^Z$, and $s_{\alpha=0}=1$. $p_i\in\{1,-1\}$ is determined from
\begin{align}
    O_0^Z O_i^X = p_i O_i^X O_0^Z.
\end{align}
The property is that there exists a solution of $s_\alpha\in \{1,-1\}$ determined from the commutation relations, 
\begin{align}
    O_\alpha^Z O_i^X = s_\alpha p_i O_i^X O_\alpha^Z.
\end{align}

Next, based on the binaries, we separate the operators into two sets, 
\begin{enumerate}
    \item $\mathcal{A}_1=\{O_\alpha^Z,O_i^X|s_\alpha = 1, p_i=1\}$,
    \item $\mathcal{A}_2=\{O_\alpha^Z,O_i^X|s_\alpha = -1, p_i=-1\}$.
\end{enumerate}
\fi

\begin{enumerate}
    \item $\mathcal{A}_1=\{O_\alpha^Z,O_i^X\}$,
    \item $\mathcal{A}_2=\{O_{\alpha'}^Z,O_{i'}^X\}$.
\end{enumerate}

The two sets have the property that all operators in each set commute with each other; for any $Z$-type ($X$-type) operator in one set, there are $X$-type ($Z$-type) of operators in the other set that anti-commutes with it. By design, $O_0^Z\in \mathcal{A}_1$.

Now we give the modified rule in constructing the bulk model. For local operators in $\mathcal{A}_1$, their corresponding three-layer local operators are $\Delta_{\alpha,l-1}^ZO_{\alpha,l}^Z\Delta_{\alpha,l+1}^Z$ and $\Delta_{i,l-1}^XO_{i,l}^X\Delta_{i,l+1}^X$ in the bulk model and center at odd layers \emph{i.e.},$l\in 2\ZZ+1$.
On the other hand, for local operators in $\mathcal{A}_2$, their corresponding three-layer local opertors $O_{\alpha,l-1}^Z\Delta_{\alpha,l}^ZO_{\alpha,l+1}^Z$ and $O_{i,l-1}^X\Delta_{i,l}^XO_{i,l+1}^X$ in the bulk model center at even layers, \emph{i.e.},$l\in 2\ZZ$. This rule of determining the layer index of local terms supported on three-layers is the only modification in the variant construction. 

Explicitly, the bulk Hamiltonian coming from the variant construction is the following,
\begin{align}
    H^{\text{II}}_{\text{bulk}}=&-\left.\sum_{l}\right(\sum_{\alpha\in \mathcal{A}_1}\Delta_{\alpha,2l}^Z O^Z_{\alpha,2l+1}\Delta^Z_{\alpha,2l+2}\nonumber \\ &+\left.\sum_{i\in \mathcal{A}_1}\Delta_{i,2l}^X O^X_{i,2l+1}\Delta^X_{i,2l+2} \right)\nonumber \\
    &\nonumber \\
    &-\left.\sum_l\right(\sum_{\alpha\in \mathcal{A}_2}O_{\alpha,2l-1}^Z \Delta^Z_{\alpha,2l}O^Z_{\alpha,2l+1} \nonumber \\
    &+\left.\sum_{i\in \mathcal{A}_2}O_{i,2l-1}^X \Delta^X_{i,2l}O^X_{i,2l+1}\right)\nonumber \\
    &-\sum_{r,l}G^Z_{r,2l+1}-\sum_{s,l}G^X_{s,2l+1}\nonumber \\
    &-\sum_{\rho,l}\Gamma^X_{\rho,2l}-\sum_{\sigma,l}\Gamma^Z_{\sigma,2l}.
\end{align}

In the example of the GI model capturing the $\ZZ_2\times\ZZ_2$ SPT phase, the two sets of local operators are
\begin{align}
    &\mathcal{A}_1=\{Z_{2i}Z_{2i+1}Z_{2i+2},X_{2i-1}X_{2i}X_{2i+1}\},\nonumber \\
    &\mathcal{A}_2=\{X_{2i},Z_{2i+1}\}.
\end{align}

\begin{theorem}
A sufficient condition for the bulk model to have a robust ground state subspace is that the GI model has the following properties:
\begin{itemize}
\item $\mathcal{A}_2$ forms a CSLO.
\item The dual of $\mathcal{A}_1$, denoted as $\mathcal{A}'_1$, forms a CSLO.
\item Neither the GI model nor the dual model has local symmetries.
\end{itemize}
\end{theorem}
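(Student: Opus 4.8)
The plan is to reduce the statement to the same CSLO criterion used in the main text: I would show that the stabilizers appearing in $H^{\rm II}_{\rm bulk}$ form a complete set of local observables, and then invoke Ref.~\onlinecite{TQOStability} exactly as for Theorem~\ref{thm:RobustDegeneracy} to conclude a spectral gap stable against local perturbations together with a ground-state subspace that is either trivial or protected (after the routine check, as in the main text, that the stabilizer group does not contain $-1$, and under the same mild locality/translation-invariance caveats). The first simplification is that, under the hypothesis that neither the GI model nor the dual model has local symmetries, all the $G$ and $\Gamma$ terms drop out, so $H^{\rm II}_{\rm bulk}$ reduces to the four families of three-layer suspension terms: the ``$\mathcal{A}_1$'' terms $\Delta^Z_{\alpha,2l}O^Z_{\alpha,2l+1}\Delta^Z_{\alpha,2l+2}$ and $\Delta^X_{i,2l}O^X_{i,2l+1}\Delta^X_{i,2l+2}$ centered on odd (original) layers, and the ``$\mathcal{A}_2$'' terms $O^Z_{\alpha',2l-1}\Delta^Z_{\alpha',2l}O^Z_{\alpha',2l+1}$ and $O^X_{i',2l-1}\Delta^X_{i',2l}O^X_{i',2l+1}$ centered on even (dual) layers.

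Next I would run the layer-peeling argument of Theorem~\ref{thm:RobustDegeneracy}. Write a local Pauli operator commuting with all stabilizers as $\mathcal{A}=\mathcal{A}_Z\mathcal{A}_X$; since each stabilizer is $X$-type or $Z$-type, $\mathcal{A}_Z$ and $\mathcal{A}_X$ separately commute with all stabilizers, and it suffices to treat $\mathcal{A}_Z$ (the $\mathcal{A}_X$ case is identical after exchanging $X\leftrightarrow Z$). Let $l_{\rm max}$ be its top occupied layer. If $l_{\rm max}$ is odd, commutation with the $\mathcal{A}_2$ $X$-suspension terms centered on layer $l_{\rm max}+1$ (supported entirely on layers $\ge l_{\rm max}$) forces the top layer of $\mathcal{A}_Z$ to commute with every $O^X_{i'}$, $i'\in\mathcal{A}_2$; since $\mathcal{A}_2$ is a CSLO, that layer is a local product of the $O^Z_{\alpha'}$, $\alpha'\in\mathcal{A}_2$, and can be removed by multiplying with the $\mathcal{A}_2$ $Z$-suspension terms centered on layer $l_{\rm max}-1$. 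If $l_{\rm max}$ is even, the analogous step uses the $\mathcal{A}_1$ $X$-suspension terms centered on layer $l_{\rm max}+1$, the hypothesis that $\mathcal{A}'_1$ is a CSLO (so the top layer is a local product of the $\Delta^Z_\alpha$, $\alpha\in\mathcal{A}_1$), and the $\mathcal{A}_1$ $Z$-suspension terms centered on layer $l_{\rm max}-1$ to peel it away. Each such step lowers $l_{\rm max}$ by one while extending the support downward by at most two layers, so while the height $l_{\rm max}-l_{\rm min}\ge 2$ the height strictly decreases; performing the mirror reductions at the bottom, one reaches an operator supported on at most two adjacent layers.

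The crux is the terminal case, and this is where the no-local-symmetry hypothesis does the essential work (it is precisely the absence of the single-layer $G^Z/\Gamma^Z$ stabilizers that made this step automatic in Theorem~\ref{thm:RobustDegeneracy}). Say $\mathcal{A}_Z$ is supported on layers $\{l,l+1\}$ with $l$ odd; write $\mathcal{A}_Z=B_lB_{l+1}$. By the clean constraints above, $B_l$ lies in the local span of the $O^Z_{\alpha'}$ ($\mathcal{A}_2$) and $B_{l+1}$ in the local span of the $\Delta^Z_\alpha$ ($\mathcal{A}_1$). Now also impose commutation with the $\mathcal{A}_1$ $X$-suspension term $\Delta^X_{i,l-1}O^X_{i,l}\Delta^X_{i,l+1}$ centered on layer $l$: since $B_{l+1}$ already commutes with $\Delta^X_i$, this forces $B_l$ to commute with $O^X_i$ for every $i\in\mathcal{A}_1$ as well. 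Hence $B_l$, a local $Z$-type operator, commutes with all $O^X_i$ (and trivially with all $O^Z_\alpha$), so it is a local $Z$-type symmetry of the GI model; as there are none, $B_l=\pm1$. The residual operator $B_{l+1}$ then lives on a single dual layer, and the same reasoning (using an $\mathcal{A}_2$ $X$-suspension term and the absence of local symmetries in the dual model) gives $B_{l+1}=\pm1$. The single-occupied-layer case and the $l$-even case are identical after relabeling, and $\mathcal{A}_X$ is mirror-symmetric, giving that $\mathcal{A}_Z$ and $\mathcal{A}_X$ are products of bulk stabilizers. Finally, locality of the $O$'s, $\Delta$'s, and of the generalized Kramers--Wannier operator map (Theorem~\ref{thm:KWDuality}) ensures that every multiplication used stays within an $O(1)$ neighborhood of $\mathcal{A}$, so $\mathcal{A}$ is \emph{locally} generated, completing the CSLO claim.

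I expect the main obstacle to be exactly this terminal analysis: unlike in Theorem~\ref{thm:RobustDegeneracy}, one cannot finish by canceling a single-layer remnant against a single-layer gauge stabilizer, and one must instead combine \emph{both} CSLO hypotheses together with the absence of local symmetries on both sides to kill the residual operator, being careful that the auxiliary height-two suspension terms supply the needed extra commutation relations precisely in the low-height configurations. A secondary point demanding care is the bookkeeping of the downward growth during peeling, so that the reduction provably terminates rather than merely translating a height-one operator along the stacking direction; that is what forces the terminal argument to be invoked as soon as the height drops to one, instead of trying to peel down to a single layer.
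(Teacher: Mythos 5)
Your proof is correct and follows essentially the same route as the paper's: reduce the claim to showing that the stabilizers of $H^{\rm II}_{\rm bulk}$ form a CSLO, peel layers from the top and bottom using the two CSLO hypotheses exactly as in Theorem~\ref{thm:RobustDegeneracy}, and then kill the residual operator supported on at most two adjacent layers by using the cross-layer suspension terms to show it commutes with \emph{all} Hamiltonian local terms of the GI model (odd layer) and of the dual model (even layer), which the no-local-symmetry hypothesis forbids. The only cosmetic difference is that you factor $\mathcal{A}=\mathcal{A}_Z\mathcal{A}_X$ at the outset, whereas the paper's appendix works with the full Pauli operator and splits it by layer only in the terminal two-layer step; the logic of that terminal case is identical.
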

We can see that the $\ZZ_2\times \ZZ_2$ model has these properties. Particularly, in this example, the dual of $\mathcal{A}_1$ is $\mathcal{A}_1'=\{Z_{2i+1},X_{2i}\}$ is a CSLO.

Now let us prove the above theorem. To prove that the bulk model has a robust ground state subspace is the same as to prove that the terms in the bulk Hamiltonian form a CSLO. 

Suppose there is a local operator that commutes with all terms in the Hamiltonian of the bulk model, we will show it must be either an identity operator or a product of Hamiltonian local terms.

The operator is local in the sense that its support on any layer is finite, independent of total system size along any direction. We begin with considering that the support of the local operator has a single connected component. 

First, it cannot be a local operator that is non-trivial only within a single layer. Such an operator, if it existed on an odd layer, would commute with all operators in $\mathcal{A}_1$ and $\mathcal{A}_2$, thus it would be a local symmetry operator of $H_{GI}$. However, there is no local symmetry in the GI model, as required in the properties. Similarly, if the operator is within an even layer, it would be a local symmetry of the dual model, and this violates the required properties. In the end, a local symmetry operator within a single layer for the bulk model does not exist.

Second, it cannot be a local operator within two adjacent layers. Suppose there exists such an operator, and let us call it $A$. Without losing generality, let us suppose its support is on the $z$-th layer (an odd layer) and the $z+1$-th layer (an even layer). $A$ can thus be decomposed as $A_\text{o}A_{\text{e}}$, with $A_\text{o}$ ($A_\text{e}$) on the odd (even) layer. $A$ commutes with all terms in the Hamiltonian of the bulk model. In particular, $A$ commutes with all terms $O_{k,z-2}\Delta_{k,z-1}O_{k,z}$, for any operator $O_k$ in $\mathcal{A}_2$. $A$ at most share supports with these operators on the $z$-th layer. This means $A_o$ commutes with $\mathcal{A}_2$. Through similar steps, one can show that $A_e$ commutes with $\mathcal{A}_1'$, the dual of $\mathcal{A}_1$. Next, $A$ also commutes with $O_{k,z}\Delta_{k,z+1}O_{k,z+2}$,for any operator $O_k$ in $\mathcal{A}_2$. Because $A_\text{o}$ on the $z$-th layer commutes with all $O_k \in \mathcal{A}_2$, $A_\text{e}$ on the $z+1$-th layer must commute with all $\Delta_{k}\in \mathcal{A}'_2$. Thus, $A_\text{e}$ commutes with $\mathcal{A}_1'\cup \mathcal{A}_2'$, which is the set of all local operators in the dual of the GI model. Since we have required that the dual model has no local symmetries, $A_\text{e}$ is an identity operator. Through a similar step, we can show that $A_\text{o}$ must commute with $\mathcal{A}_1$. Combined with the derivation several steps above, this means that $A_\text{o}$ commutes with all Hamiltonian local terms in the GI model. And as we require the GI model has no local symmetries, $A_\text{o}$ is at most an identity operator. In conclusion, $A=A_{\text{o}}A_{\text{e}}$ if commutes with all Hamiltonian local terms of the bulk model, must be an identity operator.

As the final case, we consider that the local operator $A$ has a support from the $z_{\text{min}}$-th layer to the $z_{\text{max}}$-th layer. Both $z_{\text{min}}$ and $z_{\text{max}}$ are finite, and $z_{\text{max}}-z_{\text{min}}\geq 3$. In this case, we can run the same steps as in the proof of Theorem \ref{thm:RobustDegeneracy}. That is, by multiplying Hamiltonian local terms, we can reduce the support of $A$ to be within at most two adjacent layers. At this end, we can use the results above to show that $A$ after the reduction, must be an identity operator. Thus, the operator $A$ we begin with, is a product of Hamiltonian local terms. 

Finally, we consider that the operator has multiple disconnected components. In this case, we take the operator as a single component operator, which are identity operators on some layers. Then through the steps above, we can conclude that the operator must be either an identity operator, or a product of Hamiltonian local terms. This completes our proof.

\iffalse
\begin{figure}[h]
	\raggedleft
	\includegraphics[width=0.5\linewidth]{Rina.jpg}
\end{figure}
\fi

%\bibliographystyle{alpha}
\bibliography{Bib_GeneralBulkTheory2cl.bib}

\end{document}